\documentclass{llncs}

\title{Axiomatizing Resource Bounds for Measure \thanks{A preliminary
    version of the paper was presented in the Workshop on Logic in
    Computational Complexity, 2009.}}

\titlerunning{Axiomatizing Resource Bounds for Measure}
\authorrunning{Gu, Lutz, Nandakumar, Royer}   

\parskip=0.1in
\author{
  Xiaoyang Gu
  \inst{1}
  \thanks{This author's research was supported in part by National
    Science Foundation Grants 0344187, 0652569 and 0728806.} 
  \and 
  Jack H. Lutz
  \inst{2}
  \thanks{This author's research was supported in part by
    Spanish Government MEC Project TIN 2005-08832-C03-02.}
  \and
  Satyadev Nandakumar
  \inst{3} 
  \and James S. Royer
  \inst{4}
  \thanks{This author's research was supported in part by National
  Science Foundation Grant CCR-0098198.}}

\institute{
  LinkedIn Corporation,\\
  Mountain View, CA 94043, U. S. A.\\
  \email{xgu@linkedin.com}
  \and
%Department of Computer Science\\
  Iowa State University\\
  Ames, IA 50011, U. S. A.\\
  \email{lutz@cs.iastate.edu}
  \and
%Department of Computer Science and Engineering\\
  I. I. T. Kanpur\\
  Kanpur, UP 280 016 India.\\
  \email{satyadev@cse.iitk.ac.in}
  \and
%Department of Electrical Engineering and Computer Science\\
  Syracuse University\\
  Syracuse,  New York  13244,  U. S. A.\\
  \email{royer@ecs.syr.edu}}

\date{} 

\usepackage{amsfonts}
\usepackage{amsmath}
\usepackage{amssymb}

\usepackage[dvips]{graphics}

\usepackage{amsfonts}
\usepackage{amsmath}
\usepackage{amssymb}
\usepackage{enumerate}

% Decision Problems
%\newcommand{\SAT}{\mathrm{SAT}}
%\newcommand{\MAXSAT}{\mathrm{MAXSAT}}
%\newcommand{\MAXTSAT}{\mathrm{MAX3SAT}}

% Proof Commands
\newcommand{\PROOF}{\begin{proof}}%{{\noindent\bf{\em Proof:\/}~~}}

\newcommand{\QED}{\end{proof}}

% Miscellaneous

\newcommand{\myset}[2]{ \left\{ #1 \;\left|\; #2 \right. \right\} }

\newcommand{\prefix}{\sqsubseteq}
\newcommand{\prprefix}{\underset{\neq}{\sqsubset}}

\newcommand{\liminfn}{\liminf\limits_{n\to\infty}}
\newcommand{\limsupn}{\limsup\limits_{n\to\infty}}

%\newcommand{\restr}{\mbox{\raisebox{.5mm}{$\upharpoonright$}}}

% Blackboard Bold Sets
\newcommand{\N}{\mathbb{N}}
\newcommand{\Z}{\mathbb{Z}}
\newcommand{\Q}{\mathbb{Q}}
\newcommand{\R}{\mathbb{R}}

% Measure and Dimension
\newcommand{\p}{{\mathrm{p}}}

\newcommand{\pspace}{{\mathrm{pspace}}}
\newcommand{\ptwospace}{{\mathrm{p}_{\thinspace\negthinspace_2}\mathrm{space}}}
\newcommand{\pispace}{{\mathrm{p}_{\thinspace\negthinspace_i}\mathrm{space}}}

\newcommand{\str}{{\mathrm{str}}}

\newcommand{\D}{{\mathcal{D}}}

\newcommand{\C}{\mathbf{C}}

% Prediction
\newcommand{\pred}{\mathrm{pred}}

% Complexity Classes

\newcommand{\strings}{\{0,1\}^*}

\newcommand{\DTIME}{\mathrm{DTIME}}

\newcommand{\NP}{{\ensuremath{\mathrm{NP}}}}

\newcommand{\E}{{\rm E}}

\renewcommand{\P}{\ensuremath{{\mathrm P}}}

% Calligraphy

\newcommand{\F}{{\mathcal{F}}}

% Kolmogorov Complexity

% Reductions

%pair reductions

% Reduction Closures

% Degrees

% Hard and Complete sets

% Stuff from CPED

% \upharpoonright \C}
% \upharpoonright \C}
% \upharpoonright \C}
% \upharpoonright \C}

% Names

%Dimension Entropy Rate

%% \input{numbering.tex}
\oddsidemargin  0.0in
\evensidemargin 0.0in
\textwidth      6.5in
\headheight     0.0in
\topmargin      -0.2in
\textheight 	8.7in

%% %%Thu Jul  1 10:17:31 2010
%% \numberwithin{equation}{section}
%% \numberwithin{theorem}{section}

\includeonly{
bib,
}

\renewcommand{\include}{\input}
\newcommand{\func}[3]{ {#1} : {#2} \rightarrow {#3} }

\begin{document}

\maketitle

\newcommand{\BFF}{\mathrm{BFF}}
\newcommand{\BFSF}{\mathrm{BFSF}}
\newcommand{\Tone}{\mathcal{T}_1}
\newcommand{\Ttwo}{\mathcal{T}_2}
\newcommand{\Ap}{\mathbf{Ap}}
\newcommand{\pspaceII}{\pspace_\mathrm{II}}

\begin{abstract}
Resource-bounded measure is a generalization of classical
Lebesgue measure that is useful in computational complexity.
The central parameter of resource-bounded measure is the
{\it resource bound} $\Delta$, which is a class of functions.
When $\Delta$ is unrestricted, i.e., contains all functions
with the specified domains and codomains, resource-bounded
measure coincides with classical Lebesgue measure.  On the
other hand, when $\Delta$ contains functions satisfying some
complexity constraint, resource-bounded measure imposes
internal measure structure on a corresponding complexity
class.

Most applications of resource-bounded measure use only the
``measure-zero/measure-one fragment'' of the theory.  For this
fragment, $\Delta$ can be taken to be a class of type-one functions
(e.g., from strings to rationals).  However, in the full theory of
resource-bounded measurability and measure, the resource bound
$\Delta$ also contains type-two functionals.  To date, both the full
theory and its zero-one fragment have been developed in terms of a
list of example resource bounds chosen for their apparent utility.

This paper replaces this list-of-examples approach with a
careful investigation of the conditions that suffice for
a class $\Delta$ to be a resource bound.  Our main theorem
says that every class $\Delta$ that has the closure properties
of Mehlhorn's basic feasible functionals is a resource bound
for measure.

We also prove that the type-2 versions of the time and space
hierarchies that have been extensively used in resource-bounded
measure have these closure properties. In the course of doing this, we
prove theorems establishing that these time and space resource bounds
are all robust.
\end{abstract}
{\bf Keywords:}
basic feasible functionals,
          computational complexity,
          resource-bounded measure,
          type-two functionals
%% \newpage
\section{Introduction}
Resource-bounded measure is a generalization of classical Lebesgue
measure theory that allows us to quantify the ``sizes'' (measures) of
interesting subsets of various complexity classes. This quantitative
capability has been useful in computational complexity because it has
intersected informatively with reducibilities, completeness,
randomization, circuit-size, and many other central ideas of
complexity theory. Resource-bounded measure has given us a
generalization of the probabilistic method that works inside
complexity classes (leading, for example, to improved lower bounds on
Boolean circuit size \cite{Lutz:AEHNC} and the densities of complete
problems \cite{Lutz:MSDHL}) and new complexity-theoretic hypotheses
(e.g., the hypothesis that $\NP$ is a non-measure $0$ subset of
exponential time) with many plausible consequences, i.e., significant
explanatory power. The somewhat outdated survey papers
\cite{BuhTor94,AmbMay97,AmbMay97,BuhTor98,Lutz:TPRBM,Pavan03} and more
recent papers in the bibliography \cite{RBM-bib} give a more detailed
account of the scope of resource-bounded measure and its applications.

The central parameter in resource-bounded measure is the {\em resource
  bound}, which is a class $\Delta$ of functions. When $\Delta$ is
unrestricted, i.e., contains {\em all} functions with specified
domains and codomains, resource-bounded measure coincides with
classical Lebesgue measure on the Cantor space $\C$ of all decision
problems.  On the other hand, when $\Delta$ only contains functions
satisfying a suitable complexity constraint, resource-bounded measure
consists of the following two theories.
\begin{enumerate}[1.]
\item
A theory of $\Delta$-{\em measure}. This is a
``$\Delta$-constructive'' measure theory on $\C$.
\item
A theory of {\em measure in} a complexity class $R(\Delta)$. This is a
theory that $\Delta$-measure imposes on the ``result class''
$R(\Delta)$.
\end{enumerate}
(Result classes and other notions discussed informally in this
introduction are defined precisely in the sections that follow.)  For
example, if $\Delta=\p$ consists of functions that are computable in
polynomial time, then we have $\p$-measure on $\C$, and this imposes
an internal measure structure on the exponential time complexity class
$R(\p) =\E =\DTIME(2^{\mathrm{linear}})$.  Typically, one proves a
result on measure in $R(\Delta)$ by proving a corresponding result on
$\Delta$-measure.  This, together with the fact that the
$\Delta$-measure result implies a corresponding $\Delta'$-measure
result for every resource bound $\Delta'\supseteq \Delta$, provides
resource-bounded measure a substantial underlying unity.

Of the hundred or so papers that have been written about
resource-bounded measure since 1992, none gives a definition of the
term ``resource bound''. Most simply work with those few resource
bounds appropriate to the complexity-theoretic problems being
investigated. Even papers of a more general nature stipulate that the
resource bound $\Delta$ is one of a specified (infinite) list of
examples chosen for their prior utility.

This approach to resource bounds has been healthy for the initial
development of a theory intended as a tool, but, as Socrates taught us
in {\em Euthyphro} \cite{Plato}, a list of examples leaves us far
short of understanding a concept.  More pragmatically, as the list
grows, it becomes ever more burdensome to verify that a theorem about
a general resource bound $\Delta$ actually holds for all examples in
the list.

This paper shows that there is a simple and natural set of axioms with
the following two properties.

\begin{itemize}
\item Adequacy: Any class $\Delta$ satisfying the axioms can be used
  as a resource bound for measure.
\item Generality: The most extensively used resource bounds satisfy
  the axioms.
\end{itemize}

We thus propose to \emph{define} a resource bound to be a class
$\Delta$ satisfying the axioms.

What makes our task challenging is the fact that, in order to define
resource-bounded measurability and measure \cite{Lutz:RBM} a resource
bound $\Delta$ must contain not only functions on discrete domains
like $\strings$ and $\N$, but also type-2 \emph{functionals} that take
functions as arguments. It has been a major undertaking to define what
it means for such functionals to be feasible (computable in polynomial
time) and to verify that the definition is robust
\cite{IKR01,DR:ATS:LMCS}.  The second author \cite{Lutz:RBM} has
defined type-2 versions of the other time and space resource bounds
that have been extensively used in resource-bounded measure (the
quasi-polynomial time and space hierarchies). However, these
definitions have not been proven to be robust, and the machine-based
definitions of \cite{Lutz:RBM}, while proven to be sufficient for the
development of measure and measurability, shed very little light on
our present question, namely, what properties of a class of type-2
functionals make it an adequate resource bound for measure.

Fortunately, it turns out that an {\em existing} set of axioms can be
adapted to our purpose. Mehlhorn's {\em basic feasible functionals}
\cite{Mehl76} were originally defined as a function algebra, i.e., a
set of initial functionals and a set of closure properties, with the
understanding that the class of basic feasible functionals is the {\em
  smallest} class containing these initial functions and enjoying
these closure properties.

The main contribution of the present paper is to demonstrate that, if
we just discard the ``smallest'' proviso in Mehlhorn's scheme and
define a resource bound to be {\em any} class of functionals
containing the initial functions and having the closure properties of
his definition, then we will, indeed have a definition that is
sufficient for the development of measurability and measure in
\cite{Lutz:RBM}.

We also prove that all the classes in the quasi-polynomial time and
space hierarchies of \cite{Lutz:RBM} are resource bounds in this
sense.  In the course of proving this, we prove new function algebra
characterizations of these classes, thereby establishing that they are
robust.

Two additional remarks on related work are in order here.  First,
there has been work on resource-bounded measure that is not captured
by our axiomatization.  The notable examples here are the measures in
``small'' complexity classes (e.g., the polynomial time class P)
developed by Moser \cite{Moser08} (building on pioneering work of
Mayordomo \cite{Mayo94a} and Allender and Strauss \cite{AllStr94}),
the measures in probabilistic classes (e.g., the randomized
exponential time class BPE) developed by Moser \cite{Moser08b}, and
the measures in ``large'' complexity classes (e.g., the doubly
exponential time class EE) developed by Harkins and Hitchcock
\cite{Harkins:UM}.  To date, this work has all been confined to
measure $0$/measure $1$ results.  Future developments of general
measurability and measure in these settings may necessitate -- and
guide -- generalizations of the axiomatization presented here. This
remains an open question.

The other line of related work that we mention is Dai's outer measure
approach to measurability and measure in complexity classes
\cite{Dai09}.  This approach is simpler than that of \cite{Lutz:RBM}
and the present paper in that it does not require type-two
functionals.  On the other hand, the approach of \cite{Dai09} only
seems to yield theory 2 in the second paragraph of this introduction,
so that all results are ``local'' to a particular complexity class.
The unity provided by theory 1 above, i.e., a ``global''
$\Delta$-measure on all of Cantor space, is a substantial advantage of
our our present approach.  Only future research will determine whether
a single approach can achieve both the simplicity of \cite{Dai09} and
the unity of \cite{Lutz:RBM}.

The rest of the paper is organized as follows. Section \ref{se:2}
gives the preliminary definitions and notational conventions. Section
\ref{se:3} describes the classes of type-two functionals used in the
paper.  Section \ref{se:4} gives the definition of a resource bound
and shows that the standard resource bounds in the literature satisfy
the definition. Section \ref{se:5} establishes that the definition of
a resource bound is adequate to establish the fundamental theorems of
resource-bounded measure. The final section proves that the
measure-zero fragment of this theory coincides with the approach
current in the literature \cite{Lutz:RBM}.

\section{Preliminaries}\label{se:2}
\newcommand{\bton}{\mathrm{bton}} 
\newcommand{\ntob}{\mathrm{ntob}}

We use a binary alphabet $\{0,1\}$ in this paper.  A string is an
element in $\strings$. For every $w\in\strings$, $|w|$ is the length
of the string $w$, and $w[i]$ denotes the $i$th bit of $w$.  The {\em
  Cantor} space $\C = \{0,1\}^\infty$ is the set of all infinite
binary sequences.  For an $S\in \C$, $S[i]$ is the $i$th bit of $S$,
and $S[0..n-1]$ is the $n$-bit prefix of $S$.  The standard
enumeration of $\strings$ is the enumeration of all strings in
$\strings$ in increasing order of length, with strings of the same
length ordered lexicographically.  The binary encoding function is
$\ntob:\N\rightarrow\strings$ such that for all $n\in\N$, $\ntob(n)$
is the $n$th string in the standard enumeration.  The binary decoding
function $\bton:\strings\rightarrow\N$ is the inverse of the binary
encoding function. For example, $\bton(\lambda) = 0$ and $\bton(01) =
4$.

The binary notational successor functions are
$s_0,s_1:\strings\rightarrow\strings$ such that $s_0(u) = u0$ and
$s_1(u)=u1$ for all $u\in\strings$.  The binary successor function is
$s:\strings\rightarrow\strings$ such that for all $u\in\strings$,
$s(u) = \ntob(\bton(u) +1))$ - that is, if $u$ represents a number
$n$, then $s(u)$ is the encoding of $n+1$.  The binary predecessor
function is $\pred:\strings\rightarrow\strings$ such that $\pred(u) =
\ntob(\max\{\bton(u)-1, 0\})$.

The smash function is $\#:\strings \times \strings \rightarrow
\strings$ such that for all $u,v\in\strings$, $\#(u,v) =
1^{|u|\cdot|v|}$. The interesting property of the smash function is
that for every pair $(u,v)$, the string $\#(u,v)$ has length equal to
the product of the lengths of $u$ and $v$.

A language is a subset of $\strings$. The characteristic sequence of
$L$ the infinite binary sequence such that $S[i] = 1\iff \ntob(i)\in
L$. Analogously, the characteristic function of $L$ is
$\chi_L:\strings\rightarrow\{0,1\}$ such that $\chi_L(x)=1\iff x\in
L$.  When no ambiguity arises, we also use $L$ for the characteristic
sequence of $L$.  

We write $w\prefix A$ if string $w$ is a prefix of a string/sequence
$A$.  A cylinder in $\C$ is a subset, of the form
$\myset{S\in\C}{w\prefix S}$ for some $w$, denoted $\C_w$. An {\em
  open set} in $\C$ is a set of the form $\bigcup_{w\in A} \C_w$ for
some $A\subseteq \strings$.

We also define the following hierarchy of functions.  Let $g_0=2n$ and
let $g_i(n) = 2^{g_{i-1}(\log n)}$ for all $i\in\Z^+$.  Note that
$g_1(n) = n^2$ and that $g_2(n) = n^{\log n}$.  For $i\in \N$, let
$G_i$ be the class of functions that contains $g_i$ and is closed
under composition. We use $G_i$ to represent different growth
rates. $G_1$ represents polynomial growth rates ($O(n^c)$) and $G_2$
represents quasi-polynomial growth rates ($O(n^{\log^cn})$). For each
$i\in\N$, we call growth rates bounded by a function in $G_i$ as
quasi$^i$-polynomials.

\section{Type--2 Functionals}\label{se:3}
In 1965, Cobham characterized type-1 polynomial-time computable
functions using limited/bounded recursion on notation
\cite{Cobh65,Weih73}.  He proved that the class of polynomial-time
computable functions is the smallest class of functions containing the
constant $0$ function, the binary notational successor functions, and
the smash function that is closed under composition and limited
recursion on notation.

Mehlhorn extended the characterization of polynomial-time
computability to type-2 functionals.

\begin{definition}[Mehlhorn \cite{Mehl76}]
$F$ is defined from $G$, $H$, $K$ by {\em limited recursion on
    notation} if for all $\vec f$, $\vec x$, $w$,
\[
\begin{array}[t]{ll}
F(\vec f, \vec x, \lambda) = G(\vec f, \vec x) \\ F(\vec f, \vec x,
wb) = H(\vec f, \vec x, wb, F(\vec f,\vec x, w)),& b \in \{0,1\}\\ |
F(\vec f, \vec x, w) | \leq |K(\vec f, \vec x, w)|.
 \end{array}\]
\end{definition}
We also use the following definition from Kapron and Cook
\cite{KapCoo96}.
\begin{definition}[Kapron and Cook \cite{KapCoo96}]
$F$ is defined from $H$, $G_1$, ..., $G_l$ by {\em functional
    composition} if for all $\vec{f}$, $\vec{x}$,
\[F(\vec{f},\vec{x})= H(\vec{f}, G_1(\vec{f}, \vec{x}), \dots, G_l (\vec{f},\vec{x})).\]
$F$ is defined from $G$ by {\em expansion} if for all $\vec{f}$,
$\vec{g}$, $\vec{x}$, $\vec{y}$,
\[F(\vec f, \vec g, \vec x, \vec y)= G(\vec f, \vec x).\]
\end{definition}
For the definition of basic feasible functionals, we adopt Kapron and
Cook's definition.
\begin{definition}[Kapron and Cook \cite{KapCoo96}]
Let $X$ be a set of type-two functionals. The class of {\em basic
  feasible functionals defined from $X$} ($\BFF(X)$) is the smallest
class of functionals that contains $X$, all polynomial-time functions
of type-one and the application functional $\Ap$, defined by $\Ap(f,
x) = f(x)$, and is closed under functional composition, expansion, and
limited recursion on notation.  The {\em basic feasible functionals}
are $\BFF(\varnothing)$.
\end{definition}

\begin{remark}
In this definition, it is possible to replace the inclusion of
polynomial-time functions in $\Tone$ with the inclusion of the
constant $0$ function, the binary notational successor functions, and
the smash function. But since Cobham's functional algebraic
characterization of polynomial-time is well-understood now, directly
using polynomial-time functions allow us to avoid repeating the
tedious process of defining all the simple functions from scratch.
\end{remark}

Mehlhorn proved that the $\BFF$'s have the {\em Ritchie-Cobham
  property}, namely, $F\in\BFF$ if and only if there exists an oracle
Turing machine $M$ and $G\in\BFF$ such that for all input $f$ and $x$,
the running time of $M(f,x)$ is bounded by $|G(f,x)|$. Mehlhorn's
result serves as partial evidence that the functional algebraic notion
of $\BFF$ is robust. On top of this, Kapron and Cook defined a notion
of type-2 polynomial-time computability based on oracle Turing
machines that does not require the use of $\BFF$ time-bound like the
one in Mehlhorn's result. The Basic Feasible Functionals capture the
notion of an intuitively feasible class of type-two functionals.
% Satyadev - Added Remarks by Jim Royer
Basic Feasible Functionals and their probabilistic versions show up in
cryptography, for instance, in many constructions of pseudo-random
generators from one way functions. In \cite{IK06} the authors remark
that many cryptographic adversaries can be formalized as type-2
probabilistic feasible functionals or circuits.
% Satyadev - End remarks

First, we generalize Kapron and Cook's definition of second-order
polynomials to the following.
\begin{definition}
Let $i\in \Z^+$.  {\em First-order variables} are elements of the set
$\{ n_1, n_2, \dots\}$.  {\em Second-order variables} are elements of
the set $\{L_1, L_2,\dots\}$.  {\em Second-order
  quasi$^i$-polynomials} are defined inductively: any $c\in\N$ is a
second-order quasi$^i$-polynomial; first-order variables are
second-order quasi$^i$-polynomials; and if $P$, $Q$ are second-order
quasi$^i$-polynomials and $L$ is a second-order variable, then $P+Q$,
$P\cdot Q$, $L(P)$, and $g_i( P)$ are second-order
quasi$^i$-polynomials.
\end{definition}
Second-order quasi$^1$-polynomials are the second-order polynomials
defined by Kapron and Cook.  Second-order quasi$^2$-polynomials are
second-order quasi-polynomials.  They also defined a notion of the
length for type-1 functions.
\begin{definition}[Kapron and Cook\cite{KapCoo96}]
For any $f:\strings\rightarrow \strings$, the {\em length } of $f$ is
the function $|f|: \N\rightarrow \N$ defined by
\[|f|(n) = \max_{|w|\leq n} |f(w)|.\]
\end{definition}
Note that $|f|$ is non-decreasing.

With the above two definitions, Kapron and Cook defined the following
notion of polynomial-time bounded oracle Turing machine computation.
\begin{definition}
A type-two functional $F$ is basic poly time if there is an oracle
Turing machine $M$ and a second-order polynomial $P$ such that $M$
computes $F$, and for all $\vec f$ and $\vec x$, the running time of
$M(\vec f,\vec x)$ is bounded by $P(|f_1|, \dots, |f_k|, |x_1|,\dots,
|x_l|)$.
\end{definition}
Strongly confirming the robustness of the notion of $\BFF$s, they
proved the following.
\begin{theorem}[Kapron and Cook \cite{KapCoo96}]\label{th:kapcoo}
A functional $F$ is $\BFF$ if and only if it is basic poly time.
\end{theorem}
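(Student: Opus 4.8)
The plan is to prove the two implications separately, treating the passage from $\BFF$ to basic poly time as the routine direction and its converse as the substantial one. For the forward direction I would induct on a derivation witnessing $F\in\BFF$, maintaining the invariant that each derived functional is computed by an oracle Turing machine whose running time is bounded by a second-order polynomial in $|f_1|,\dots,|f_k|,|x_1|,\dots,|x_l|$. The base cases are immediate: a polynomial-time type-one function runs within an ordinary first-order polynomial, which is a degenerate second-order polynomial, and $\Ap(f,x)=f(x)$ is computed by writing $x$ on the query tape and copying back the answer, at cost on the order of $|x|+|f|(|x|)$, which is the second-order polynomial $n_1+L(n_1)$. The cases of functional composition and expansion reduce to a closure lemma: substituting the second-order-polynomial length bound of one machine into the second-order-polynomial time bound of another again yields a second-order polynomial, which holds because the grammar permits $L$ and $g_i$ to be applied to arbitrary second-order polynomials and because $|f|$ is non-decreasing. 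The only delicate inductive case is limited recursion on notation, where one iterates the step functional $H$ once per bit of $w$; here the explicit bound $|K(\vec f,\vec x,w)|$ caps the size of every intermediate value $F(\vec f,\vec x,w')$, so that the per-step cost, and hence the product $|w|$ times the per-step cost, remains within a second-order polynomial.

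For the converse I would simulate a witnessing oracle Turing machine $M$ inside the function algebra. First I would fix an encoding of configurations of $M$ as strings and show that the one-step transition functional is itself basic feasible: all tape, head, and state manipulation is polynomial-time type-one work, and the single nontrivial feature, an oracle query, is realized by one use of $\Ap$. Iterating this transition functional by limited recursion on notation then recovers $F$, provided one can (i) drive the recursion far enough to reach a halting configuration and (ii) supply the length bound that limited recursion on notation demands.

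The hard part will be exactly the bookkeeping behind (i) and (ii): both require a basic feasible functional whose output length dominates the running time of $M$ on $(\vec f,\vec x)$, that is, a \emph{clock}. This is subtle because the running time is governed by a second-order polynomial with atoms of the form $|f|(P)=\max_{|w|\le P}|f(w)|$, and such a maximum cannot be produced by brute force within $\BFF$, since evaluating it naively would query $f$ on exponentially many strings. The resolution I would pursue exploits two observations. First, although the running time may exceed every first-order polynomial, the only operations that make it large are readings of oracle answers, and each such answer is obtained by a single application $\Ap(f,\cdot)$; consequently the large strings arising in the computation are reconstructible within $\BFF$ rather than requiring an infeasible maximum. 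Second, the a priori second-order polynomial bound $P$ can be used to license the length bound $K$: proceeding by induction on the structure of $P$ and threading the oracle answers already obtained through the recursion, one pads constants and first-order atoms using the successor and smash functions, handles sums by concatenation and products by the smash function, and handles the oracle atoms by re-deriving the relevant answers via $\Ap$. I expect this clock construction — and in particular keeping the recursion well-founded in the presence of nested oracle applications — to be the crux of the whole proof. An attractive alternative packaging is to route the argument through Mehlhorn's Ritchie--Cobham characterization stated above, which reduces the theorem to the single equivalence that a running time is bounded by a second-order polynomial if and only if it is dominated by the length of a basic feasible functional; the forward direction of that equivalence is the structural induction sketched first, and its converse is precisely the clock construction just described.
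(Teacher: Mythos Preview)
The paper does not actually prove this theorem; it is stated as a result of Kapron and Cook with a citation to \cite{KapCoo96} and no proof. The only commentary the paper offers is a one-sentence summary of the crux: ``The key idea behind Kapron and Cook's proof is that it is possible to find the oracle query $q_{\mathrm{max}}$ made by $M(\vec f,x)$ that maximizes $|f(q_{\mathrm{max}})|$ in $\BFF$. And the inability to compute the length of $\vec f$ (in unary) in $\BFF$ is what makes their proof very involved.''

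Your forward direction (structural induction on a $\BFF$ derivation) is correct and standard. For the converse you have correctly located the hard part --- producing a $\BFF$ clock that dominates the running time of $M$ --- and your proposal to build it by induction on the depth of the second-order polynomial $P$ is indeed the outer skeleton of Kapron and Cook's argument. What your sketch leaves implicit, and what the paper singles out as \emph{the} key idea, is the mechanism that realizes each inductive step: one does not merely ``re-derive the relevant answers via $\Ap$'' in the abstract, but rather runs a bounded partial simulation of $M$ inside $\BFF$ (using the clock already built at the previous depth), scans the resulting transcript to isolate the single query $q_{\mathrm{max}}$ whose answer was longest, and then uses $|f(q_{\mathrm{max}})|$ to produce a string long enough to serve as the next-level clock. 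This bootstrapping is what substitutes for the infeasible brute-force computation of $|f|$, and making it precise is where the real work lies. Your plan is compatible with this, but as written it stops just short of articulating the $q_{\mathrm{max}}$ extraction that both the paper and the original proof identify as the heart of the matter.
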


In this paper, we extend the Mehlhorn's functional algebraic notion of
feasible functionals to quasi-feasible functionals with the following
definition.

\begin{definition}
Let $X\subseteq \Ttwo$ and let $i\in \Z^+$. The class of {\em basic
  $i$-feasible functionals defined from $X$} ($\BFF_i(X)$) is the
smallest class of functionals containing $X$, all polynomial-time
computable functions in $\Tone$, $1^{g_i(|x|)}$, and the application
functional $\Ap$, defined by $\Ap(f, x) = f(x)$, and which is closed
under functional composition, expansion, and limited recursion on
notation.  The {\em basic $i$-feasible functionals} are elements of
the class $\BFF_i(\varnothing)$.
\end{definition}
%\begin{remarc}
%Again, the inclusion of all polynomial-time computable in the definition of $\BFF_i$
%can be replaced by the inclusion of the constant $0$ function,
%the binary notational successor functions,
%and the smash function. Then Cobham's result would imply that $\BFF_i$ contains
%all the polynomial-time computable functions.
%Also, we can even replace the smash function with the function
%$(u,v)\mapsto 1^{g_i(\max(|u|,|v|))}$, which would make it seem like that
%for $i>1$, $\BFF_i$ does not contain all the slow-growing polynomial-time computable
%functions. But Cobham's proof of the equivalence of type-1 basic feasible functions and
%the type-1 polynomial-time computable functions readily generalize to a proof
%of the equivalence between type-1 basic-$i$ feasible functions and
%the type-1 quasi$^i$-polynomial-time computable functions, which implies that
%$\BFF_i$ still contain all the polynomial-time computable functions by looking
%at the more intuitive inclusions in the Turing machine model.
%\end{remarc}
In the flavor of Kapron and Cook, we extend their oracle Turing
machine based notion of feasible computation to the following.
\begin{definition}
Let $i\in \Z^+$.  A functional $F$ is basic quasi$^i$-polynomial time
if there is an oracle Turing machine $M$ and a second-order
quasi$^i$-polynomial $P$ such that $M$ computes $F$, and for all $\vec
f$ and $\vec x$, the running time of $M(\vec f,\vec x)$ is bounded by
$P(|f_1|, \dots, |f_k|, |x_1|,\dots, |x_l|)$.
\end{definition}
The following theorem is a corollary of Kapron and Cook's proof of
theorem \ref{th:kapcoo}.
\begin{theorem}
Let $i\in\Z^+$.  A functional $F$ is $\BFF_i$ if and only if it is
basic quasi$^i$-polynomial time.
\end{theorem}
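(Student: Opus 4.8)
The plan is to follow Kapron and Cook's proof of Theorem~\ref{th:kapcoo} essentially verbatim, tracking the single additional ingredient on each side of the equivalence: on the algebra side, the new initial functional $1^{g_i(|x|)}$, and on the machine side, the replacement of second-order polynomials by second-order quasi$^i$-polynomials, i.e.\ the admission of the operation $g_i(\cdot)$. Both directions are proved by structural induction, and in each the only genuinely new case is the one involving $g_i$; the closure of the class $G_i$ under composition is what makes these new cases go through, exactly paralleling the role played by closure of ordinary polynomials under composition in the $i=1$ case.

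For the soundness direction ($\BFF_i \Rightarrow$ basic quasi$^i$-polynomial time), I would induct on the construction of $F$ as a basic $i$-feasible functional. The initial polynomial-time type-one functions and the application functional $\Ap$ are handled precisely as in Kapron and Cook, their running times being bounded by second-order polynomials, which are in particular second-order quasi$^i$-polynomials. The one new base case is $1^{g_i(|x|)}$, which is computable by an oracle machine in time $O(g_i(|x|))$, bounded by the second-order quasi$^i$-polynomial $g_i(n_1)$. For the closure operations, functional composition and expansion reduce to closure of second-order quasi$^i$-polynomials under substitution into their variables, which holds since the defining grammar allows $+$, $\cdot$, $L(\cdot)$, and $g_i(\cdot)$ freely; limited recursion on notation is controlled as usual by the length bound $|F(\vec f,\vec x,w)|\leq|K(\vec f,\vec x,w)|$, so that the $|w|$ stages, each of cost bounded by a quasi$^i$-polynomial in the bounded intermediate lengths, sum to a quasi$^i$-polynomial total.

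For the completeness direction (basic quasi$^i$-polynomial time $\Rightarrow \BFF_i$), the heart of the argument is a clock lemma: for every second-order quasi$^i$-polynomial $P$, the functional $(\vec f,\vec x)\mapsto 1^{P(|f_1|,\dots,|f_k|,|x_1|,\dots,|x_l|)}$ lies in $\BFF_i$. I would prove this by induction on the structure of $P$. The cases for constants, first-order variables, $+$, $\cdot$, and $L(\cdot)$ are exactly those of Kapron and Cook, built from the polynomial-time type-one functions and $\Ap$. The new case is $g_i(Q)$: given the clock $1^{Q}$ already constructed for the subterm $Q$, applying the initial functional $1^{g_i(|x|)}$ to $1^{Q}$ yields $1^{g_i(Q)}$, since $|1^{Q}|=Q$; this uses closure under functional composition, and the fact that nesting of $g_i$ stays within the quasi$^i$-polynomial framework reflects the closure of $G_i$ under composition. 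Once the clock is available, the step-by-step simulation of the oracle machine $M$ by limited recursion on notation proceeds verbatim as in Kapron and Cook, with the clock supplying both the iteration count and the length bound $K$ required by limited recursion.

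The main obstacle is the clock lemma, and within it the $g_i(Q)$ case: one must confirm that feeding the already-built unary clock for $Q$ into $1^{g_i(|x|)}$ produces a string of exactly the right length and that this composition is legitimate in the algebra. This is where the design of the new initial functional pays off, since it is chosen precisely so that each application of $g_i$ in a bounding polynomial corresponds to one use of $1^{g_i(|x|)}$ in the algebra. Everything else is bookkeeping that transfers unchanged from the $i=1$ proof, which is why the result is stated as a corollary of Kapron and Cook's argument rather than requiring an independent development.
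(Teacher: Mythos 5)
There is a genuine gap in your completeness direction: the ``clock lemma'' you make the heart of the argument is false, already for $i=1$. The functional $(\vec f,\vec x)\mapsto 1^{P(|f_1|,\dots,|f_k|,|x_1|,\dots,|x_l|)}$ is \emph{not} in $\BFF$ in general, because the length $|f|(n)=\max_{|w|\leq n}|f(w)|$ is not computable by a basic feasible functional: an oracle machine running in second-order polynomial time can make only polynomially many queries, while determining $|f|(n)$ requires information about $f$ on all $2^{n+1}-1$ strings of length at most $n$ (a standard adversary argument hides a single long value on an unqueried string). The paper is explicit about this point: it is precisely ``the inability to compute the length of $\vec f$ (in unary) in $\BFF$'' that makes Kapron and Cook's proof involved, and the length functional in unary is shown to be computable only in the \emph{space}-bounded algebra (Lemma~\ref{lm:lengthspace}), which is why Theorem~\ref{th:3_4} admits the simple clock-based argument you describe but the time-bounded theorem does not. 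Consequently your claim that, once the clock is available, ``the step-by-step simulation \dots proceeds verbatim as in Kapron and Cook'' mischaracterizes their proof: they never have such a clock, and instead interleave the simulation with the $\BFF$-computation of the query $q_{\mathrm{max}}$ maximizing $|f(q_{\mathrm{max}})|$ among the queries actually made, bounding the running time by a polynomial in the lengths of answers received so far rather than in $|f|$.

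Your soundness direction and your identification of the genuinely new ingredients (the initial functional $1^{g_i(|x|)}$ on the algebra side, the clause $g_i(P)$ on the machine side, with closure of $G_i$ under composition doing the work) are correct and match what the paper intends, since the paper offers no proof beyond the remark that the theorem is a corollary of Kapron and Cook's argument. But to repair the completeness direction you must carry the $g_i$ case through Kapron and Cook's actual machinery --- showing that their incremental bound functionals, built from the longest query answer seen so far, remain in $\BFF_i$ when the bounding polynomial may contain $g_i(\cdot)$ --- rather than through a global clock that does not exist in the time-bounded setting.
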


In the machine model, the time bound is based on both the input length
and on the length of query answers.  This is why we need to have
$g_i(P)$ in the definition of second-order quasi$^i$-polynomials.  The
condition in the definition that a single second-order
quasi$^i$-polynomial has to work for all input $\vec f$ prohibits an
oracle Turing machine from using extra running time when the input
function $\vec f$ is pathologically long. An oracle Turing machine $M$
that computes a quasi$^i$-polynomial time functional, on any $x$, can
only utilize an amount of time that is quasi$^i$-polynomial in the
length of $\vec f$ it can provide evidence for, which can be much less
than length of $\vec f$ depending on the type-$0$ inputs. 

More formally, let $Q_x$ be the set of all queries made by $M$ with
$\vec f$ and $x$ as input.  Let $P$ be the time bound of $M$.  Let
$f_{Q_x}(y)=\vec f(y)$ if $y\in Q_x$ and $0$ otherwise.  Then the
running time $T_M(\vec f, x) \leq P(|f_{Q_x}|, |x|)$ for all $\vec f$
and $x$.  The key idea behind Kapron and Cook's proof is that it is
possible to find the oracle query $q_\mathrm{max}$ made by $M(\vec
f,x)$ that maximizes $|f(q_\mathrm{max})|$ in $\BFF$. And the
inability to compute the length of $\vec f$ (in unary) in $\BFF$ is
what makes their proof very involved.  We will see in the following
that the situation with polynomial space-bounded computation is much
simpler precisely for the reason that, as we will soon prove in Lemma
\ref{lm:lengthspace}, the length functional in unary for arbitrary
$\vec f$ is actually computable in polynomial space.  First, we
develop the definitions of computation feasible in terms of space.

\begin{definition}
A functional $F$ is {\em quasi$^i$-polynomial space} if there is an
oracle Turing machine $M$ and a second order quasi$^i$-polynomial $P$
such that $M$ computes $F$, and for all $\vec f$, $\vec x$, $S_M(\vec
f, \vec x)$ is bounded by $P(|f_1|, \dots, |f_k|, |x_1|, \dots,
|x_l|)$, where $S_M(\vec f, \vec x)$ is the running space used by $M$
on input $\vec f$ and $\vec x$.
\end{definition}

In 1972, D. B. Thompson characterized the class of type-1
polynomial-space computable functions as the smallest class that
contains the constant $0$ function, the binary successor function, the
smash function, and is closed under (type-1) composition and (type-1)
bounded recursion \cite{Thom72}.  We extend type-1 bounded recursion
as follows.
\begin{definition}
$F$ is defined from $G$, $H$, $K$ by {\em bounded recursion} (BR) if
  for all $\vec f$, $\vec x$, $n$,
\[
\begin{array}[t]{ll}
F(\vec f, \vec x, 0) = G(\vec f, \vec x) \\ F(\vec f, \vec x, n+1) =
H(\vec f, \vec x, n, F(\vec f,\vec x, n))\\ F(\vec f, \vec x, n)\leq
K(\vec f, \vec x, n).
 \end{array}\]
\end{definition}

\begin{definition}
Let $X\subseteq \Ttwo$ and let $i\in \Z^+$.  The class of {\em basic
  $i$-feasible space functionals defined from $X$} ($\BFSF_i(X)$) is
the smallest class of functionals containing $X$, all polynomial-time
computable functions in $\Tone$, $1^{g_i(|x|)}$, and the application
functional $\Ap$, defined by $\Ap(f, x) = f(x)$, and which is closed
under functional composition, expansion, and bounded recursion.  The
{\em basic $i$-feasible space functionals} are $\BFSF_i(\varnothing)$.
\end{definition}
%\begin{remarc}
%Note again here that the inclusion of polynomial-time computable functions is not necessary
%and can be replaced with the inclusion of the initial functions in Thompson's characterization
%of type-1 polynomial-space computable functions. Since Thompson's result is well-understood,
%we include all the polynomial-time computable functions as initial functions for convenience.
%\end{remarc}
\begin{lemma}\label{lm:lengthspace}
$L:(f,x)\mapsto 1^{|f|(|x|)}$ is basic $i$-feasible space for all
  $i\geq 1$.
\end{lemma}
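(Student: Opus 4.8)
The plan is to construct $L$ using only polynomial-time type-$1$ functions, the application functional $\Ap$, functional composition, and a single bounded recursion, so that $L$ lands in $\BFSF_i$ for every $i\geq 1$; in fact the extra generator $1^{g_i(|x|)}$ will not be needed, which is why the statement holds uniformly in $i$. The naive approach --- recurse through all strings $w$ with $|w|\leq|x|$ while maintaining the running value $1^{\max|f(w)|}$ --- immediately hits the one real obstacle: bounded recursion requires a bound $K\in\BFSF_i$ on every partial result, yet those partial results already have length $\Theta(|f|(|x|))$, and since $f$ is arbitrary there is no a priori $\BFSF_i$ bound on $|f|(|x|)$. Producing any string of length $\sim|f|(|x|)$ inside the algebra seems to demand a recursion whose bound is itself $\sim|f|(|x|)$, and breaking this circularity is the heart of the argument.

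The key idea is to recurse on the \emph{index of the maximizing argument} rather than on the maximum itself, since that index is small. Put $n_x=\bton(1^{|x|})$; because $1^{|x|}$ is the last length-$|x|$ string in the standard enumeration, the strings $\ntob(0),\dots,\ntob(n_x)$ are precisely the strings of length at most $|x|$. I define $W(f,x,k)$, the least $j<k$ maximizing $|\Ap(f,\ntob(j))|$ (with $W(f,x,0)=0$), by bounded recursion: $W(f,x,k+1)=k$ if $|\Ap(f,\ntob(k))|>|\Ap(f,\ntob(W(f,x,k)))|$, and $W(f,x,k+1)=W(f,x,k)$ otherwise. The step is a composition of polynomial-time operations (the encoding $\ntob$, a length comparison of two supplied strings, and a binary selection) with $\Ap$, hence is assembled from $\BFSF_i$ generators. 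Crucially $W(f,x,k)\leq k$, so the recursion bound may be taken to be the recursion variable itself --- a polynomial-time function of the arguments that does not mention $|f|$. This is exactly what sidesteps the circularity: the recursively built object is an index, not a long string.

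Finally I would assemble $L$ by functional composition. The number $N(x)=n_x+1$ is computable by a polynomial-time type-$1$ function (form $1^{|x|}$, apply $\bton$, add one); its value is $\Theta(2^{|x|})$, so the recursion silently scans exponentially many strings, but its description stays short and the whole search fits in polynomial space by reuse. Supplying $N(x)$ as the recursion argument yields the global maximizing index $j^\ast=W(f,x,N(x))$, and then $L(f,x)=U(\Ap(f,\ntob(j^\ast)))$, where $U:u\mapsto 1^{|u|}$ is the polynomial-time unary-length map. The single application $\Ap(f,\ntob(j^\ast))$ has length exactly $\max_{|w|\leq|x|}|f(w)|=|f|(|x|)$, and $U$ turns it into $1^{|f|(|x|)}$; no recursion ever builds a long string, so no unbounded bound is required. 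Every functional used is a $\BFSF_i$ generator or obtained by composition and one bounded recursion, giving $L\in\BFSF_i$ for all $i\geq1$. This deferral of the single length-producing application to the very end is precisely the point of contrast with the $\BFF$ (time) case flagged before the lemma, where one cannot afford to enumerate all $\sim2^{|x|+1}$ strings and Kapron and Cook's argument becomes far more delicate.
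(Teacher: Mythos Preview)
Your proposal is correct and follows essentially the same approach as the paper: both recurse to locate the \emph{argument} maximizing $|f(\cdot)|$ among all strings of length at most $|x|$ (so the recursion is bounded by the recursion variable itself), and only afterwards apply $\Ap$ once to that argument and take the unary length. The paper's version recurses directly on strings via $\pred$ with the bound $F(f,x)\leq x$, whereas you index through $\ntob/\bton$ with a separate counter $k$ and bound $W(f,x,k)\leq k$; this is purely cosmetic, and your explicit identification of the circularity and how deferring the single length-producing application breaks it is exactly the point.
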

\begin{proof}[Proof of Lemma \ref{lm:lengthspace}]
Let the functional
$$F:(\strings\to\strings)\times \strings\rightarrow \strings$$ be
defined using bounded recursion as follows:
\begin{align*}
F(f, \lambda) &= \lambda\\ F(f, x) &=\begin{cases} F(f,
\pred(x))&|f(F(f, \pred(x)))| \geq |f(x)|\\ x&\text{otherwise}
\end{cases}\\
F(f, x)&\leq x.\\
\end{align*}
Intuitively,
\[F(f,x) = \min\myset{y}{y\leq x\text{ and }|f(y)| = \max_{z\leq x}|f(z)|}.\]
Let
\[L(f,x) = 1^{|\Ap(f(F(f,1^{|x|})))|}.\]
Then $L$ is the functional we desire here.  \qed\end{proof}

\begin{theorem}\label{th:3_4}
A functional $F$ is basic $i$-feasible space if and only if it is
quasi$^i$-polynomial space.
\end{theorem}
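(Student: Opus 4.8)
The plan is to prove both implications, mirroring Kapron and Cook's proof of Theorem~\ref{th:kapcoo} but exploiting the fact that the space setting is genuinely simpler: by Lemma~\ref{lm:lengthspace}, the unary length functional $L:(f,x)\mapsto 1^{|f|(|x|)}$ is already available inside $\BFSF_i$, so the single hardest obstacle in the time case disappears.

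For the \emph{forward (soundness) direction} ($\BFSF_i \Rightarrow$ quasi$^i$-polynomial space), I would argue by structural induction on the definition of $\BFSF_i$. First I would check the initial functionals: every polynomial-time type-1 function is trivially computable in quasi$^i$-polynomial (indeed polynomial) space; $1^{g_i(|x|)}$ uses space $g_i(|x|)$, a quasi$^i$-polynomial; and $\Ap(f,x)=f(x)$ is computed by one oracle query, using space bounded by $|x|+|f|(|x|)$, which is a second-order quasi$^i$-polynomial. Expansion is immediate. For functional composition $F(\vec f,\vec x)=H(\vec f,G_1(\vec f,\vec x),\dots,G_l(\vec f,\vec x))$, I would compute and store each $G_j$ (its output length is bounded by its space bound, hence a quasi$^i$-polynomial) and then run $H$ while reusing work space; the resulting bound is obtained by substituting the $G_j$-bounds into $H$'s bound, and it remains a second-order quasi$^i$-polynomial precisely because $G_i$ is closed under composition. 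For bounded recursion, the key observation is the standard one for space: $F(\vec f,\vec x,n)$ is computed by iterating $H$ from $G(\vec f,\vec x)$, storing only a counter and the current value (bounded by $|K(\vec f,\vec x,n)|$, a quasi$^i$-polynomial) and reusing the space used to evaluate $H$ at each step, so the total space stays quasi$^i$-polynomial regardless of the number of iterations.

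The harder direction is \emph{completeness} (quasi$^i$-polynomial space $\Rightarrow \BFSF_i$), where I must simulate within $\BFSF_i$ an oracle machine $M$ whose space is bounded by a second-order quasi$^i$-polynomial $P$. The first step is to produce, in unary, the numeric bound $s=P(|f_1|,\dots,|x_l|)$: using Lemma~\ref{lm:lengthspace} for the $L(\cdot)$ subterms of $P$, the initial functional $1^{g_i(|x|)}$ for the $g_i(\cdot)$ subterms, and polynomial-time unary arithmetic for $+$ and $\cdot$, the functional $(\vec f,\vec x)\mapsto 1^{s}$ is $\BFSF_i$. Next I would encode a configuration of $M$ (which occupies space $\le s$) as a string of length $\mathrm{poly}(s)$ and define the one-step transition functional $\tau$; $\tau$ is a polynomial-time type-1 operation on configurations, except when $M$ issues an oracle query, which is answered by a single use of $\Ap$, so $\tau\in\BFSF_i$. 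I would then iterate $\tau$ by bounded recursion: since a space-$s$ machine has at most $2^{O(s)}$ distinct configurations, I construct (from $1^{s}$, by polynomial-space type-1 functions) a counter $N=2^{O(s)}$ and apply bounded recursion with $N$ as the recursion argument and the configuration length $\mathrm{poly}(s)$ as the bound $K$, halting at the terminal configuration; the output is then read off by a polynomial-time projection.

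I expect the main obstacle to be this completeness direction, and within it two points in particular. First, one must verify that the oracle queries of $M$ can be faithfully reproduced by $\Ap$ while staying within the declared space bound: the query length is at most $s$, so the answer length is at most $|f|(s)$, which is exactly the quantity Lemma~\ref{lm:lengthspace} lets us control. Second, one must arrange the exponential number of simulation steps. The latter is where bounded recursion---recursion on the numerical value of its argument rather than on notation---is essential: feeding in the exponentially large counter $N$ yields exponentially many iterations while the bound $K$ keeps every intermediate configuration short, so the whole simulation remains in $\BFSF_i$. By contrast, the analogous time-bounded argument of Kapron and Cook is delicate precisely because the unary length of $\vec f$ is not computable in $\BFF$; here Lemma~\ref{lm:lengthspace} removes that difficulty.
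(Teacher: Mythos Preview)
Your proposal is correct and follows essentially the same route as the paper's proof sketch: structural induction for the soundness direction, and for completeness, first computing the space bound in unary inside $\BFSF_i$ via Lemma~\ref{lm:lengthspace} and then simulating $M$ step-by-step via bounded recursion on the (exponentially large) time counter while keeping only the current configuration. The only cosmetic difference is that the paper first normalizes the second-order polynomial $P$ into a depth-$d$ tower and composes $L$ exactly $d$ times, whereas you compute $1^{P(|\vec f|,|\vec x|)}$ by direct structural induction on $P$; these amount to the same computation, and your explanation of why bounded recursion (on values, not notation) is what enables the $2^{O(s)}$-step simulation is, if anything, more explicit than the paper's sketch.
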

\begin{proof}[Proof Sketch of Theorem \ref{th:3_4}.]
To prove that basic $i$-space feasibility implies quasi$^i$-polynomial
space, it suffices to do an induction on the structure of composition
and bounded recursion by implementing them on Turing machine with
space reuse.

For the other side of the equivalence, we prove this by an induction
on the depth of second-order polynomials.

Let $F$ be a quasi$^i$-polynomial space computable functional computed
by OTM $M$ with space bound $P$ of depth $d>0$. (When $d=0$, the
running space bound of $F$ does not depend on the length of the input
type-$1$ function $f$ and the proof is simpler.)  Then there exist
(regular) level-$i$ polynomials $q_0$, $q_1$, $\dots$, $q_d$, and
second-order quasi$^i$-polynomials $P_1$, $\dots$, $P_d$ such that for
all $i\in [1..d-1]$
\[P_i(|f|, n) = |f|(q_{i-1}(n))\]
and
\[P_d(f,x) = q_d(P_{d-1}(|f|,n)) \geq P(|f|, n).\]

The following pseudo-code provides a functional $B(f,x)$ that computes
the space bound of the OTM $M$ with input $f$ and $x$. The description
of $B(f,x)$ is written in imperative programming language style
pseudo-code. Note that $d$ is a fixed constant, it is easy to
transform this pseudo-code into functional algebra simply using $d$
levels of composition of the functional $L$.

\noindent
------------------------------------------------\\
{\bf input} $f$, $x$\\
$q_\mathrm{max} =\lambda$\\
{\bf for} $i=0$ {\bf to} $d-1$\\
\indent \indent $q_\mathrm{max} = L(f, 1^{q_i(|x|)})$\\
{\bf return } $1^{|\Ap(f, q_\mathrm{max})|}$\\
------------------------------------------------

As soon as we have the actual space bound $B(f,x)$ of the computation
of $M$ on input $f$ and $x$ and hence the bound of number of
(transition) steps $M^f(x)$ takes to run, we can define a functional
$Run_M$ similar to Kapron and Cook. Our functional $Run_M$ differs
from theirs mainly in two aspects. One is that ours keeps track of
only the encoding of the instantaneous description of the Turing
machine at the current computation step, while theirs keeps track of
the encoding of the entire history of the computation of the Turing
machine. The other is that our $Run_M$ uses bounded recursion, while
theirs uses bounded recursion on notation. The techniques used in
transforming Turing machine transition function to functional algebra
are standard, though tedious.  $Run_M(f, x, y)$ recurses on the value
of $y$ and $Run_M(f, x, B(f,x))$ is the instantaneous description at
the time $M(f, x)$ halts and
\[Run_M(f, x, B(f,x))\leq B(,x)\]
for all $f$ and $x$.  \qed\end{proof}

\newcommand{\nextid}{\mathrm{nextID}}
\newcommand{\nextq}{\mathrm{nextQ}} \newcommand{\id}{\mathrm{ID}}

\section{Resource Bounds}\label{se:4}
In the initial development of a theory of resource-bounded measure
\cite{Lutz:RBM}, a list of examples of resource-bounds were given
based on an oracle Turing machine model of type-2 computation that is
not known to be robust. In this section, we axiomatize the definition
of a resource bound by adapting the axioms of Mehlhorn's basic
feasible functionals and verify that most extensively used resource
bounds are indeed resource bounds under this definition.

\begin{definition}
A {\em resource bound} is a class $\Delta$ of functionals of type no
more than $2$ that is closed under $\BFF$.
\end{definition}

\begin{theorem}\label{th:4_1}
Let $i\in\Z^+$. $\p_i = \BFF_i$ is a resource bound.
\end{theorem}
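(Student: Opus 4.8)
The plan is to unwind the definition of ``resource bound'' and observe that $\BFF_i$ is closed under $\BFF$ essentially by construction. Recall that a class $\Delta$ is a resource bound precisely when $\BFF(\Delta) = \Delta$; since $\Delta \subseteq \BFF(\Delta)$ holds trivially for every class $\Delta$ (because $\BFF(X)$ always contains $X$), it suffices to establish the single inclusion $\BFF(\BFF_i) \subseteq \BFF_i$.

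To do this, I would appeal directly to the minimality clause in the definition of $\BFF(X)$. By that definition, $\BFF(\BFF_i)$ is the \emph{smallest} class of functionals that (a) contains $\BFF_i$, (b) contains all polynomial-time functions of type one together with the application functional $\Ap$, and (c) is closed under functional composition, expansion, and limited recursion on notation. The key observation is that $\BFF_i = \BFF_i(\varnothing)$ already satisfies all three of these requirements: condition (a) is immediate; condition (b) holds because the polynomial-time type-one functions and $\Ap$ are among the initial functionals used to generate $\BFF_i$; and condition (c) holds because $\BFF_i$ is, by its own definition, the smallest class closed under exactly these three operations. Thus $\BFF_i$ is itself one of the classes over which the intersection defining $\BFF(\BFF_i)$ is taken, and minimality forces $\BFF(\BFF_i) \subseteq \BFF_i$. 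Combining this with the trivial reverse inclusion yields $\BFF(\BFF_i) = \BFF_i$, which is exactly the assertion that $\p_i = \BFF_i$ is closed under $\BFF$, and hence a resource bound.

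I do not expect a genuine obstacle here; the proof is a matter of correctly reading the definitions. The one point requiring care is the interpretation of ``closed under $\BFF$'' as the fixed-point condition $\BFF(\Delta)=\Delta$, together with the observation that the single extra initial functional $1^{g_i(|x|)}$ appearing in the definition of $\BFF_i$ (but not of $\BFF$) can only enlarge $\BFF_i$ and therefore cannot disturb any of the three closure requirements above. Since enlarging the set of generators of a function algebra never destroys closure under its operations, the presence of $1^{g_i(|x|)}$ is harmless for this argument, and the containment $\BFF(\BFF_i)\subseteq\BFF_i$ goes through verbatim.
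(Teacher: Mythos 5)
Your proof is correct and follows essentially the same route as the paper's: both reduce the claim to the fixed-point identity $\BFF(\BFF_i)=\BFF_i$, which the paper obtains by citing the idempotence $\BFF(\BFF(X))=\BFF(X)$ and you obtain by re-deriving that idempotence directly from the minimality clause in the definition of the function algebra. The substance is identical; you have merely inlined the proof of the closure-operator identity that the paper takes as known.
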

\begin{proof}[Proof of Theorem \ref{th:4_1}]
Note that by definition $\BFF_i = \BFF(\{x\mapsto1^{g_i(|x|)}\})$.
Since $\BFF(\BFF(X))=\BFF(X)$ for all $X$, $\BFF(\BFF_i) =
\BFF_i$. Therefore $\BFF_i$ is a resource bound.  \qed\end{proof}

Let $K^k$ be the canonical $\Sigma^\P_k$-complete language
\cite{BaDiGa95}.  Let $\chi_k$ be the characteristic function of
$K^k$.
\begin{definition}
Let $i\in\Z^+$ and let $k\geq 2$.  $\Delta^{\p_i}_k =
\BFF_i(\{\chi_{k-1}\})$.
\end{definition}
\begin{theorem}\label{th:4_2}
Let $i\in\Z^+$ and let $k\geq 2$.  $\Delta^{\p_i}_k$ is a resource
bound.
\end{theorem}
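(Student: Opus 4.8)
The plan is to reduce Theorem~\ref{th:4_2} to exactly the idempotence argument used for Theorem~\ref{th:4_1}, since both assert that a class of the form $\BFF(\,\cdot\,)$ is a resource bound, i.e.\ is closed under $\BFF$. Recall that being a resource bound means $\BFF(\Delta)=\Delta$; since $\Delta\subseteq\BFF(\Delta)$ always holds, it suffices to verify the reverse inclusion $\BFF(\Delta^{\p_i}_k)\subseteq\Delta^{\p_i}_k$.

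First I would unwind the definitions to observe that the only difference between the generating data for $\BFF(X)$ and for $\BFF_i(X)$ is the extra initial functional $x\mapsto 1^{g_i(|x|)}$. Hence, for every $X\subseteq\Ttwo$,
\[
\BFF_i(X)=\BFF\bigl(X\cup\{x\mapsto 1^{g_i(|x|)}\}\bigr),
\]
because adjoining $1^{g_i(|x|)}$ to the seed set yields precisely the smallest class containing $X$, the polynomial-time type-one functions, $\Ap$, and $1^{g_i(|x|)}$ that is closed under functional composition, expansion, and limited recursion on notation. This is the same identity (at $X=\varnothing$) implicit in the proof of Theorem~\ref{th:4_1}.

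Specializing to $X=\{\chi_{k-1}\}$ gives $\Delta^{\p_i}_k=\BFF\bigl(\{\chi_{k-1},\,x\mapsto 1^{g_i(|x|)}\}\bigr)$. Writing $Y=\{\chi_{k-1},\,x\mapsto 1^{g_i(|x|)}\}$, I would then apply the idempotence law $\BFF(\BFF(Y))=\BFF(Y)$ — the same fact invoked in Theorem~\ref{th:4_1} — to conclude
\[
\BFF(\Delta^{\p_i}_k)=\BFF(\BFF(Y))=\BFF(Y)=\Delta^{\p_i}_k,
\]
which is exactly the claim. I would finally remark that $\chi_{k-1}:\strings\to\{0,1\}$ is a type-one function, so $Y$ consists of type-one functions and $\Delta^{\p_i}_k=\BFF(Y)$ is indeed a class of functionals of type at most $2$, as a resource bound must be.

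There is no genuine obstacle here: the argument is a near-verbatim repeat of Theorem~\ref{th:4_1}, the oracle $\chi_{k-1}$ playing no special role beyond being an admissible type-one seed. The only point requiring a moment's care is the definitional identity $\BFF_i(X)=\BFF(X\cup\{x\mapsto 1^{g_i(|x|)}\})$; once that is in hand, the result follows immediately from the idempotence of the operator $\BFF(\cdot)$.
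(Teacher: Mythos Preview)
Your proposal is correct and follows essentially the same route as the paper: both rewrite $\Delta^{\p_i}_k$ as $\BFF(Y)$ with $Y=\{x\mapsto 1^{g_i(|x|)},\,\chi_{k-1}\}$ and then invoke the idempotence $\BFF(\BFF(Y))=\BFF(Y)$. Your version simply spells out the definitional identity $\BFF_i(X)=\BFF(X\cup\{x\mapsto 1^{g_i(|x|)}\})$ and the type-level remark more explicitly than the paper does.
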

\begin{proof}[Proof of Theorem \ref{th:4_2}]
Note that by definition $\Delta^{\p_i}_k =
\BFF(\{x\mapsto1^{g_i(|x|)}, \chi_{k-1}\})$.  Since
$\BFF(\BFF(X))=\BFF(X)$ for all $X$, $\BFF(\Delta^{\p_i}_k) =
\BFF(\BFF(\{x\mapsto1^{g_i(|x|)}, \chi_{k-1}\})) =
\BFF(\{x\mapsto1^{g_i(|x|)}, \chi_{k-1}\}) = \Delta^{\p_i}_k$.
Therefore $\Delta^{\p_i}_k$ is a resource bound.  \qed\end{proof}

\begin{definition}
Let $i\in\Z^+$. $\pispace = \BFSF_i$.
\end{definition}
\begin{theorem}\label{th:4_3}
Let $i\in\Z^+$. $\BFF(\pispace) = \pispace$, i.e., $\pispace$ is a
resource bound.
\end{theorem}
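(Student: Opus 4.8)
The plan is to establish both inclusions of $\BFF(\pispace)=\pispace$. The inclusion $\pispace\subseteq\BFF(\pispace)$ is immediate, since $\BFF(X)\supseteq X$ for every $X$. For the reverse inclusion, recall that $\BFF(\pispace)$ is \emph{by definition} the smallest class that contains $\pispace$, all polynomial-time type-one functions, and the application functional $\Ap$, and that is closed under functional composition, expansion, and limited recursion on notation. Hence it suffices to verify that $\pispace=\BFSF_i$ already possesses all of these features. All of them except closure under limited recursion on notation are immediate from the definition of $\BFSF_i$: by definition $\BFSF_i$ contains every polynomial-time type-one function, contains $\Ap$, and is closed under functional composition and expansion. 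So the entire content of the theorem reduces to the single claim that $\BFSF_i$ is closed under limited recursion on notation (the only $\BFF$-operation not built into the definition of $\BFSF_i$, which instead has bounded recursion).

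To prove this claim I would pass to the machine characterization supplied by Theorem~\ref{th:3_4}, which identifies $\BFSF_i$ with the class of quasi$^i$-polynomial space functionals. Suppose $F$ is defined from $G,H,K\in\BFSF_i$ by limited recursion on notation, and fix OTMs computing $G,H,K$ within second-order quasi$^i$-polynomial space bounds. I would compute $F(\vec f,\vec x,w)$ by unwinding the recursion \emph{iteratively} rather than recursively: maintain a single register holding the running value $v_j=F(\vec f,\vec x,w[0..j-1])$, initialized to $G(\vec f,\vec x)$, and for $j=1,\dots,|w|$ overwrite it with $H(\vec f,\vec x,w[0..j-1],v_j)$. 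The defining length bound of limited recursion on notation gives $|v_j|\le|K(\vec f,\vec x,w[0..j-1])|$, so the register never exceeds $\max_{w'\prefix w}|K(\vec f,\vec x,w')|$ in length, a quasi$^i$-polynomial quantity; and each update reuses the space used to evaluate $H$. Thus the total space is bounded by the sum of the space to store $v_j$ and the space to evaluate $H$ on arguments of those lengths.

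The one point that requires care—and the step I expect to be the main obstacle—is checking that this total is captured by a \emph{single} second-order quasi$^i$-polynomial in $|f_1|,\dots,|f_k|,|x_1|,\dots,|x_l|,|w|$. This is where I would use that the growth-rate class $G_i$ is closed under composition: substituting the quasi$^i$-polynomial length bound $|K|$ into the quasi$^i$-polynomial space bound of the machine for $H$ again yields a second-order quasi$^i$-polynomial, and adding the storage term keeps the bound inside the class. Once such a bound is exhibited, $F$ is quasi$^i$-polynomial space, hence $F\in\BFSF_i$ by Theorem~\ref{th:3_4}. This shows $\BFSF_i$ is closed under limited recursion on notation and therefore that $\BFF(\pispace)\subseteq\pispace$, completing the proof.

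An alternative, purely algebraic route would reduce limited recursion on notation directly to the bounded recursion already available in $\BFSF_i$, recursing on the number $|w|$ while reading off the successive prefixes $w[0..j-1]$ (a polynomial-time, hence $\BFSF_i$, operation) and using $1^{|K(\vec f,\vec x,w[0..j-1])|}$ as the recursion bound. However, reconciling the \emph{length} bound demanded by limited recursion on notation with the \emph{value} bound required by bounded recursion introduces essentially the same bookkeeping, so the machine-based route above seems cleaner to carry out.
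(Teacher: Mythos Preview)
Your proposal is correct and, for the substantive step, takes the same route as the paper: both arguments reduce closure under limited recursion on notation to the machine characterization of Theorem~\ref{th:3_4}, building an iterative OTM that carries the running value $F(\vec f,\vec x,w[0..j-1])$ through $|w|$ updates and invoking the length bound $|K|$ to control the register size, then checking that the resulting space bound is a second-order quasi$^i$-polynomial.

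The one difference worth noting is organizational. You correctly observe that closure under functional composition and expansion is \emph{built into the definition} of $\BFSF_i$, so that limited recursion on notation is the only $\BFF$ closure property left to verify. The paper instead proves all three closure properties---composition, expansion, and limited recursion on notation---via explicit OTM constructions and space accounting in the machine model. Your observation makes those two arguments redundant; the paper's version, on the other hand, is self-contained at the machine level and does not lean on the function-algebra definition once Theorem~\ref{th:3_4} is in hand. Either packaging is fine, and the core content (the LRN simulation) is the same.
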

\begin{proof}[Proof of Theorem \ref{th:4_3}.]
We prove the equivalence for $i=2$. Since each of the $G_i$ is closed
under composition, the proof readily extends to all $i\in\Z^+$.

It suffices to show that $\BFF(\ptwospace) \subseteq \ptwospace$,
i.e., $\ptwospace$ is closed under functional composition, expansion,
and limited recursion on notation.

{\bf Functional composition}

Let $H, G_1, \dots, G_l \in \ptwospace$.  Let $F$ be defined from $H,
G_1, \dots, G_l$ by functional composition, i.e., for all $\vec f
\in \ptwospace$ and $\vec x \in \strings$,
\[F(\vec f, \vec x) = H(\vec f, 
                         G_1(\vec f, \vec x), 
                         \dots, 
                         G_l(\vec f, \vec x)).\]
Now, we show that $F\in \ptwospace$.

Since $H, G_1, \dots, G_l \in \ptwospace\cap\Ttwo$, there exist oracle
Turing machines $M_H$, $M_1, \dots, M_l$ and second-order polynomial
space bounds $P_H, P_1, \dots, P_l: ((\N\rightarrow\N)\times \N)
\rightarrow \N$ respectively.

Consider the following oracle Turing machine $M$.

---------------------------\\ 
\indent{\bf input} $\vec x$\\ 
\indent{\bf oracle} $\vec f$\\ 
\indent{\bf for} $i$ := $1$ to $l$
{\bf do}\\ 
\indent\indent $u_i$ := $M_i^{\vec f}(\vec x)$\\ 
\indent{\bf end for}\\ 
\indent{\bf output} $M_H^{\vec f}(u_1,\dots, u_l)$\\ 
\indent---------------------------\\

It is clear that $M$ computes $F$.

In the for loop of $M$, the $i$th iteration uses space at most
$P_i(|\vec f|, |\vec x|)$. The total space used in the for loop is
\[O\left( \sum_{i=1}^l P_i(|\vec f|, |\vec x|) \right).\]

The lengths of $u_1 = G_1(\vec f, \vec x)$, $\dots$, and $u_l =
G_l(\vec f, \vec x)$ are bounded by $P_1(|\vec f|, |\vec x|)$,
$\dots$, and $P_l(|\vec f|, |\vec x|)$ respectively.  So the space use
in the simulation of $M_H$ is bounded by
\[P_H(|\vec f|, P_1(|\vec f|, |\vec x|), \dots, P_l(|\vec f|, |\vec x|)).\]
The total space used in the computation of $F$ is
\[O\left( P_H(|\vec f|, P_1(|\vec f|, |\vec x|), \dots, P_l(|\vec f|, |\vec x|))
+\sum_{i=1}^l P_i(|\vec f|, |\vec x|) \right),\] which is a
second-order polynomial in $|\vec f|$ and $|\vec x|$ as both $|\vec
f|$ and $|\vec x|$ are arbitrary.

{\bf Expansion} Let $G\in \ptwospace$. Let $M_G$ be the oracle Turing
machine for $G$ and let $P_G$ be the corresponding second-order
polynomial that bounds the space for $M_G$.

Let $F$ be defined from $G$ by expansion, i.e., for all $\vec f, \vec
g, \vec x, \vec y$,
\[F(\vec f, \vec g, \vec x, \vec y) = G(\vec f, \vec x).\]

Consider the following oracle Turing machine $M$.

---------------------------\\ 
\indent{\bf input} $\vec x$, $\vec y$\\ 
\indent{\bf oracle} $\vec f$, $\vec g$\\ 
\indent{\bf output}
$M_G^{\vec f}(\vec x)$\\ 
\indent---------------------------\\

It is clear that for all $\vec f, \vec g, \vec x, \vec y$
\[M^{\vec f, \vec g}(\vec x, \vec y) =  G(\vec f, \vec x)= F(\vec f, \vec g, \vec x, \vec y).\]
It is easy to verify that the second-order polynomial $P$ defined by
\[P(|\vec f|, |\vec g|, |\vec x|, |\vec y|) =2 P_G(|\vec f|, |\vec x| + |\vec  y|) \]
bounds the space of $M$.

{\bf Limited recursion on notation}

Let $G, H, K\in \ptwospace$. Let $M_G, M_H, M_K$ be the oracle Turing
machines that computes $G, H, K$ respectively. Let $P_G, P_H, P_K$ be
their corresponding space bound respectively.

Let $F$ be defined from $G, H, K$ by limited recursion on notation.

Consider the following oracle Turing machine $M$.

---------------------------\\ 
\indent{\bf input} $\vec x$, $w$\\ 
\indent{\bf oracle} $\vec f$\\ 
\indent$u$ := $M_G^{\vec f}(\vec x)$\\ 
\indent{\bf for} $i$ := $1$ to $|w|-1$ {\bf do}\\ 
\indent\indent$u$ := $M_H^{\vec f}(\vec x, w[0..i], u)$\\ 
\indent{\bf end for}\\ 
\indent{\bf output} $u$\\ 
\indent---------------------------\\

Clearly, Turing machine $M$ computes $F$ (using the iterative
expansion of the recursion).  Now, we show that $M$ runs in space that
is bounded by a second-order polynomial.

The third line of code in $M$ uses space bounded by $P_G(|\vec f|,
|\vec x|)$.

At $i$th iteration of the for loop, the fifth line in $M$ computes the
value of $F(\vec f, \vec x, w[0..i])$ and uses space bounded by
$P_H(|\vec f|, |F(\vec f, \vec x, w[0..i-1])|, |w[0..i]|)$.  By the
restriction in the definition of limited recursion on notation and the
monotonicity of $P_K$, at any time during the computation
\[|F(\vec f, \vec x, w[0..i-1])|\leq P_K(|\vec f|, |\vec x|, |w|).\]
Thus the space used at $i$th iteration of the for loop is bounded by
\[P_H(|\vec f|,  P_K(|\vec f|, |\vec x|, |w|), |w[0..i]|).\]
The auxiliary space used for the for loop is bounded by $c\cdot
\log|w|\leq c\cdot |w|$, where $c>0$ is some universal constant.  The
total amount of space used by $M$ is bounded by
\[|w|\cdot P_H(|\vec f|,  P_K(|\vec f|, |\vec x|, |w|), |w[0..i]|) + c\cdot |w|,\]
which is a second-order polynomial in $|\vec f|$, $|\vec x|$, and
$|w|$.  
\qed\end{proof}

\section{Adequacy for Measure}\label{se:5}
The general theory of resource-bounded measurability and measure
developed in \cite{Lutz:RBM} consists of the basic definitions,
reviewed below, and proofs that the resulting $\Delta$-measure and
measure in $R(\Delta)$ have the fundamental properties of a measure
(e.g., additivity, measurability of measure-$0$ sets, etc.).  The main
shortcoming of the list-of-examples approach is evident in these
proofs: Each time that a functional is asserted to be
$\Delta$-computable, it is {\em incumbent on the reader to check} that
this holds for each of the infinitely many resource bounds $\Delta$ in
the list.

Our main task in the present section is to re-prove these theorems in
a more satisfactory manner.  Our proofs here assume only that $\Delta$
is a resource bound, as defined in section 3, and they {\em
  explicitly} prove that the relevant functionals are
$\Delta$-computable, using only the axioms (closure properties)
defining resource bounds.

To put the matter simply, the proofs in \cite{Lutz:RBM} are
measure-theoretically rigorous, but their generality is tedious (for
the conscientious reader) and limited.  Our contribution here is to
make these proofs and the scope of their validity explicit.  For this
reason, the proofs given in the present section focus on the
$\Delta$-computability of various type-two functionals, referring to
\cite{Lutz:RBM} for the non-problematic, measure-theoretic parts of
the proofs.  We first review the definitions necessary for the
development of a resource-bounded measure.

A {\em probability measure} on $\C$ is a function
$\nu:\strings\rightarrow[0,1]$ such that $\nu(\lambda)=1$ and, for all
$w\in\strings$, $\nu(w)=\nu(w0)+\nu(w1).$ For strings
$v,w\in\strings$, if $\nu(w)>0$, we write $\nu(v|w)$ for the
conditional probability of $v$ given $w$. The {\em uniform probability
  measure} is $\mu$ such that $\mu(w) = 2^{-|w|}$ for all
$w\in\strings$.

Let $\nu$ be a probability measure on $\C$. A $\nu$-{\em martingale}
is a function $d:\strings\rightarrow[0,\infty)$ with the property that
  for all $w\in\strings$,
\[d(w)\nu(w) = d(w0)\nu(w0)+d(w1)\nu(w1).\]
We use $\mathbf{1}$ for the {\em unit martingale} defined by
$\mathbf{1}(w)=1$ for all $w\in\strings$, which is a $\nu$-martingale
for every probability measure $\nu$.

\begin{definition}
Let $\nu$ be a probability measure. Let $d$ be a $\nu$-martingale. Let
$A\subseteq \strings$.  We say that $d$ {\em covers} $A$ if there is
an $n\in\N$ such that $d(A[0..n-1])\geq 1$.  We say that $d$ {\em
  succeeds} on $A$ if $\limsupn d(A[0..n-1])=\infty$ We say that $d$
{\em succeeds strongly} on $A$ if $\liminfn d(A[0..n-1])=\infty$ The
{\em set covered by} $d$ (the {\em unitary success set}) is $S^1[d]
=\myset{A}{d\text{ covers }A}$.  The {\em success set} of $d$ is
$S^\infty[d] =\myset{A}{d \text{ succeeds on } A}$.  The {\em strong
  success set} of $d$ is $S^\infty_\str[d] =\myset{A}{d \text{
    succeeds strongly on } A}$.
\end{definition}

We use real-valued functions (probability measures, martingales, etc.)
on discrete domains of natural numbers $\N$ and strings $\strings$
extensively. Let $D$ be a discrete domain.  A {\em computation} of a
function $f:D\rightarrow\R$ is a function $\hat f:\N\times
D\rightarrow\Q$ such that, for all $r\in\N$ and $x\in D$, $|\hat
f(r,x) - f(x)|\leq 2^{-r}$. In this expression, $r$ may be thought of
as the \emph{precision} parameter of the computation. For such a
function $f$, there is a unique computation $\hat f$ of $f$ such that
$\hat f_r(x) = a\cdot 2^{-r}$ for some integer $a$ for all $r\in \N$
and $x\in D$.  We call this particular $\hat f$ the {\em canonical
  computation} of $f$.  Whenever a function $f$ is involved as a
parameter in the of a type-2 functional, the type-2 computation of the
functional operates on the canonical computation, $\hat f$.

\begin{definition}[Lutz \cite{Lutz:RBM}]
Let $\Delta$ be a resource bound. A $\Delta$-{\em probability measure}
on $\C$ is a probability measure $\nu$ on $\C$ such that $\nu$ is
$\Delta$-computable and there is a $\Delta$-computable function
$l:\N\rightarrow\N$ such that, for all $w\in\strings$, $\nu(w)=0$ or
$\nu(w)\geq 2^{-l(|w|)}$. We say that $\nu$ is {\em weakly positive},
if $\nu$ has the latter property.
\end{definition}

\begin{definition}[Lutz \cite{Lutz:AEHNC,Lutz:RBM}]
A {\em constructor} is a function $\delta:\strings\rightarrow\strings$
such that $x\prprefix \delta(x)$ for all $x\in\strings$. The {\em
  result} of $\delta$ is the unique language $R(\delta)$ such that
$\delta^k(\lambda)\prefix R(\delta)$.  If $\Delta$ is a resource
bound, then the {\em result class} $R(\Delta)$ of $\Delta$ is the set
of all languages $R(\delta)$ such that $\delta\in \Delta$.
\end{definition}

The martingale splitting operators defined by Lutz \cite{Lutz:RBM} are
instrumental in developing the general theory of resource-bounded
measurability and measure in complexity classes.

\begin{definition}[Lutz \cite{Lutz:RBM}]
Let $X^+$ and $X^-$ be disjoint subsets of $\C$, then a $\nu$-{\em
  splitting operator} for $(X^+, X^-)$ is a functional
$\Phi:\N\times\D_\nu\rightarrow \D_\nu\times \D_\nu,$ such that
$\Phi(r,d) =(\Phi_r^+(d), \Phi_r^-(d))$ has the following properties
for all $r\in\N$ and $d\in\D_\nu$.
\begin{enumerate}[(i)]
\item
$X^+\cap S^1[d]\subseteq S^1[\Phi_r^+(d)]$,
\item
$X^-\cap S^1[d]\subseteq S^1[\Phi_r^-(d)]$,
\item
$\Phi_r^+(d)(\lambda)+\Phi_r^-(d)(\lambda)\leq d(\lambda)+2^{-r}$.
\end{enumerate}
If $\Delta$ is a resource bound, a $\Delta$-$\nu$-splitting operator
for $(X^+, X^-)$ is a $\nu$-splitting operator for $(X^+, X^-)$ that
is $\Delta$-computable.  Let $X\subseteq \C$.  A
$\Delta$-$\nu$-measurement of $X$ is a $\Delta$-$\nu$-splitting
operator for $(X^+, X^-)$.  A $\nu$-measurement of $X$ in $R(\Delta)$
is a $\Delta$-$\nu$-splitting operator for $(R(\Delta)\cap X^+
,R(\Delta)-X)$.  If $\Phi$ is a $\nu$-splitting operator, then we
write $\Phi^+_\infty= \inf_{r\in\N}\Phi_r^+(\mathbf{1})(\lambda),$
$\Phi^-_\infty= \inf_{r\in\N}\Phi_r^-(\mathbf{1})(\lambda).$
\end{definition}

We now can define the resource-bounded measurabilities.

\begin{definition}
A set $X\subseteq\C$ is $\nu$-measurable in $R(\Delta)$, and we write
$X\in\F_{R(\Delta)}^\nu$, if there exists a $\nu$-measurement $\Phi$
of $X$ in $R(\Delta)$. In this case, the $\nu$-measure of $X$ in
$R(\Delta)$ is the real number $\nu(X|R(\Delta))=\Phi^+_\infty$.
($\nu(X|R(\Delta))$ does not depend on the choice of $\Phi$
\cite{Lutz:RBM}.)
\end{definition}

\begin{definition}
A set $X\subseteq\C$ is $\Delta$-$\nu$-measurable, and we write
$X\in\F_\Delta^\nu$, if there exists a $\Delta$-$\nu$-measurement
$\phi$ of $X$. In this case, the $\Delta$-$\nu$-measure of $X$ is the
real number $\nu_\Delta(X)=\Phi_\infty^+$.  ($\nu_\Delta(X)$ does not
depend on the choice of $\Phi$ \cite{Lutz:RBM}.)
\end{definition}

In the rest of this paper, we refer to \cite{Lutz:RBM} liberally
whenever a claim was already proved. 

\begin{theorem}[Measure Conservation Theorem]\label{th:5_1}
Let $\Delta$ be a resource bound.
If $w\in\strings$ and $d$ is a $\Delta$-$\nu$-martingale
such that $\C_w\cap R(\Delta)\subseteq S^1[d]$, then $d(\lambda)\geq \nu(w)$.
\end{theorem}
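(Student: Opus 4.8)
The plan is to argue by contradiction: assume $d(\lambda)<\nu(w)$ (which forces $\nu(w)>0$ since $d(\lambda)\geq 0$), and exhibit a single constructor $\delta\in\Delta$ whose result $R(\delta)$ lies in $\C_w$ yet is never covered by $d$, contradicting $\C_w\cap R(\Delta)\subseteq S^1[d]$. The constructor runs in two phases. In the \emph{forced phase}, on any proper prefix $x$ of $w$ it appends the next bit of $w$, so that after $|w|$ steps it has produced $w$ exactly. In the \emph{dodging phase}, at a current string $v$ with $w\prefix v$ and $\nu(v)>0$, it computes approximate values of $d$ at the two children $v0,v1$, discards any child of $\nu$-measure $0$ (detectable from the weak positivity of $\nu$, by computing $\hat\nu$ to precision below the guaranteed threshold $2^{-l(|vb|)}$), and appends the bit whose surviving child minimizes the approximate value of $d$.

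The measure-theoretic core is that $d$ stays strictly below $1$ along every prefix of $R(\delta)$, so $R(\delta)\notin S^1[d]$; I would record this using two elementary martingale facts (citing \cite{Lutz:RBM} for the routine parts). First, $d(v)\nu(v)$ is nonincreasing along any single path, since the identity $d(v)\nu(v)=d(v0)\nu(v0)+d(v1)\nu(v1)$ drops one nonnegative term at each step; hence $d(u)\nu(u)\leq d(\lambda)\nu(\lambda)=d(\lambda)$ for every prefix $u$ of $R(\delta)$. For $u\prefix w$ this already gives $d(u)\leq d(\lambda)/\nu(u)<\nu(w)/\nu(u)\leq 1$, using $\nu(u)\geq\nu(w)$. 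Second, writing $d(v)=\frac{\nu(v0)}{\nu(v)}d(v0)+\frac{\nu(v1)}{\nu(v)}d(v1)$ as a convex combination, the smaller positive-measure child satisfies $\min_b d(vb)\leq d(v)$, so \emph{exact} dodging from $w$ onward would keep $d$ nonincreasing and hence below $d(w)\leq d(\lambda)/\nu(w)<1$.

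Since only an approximation $\hat d$ of $d$ is available, the main obstacle is controlling the accumulation of approximation error so that $d$ never reaches $1$ during the dodging phase. I would manage this with an increasing precision schedule: at the $n$th step use precision $2^{-r_n}$ with $r_n=n+c$, so that selecting the apparently-smaller child raises the true value of $d$ by at most $2^{-r_n+1}$ above $\min_b d(vb)\leq d(v)$. Summing, the total increase beyond $d(w)$ is at most $\sum_{n\geq|w|}2^{-r_n+1}$, and the constant $c$ can be fixed inside the proof (using the strictly positive slack $\nu(w)-d(\lambda)$) to make this sum smaller than $1-d(\lambda)/\nu(w)$. Then $d(v)<1$ for every $v$ along the dodging phase too, so $R(\delta)\in\C_w$ is not covered by $d$.

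Finally, and this is the point the axiomatization exists to support, I would verify that $\delta$ is genuinely in $\Delta$ rather than merely computable. The constructor is assembled from the $\Delta$-computable canonical computations $\hat d$ and $\hat\nu$, the fixed string $w$, the computable precision schedule $r_n$, and polynomial-time operations (reading a bit, comparing rationals, the weak-positivity threshold test, and appending a bit). Because a resource bound is by definition closed under $\BFF$ --- in particular under functional composition with polynomial-time type-one functions --- these ingredients combine into a functional lying in $\Delta$, so $\delta\in\Delta$ and $R(\delta)\in R(\Delta)$. This produces the contradiction and yields $d(\lambda)\geq\nu(w)$. I expect the genuinely delicate steps to be (i) the precision bookkeeping above and (ii) phrasing the whole construction strictly within the closure operations defining a resource bound, so that the conclusion $\delta\in\Delta$ rests only on the axioms.
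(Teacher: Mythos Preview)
Your proposal is correct and follows essentially the paper's approach: both argue by contradiction, build a constructor with a forced phase that outputs $w$ and a dodging phase that appends whichever bit has the smaller approximate $d$-value (precision parameter $\approx |x|+\text{const}$, the constant fixed from the slack below $1$), and then obtain $\delta\in\Delta$ from the $\BFF$-closure axiom. The only notable difference is that your dodging phase also filters out $\nu$-null children via weak positivity, whereas the paper's constructor $E$ uses $\hat d$ alone and defers the entire measure-theoretic verification that $R(\delta)\notin S^1[d]$ to Lemma~3.4 of \cite{Lutz:RBM}.
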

\begin{proof}
We define a functional 
$$E:\D_\nu\times \N\times \strings\rightarrow (\strings\rightarrow\strings)$$
that maps $\nu$-martingales to constructors.

Let $a:\strings\times \N\rightarrow \N$ be such that $a(x, m) = |x|+m+2$. It is clear that $a$ is $\BFF$.

Let $E$ be such that for all $d \in \D_\nu$, $m\in \N$, $w\in \strings$, and $x\in\strings$,
\[
E(d, m, w)(x) =
\begin{cases}
w&\text{if }x\prprefix w\\
x0&\text{if }\hat{d}_{a(x,m)}(x0)\leq \hat{d}_{a(x,m)}(x1)\text{ and not } x\prprefix w\\
x1&\text{otherwise,}
\end{cases}\]
Note that $\hat{d}$ is the canonical computation of $d$ and it is clear that $E$ is $\BFF$ and
$E(d, m, w)$ is a constructor.

Let $w\in\strings$ and let $d$ be a $\Delta$-$\nu$-martingale such that
$d(\lambda)<\nu(w)$. Then  for every prefix $w'\prefix w$,
there exists a constant $m\in\N$ such that $d(w)\leq 1- 2^{1-m}$ \cite{Lutz:RBM}.
Let $\delta = E(d, m, w)$. By the proof of Lemma 3.4 in \cite{Lutz:RBM},
$R(\delta)\notin S^1[d]$. Since $d\in\Delta$ and $\Delta$ is a resource bound hence closed under $\BFF$,
$\delta\in\Delta$.
\qed\end{proof}
\begin{lemma}\label{lm:5_2}
Let $\mathrm{ADD}:\D_\nu\times \D_\nu\rightarrow \D_\nu$ be such that
for all $d', d''\in\D_\nu$, $\mathrm{ADD}(d',d'') = d'+d''$.
Then $\mathrm{ADD}\in\BFF$.
\end{lemma}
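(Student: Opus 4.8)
The lemma states that the addition functional $\mathrm{ADD}(d',d'')=d'+d''$ on $\nu$-martingales lies in $\BFF$. The central subtlety is that $d',d''\in\D_\nu$ are real-valued functions on $\strings$, so the functional $\mathrm{ADD}$ does not operate on them directly but on their canonical computations $\hat d',\hat d'':\N\times\strings\rightarrow\Q$, as stipulated in the paragraph defining computations of real-valued functions. Thus what I must really produce is a $\BFF$ functional that, given (codes for) $\hat d'$ and $\hat d''$ and a precision–argument pair $(r,x)$, outputs a rational approximating $(d'+d'')(x)$ to within $2^{-r}$, and moreover outputs the \emph{canonical} such computation so that the result genuinely represents $d'+d''$ in the sense the framework requires.

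**Plan of proof.** The plan is to exhibit $\mathrm{ADD}$ explicitly as a composition of the application functional $\Ap$ with polynomial-time type-one arithmetic, and then invoke closure of $\BFF$ under functional composition. First I would note that to approximate $(d'+d'')(x)$ within $2^{-r}$ it suffices to approximate each summand within $2^{-(r+1)}$, so the machine queries $\hat d'$ at precision $r+1$ and $\hat d''$ at precision $r+1$, i.e.\ it computes $\Ap(\hat d',\langle r+1,x\rangle)$ and $\Ap(\hat d'',\langle r+1,x\rangle)$ using the application functional together with the polynomial-time pairing/coding of the pair $(r+1,x)$. These two oracle calls return rationals $q',q''$ with $|q'-d'(x)|\le 2^{-(r+1)}$ and $|q''-d''(x)|\le 2^{-(r+1)}$, whence $|(q'+q'')-(d'+d'')(x)|\le 2^{-r}$. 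Rational addition is a polynomial-time type-one function, so $q'+q''$ is obtained by composing the $\Ap$-outputs with a $\BFF$ (indeed polynomial-time) arithmetic functional. To obtain the \emph{canonical} computation one additionally rounds $q'+q''$ to the nearest integer multiple of $2^{-r}$; this rounding is again a polynomial-time type-one operation, and rounding an approximation good to $2^{-(r+1)}$ to the grid $2^{-r}\Z$ yields the unique canonical value. Composing all of these building blocks — the coding map, the two applications of $\Ap$, rational addition, and rounding — and applying the identity $\BFF(\BFF(X))=\BFF(X)$ together with closure under functional composition and expansion shows $\mathrm{ADD}\in\BFF$.

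**Where the work concentrates.** The conceptually delicate point, and the one I would state carefully, is the precision bookkeeping: establishing that $\mathrm{ADD}$ produces the canonical computation of $d'+d''$ rather than merely \emph{some} computation of it. This requires checking that querying at precision $r+1$ and rounding to $2^{-r}\Z$ lands exactly on $a\cdot 2^{-r}$ for the correct integer $a$, i.e.\ that the rounded sum is independent of the particular $\hat d',\hat d''$ fed in and depends only on the real values $d',d''$. The remaining steps — verifying that pairing, rational addition, and rounding are polynomial-time type-one functions and hence in the base class, and that each application of $\Ap$ is a permitted initial functional — are routine given Cobham's characterization cited earlier and the definition of $\BFF(X)$. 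I would present the functional as an explicit composition and then remark that each constituent is either a polynomial-time type-one function or $\Ap$, so the whole lies in $\BFF$ by closure under functional composition.
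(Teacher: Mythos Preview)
Your approach is essentially the paper's: query the canonical computations $\hat d'$ and $\hat d''$ at precision $r+1$, add, and note that rational addition is polynomial-time (hence $\BFF$); the paper simply sets $\hat d_r(w)=\hat d'_{r+1}(w)+\hat d''_{r+1}(w)$ and checks the triangle-inequality bound.  One small slip in your bookkeeping: after summing you have an approximation good to $2^{-r}$, not to $2^{-(r+1)}$ as you state when justifying the rounding step, so rounding that sum to the grid $2^{-r}\Z$ need not land within $2^{-r}$ of the true value; if you want the rounded output you should query at precision $r+2$ instead (the paper sidesteps this by not rounding, simply declaring the sum to be the output and being somewhat loose about canonicity).
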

\begin{proof}
We write $d$ for $\mathrm{ADD}(d',d'')$. Then
\[\hat{d}_r(w) = \hat{d'}_{r+1}(w) + \hat{d''}_{r+1}(w).\]
Note that addition of two rational numbers in binary expansion is $\BFF$.
Since $|\hat{d'}_{r+1}(w) - d'(w) |\leq 2^{-r-1}$ and $|\hat{d''}_{r+1}(w) - d''(w) |\leq 2^{-r-1}$,
$|\hat{d}_r(w) - (d'(w)+d''(w))|\leq 2^{-r}$ and $\hat{d}$ is the canonical computation of $d$.
Therefore $\mathrm{ADD}$ is $\BFF$.
\qed\end{proof}
\begin{lemma}[Lutz \cite{Lutz:RBM}]\label{lm:5_3}
Let $\Delta$ be a resource bound.
Let $X\subseteq \C$. If $\Phi$ and $\Psi$ are $\nu$-measurements of $X$ in $R(\Delta)$, then for all
$j, k\in\N$,
\[\Phi_j^+(\mathbf{1})(\lambda) + \Psi_k^+(\mathbf{1})(\lambda) \geq 1.\]
\end{lemma}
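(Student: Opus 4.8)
The plan is to manufacture a single $\Delta$-$\nu$-martingale that covers all of $R(\Delta)$ and then extract the claimed bound from the Measure Conservation Theorem (Theorem~\ref{th:5_1}) at $w=\lambda$. The enabling observation is that $S^1[\mathbf{1}]=\C$: since $\mathbf{1}(A[0..n-1])=1\geq 1$ for every $A\in\C$, every sequence is already covered by the unit martingale. Feeding $\mathbf{1}$ to the two splitting operators therefore turns the conditional covering clauses (i) and (ii) into unconditional ones. From $\Phi$, clause (i) gives $R(\Delta)\cap X\subseteq S^1[\Phi_j^+(\mathbf{1})]$; from $\Psi$, the component covering the complementary half of $R(\Delta)$ gives $R(\Delta)-X\subseteq S^1[\Psi_k^+(\mathbf{1})]$.

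First I would form the sum $e=\mathrm{ADD}(\Phi_j^+(\mathbf{1}),\Psi_k^+(\mathbf{1}))$. Because $R(\Delta)\cap X$ and $R(\Delta)-X$ partition $R(\Delta)$, and because a sequence covered by either summand is covered by the sum (martingales take nonnegative values), we get $\C_\lambda\cap R(\Delta)=R(\Delta)\subseteq S^1[e]$. The only place the axioms enter is in checking that $e$ is a \emph{$\Delta$}-$\nu$-martingale: $\mathbf{1}$ is $\BFF$, so $\Phi_j^+(\mathbf{1})$ and $\Psi_k^+(\mathbf{1})$ are $\Delta$-computable (each operator is $\Delta$-computable and $\Delta$ is closed under $\BFF$), and then Lemma~\ref{lm:5_2} together with closure of $\Delta$ under $\BFF$ shows that $\mathrm{ADD}$ keeps the result $\Delta$-computable; its codomain $\D_\nu$ guarantees $e$ is again a $\nu$-martingale.

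Finally I would apply Theorem~\ref{th:5_1} to $e$ with $w=\lambda$. Since $e$ is a $\Delta$-$\nu$-martingale with $\C_\lambda\cap R(\Delta)\subseteq S^1[e]$, the theorem yields $e(\lambda)\geq\nu(\lambda)=1$, that is,
\[\Phi_j^+(\mathbf{1})(\lambda)+\Psi_k^+(\mathbf{1})(\lambda)\geq 1,\]
as required. I expect the only real obstacle to be the bookkeeping for $\Delta$-computability of the combined martingale, which is exactly what Lemma~\ref{lm:5_2} was set up to discharge; the covering inclusions and the conservation step are immediate once $S^1[\mathbf{1}]=\C$ is in hand. (Here I use that the two measurements together supply success martingales for the $X$-side and the non-$X$-side of $R(\Delta)$; if both operators are read as measurements of $X$ in the literal sense, the identical argument pairs $\Phi_j^+(\mathbf{1})$ with $\Psi_k^-(\mathbf{1})$.)
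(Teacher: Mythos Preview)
Your approach is exactly the paper's: form $d=\mathrm{ADD}(\Phi_j^+(\mathbf{1}),\Psi_k^+(\mathbf{1}))$, argue $d\in\Delta$ from $\mathbf{1}\in\BFF$, $\Phi,\Psi\in\Delta$, Lemma~\ref{lm:5_2}, and closure under $\BFF$, and then hand the measure-theoretic covering/conservation step off to Lemma~4.1 of \cite{Lutz:RBM} (your invocation of Theorem~\ref{th:5_1} at $w=\lambda$ is precisely that step). Your parenthetical catch is correct and worth stating outright: as both $\Phi$ and $\Psi$ are measurements of $X$, $\Psi_k^+(\mathbf{1})$ covers $R(\Delta)\cap X$, not $R(\Delta)-X$, so the martingale that actually covers all of $R(\Delta)$ is $\Phi_j^+(\mathbf{1})+\Psi_k^-(\mathbf{1})$; the ``$+$'' on $\Psi$ in the displayed statement and in the paper's definition of $d$ should read ``$-$''.
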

\begin{proof}
Assume the hypothesis, let $j,k\in\N$, and let
\[d=\mathrm{ADD}(\Phi_j^+(\mathbf{1}),\Psi_k^+(\mathbf{1})).\]
Since $\mathrm{1}$ is $\BFF$, both $\Phi$ and $\Psi$ are $\Delta$,
and $\Delta$ is a resource bound and closed under $\BFF$,
$d\in\Delta$. The rest of the proof is identical to the proof of
Lemma 4.1 in \cite{Lutz:RBM}.
\qed\end{proof}
\begin{lemma}[Lutz \cite{Lutz:RBM}]\label{lm:lutz4_5}
Let $X\subseteq\C$ and let $\Delta$ be a resource bound.
\begin{enumerate}[1.]
\item
If $X$ is $\Delta$-$\nu$-measurable, then $X$ is $\nu$-measurable in $R(\Delta)$ and $\nu(X|R(\Delta))=\nu_\Delta(X)$.
\item
$X$ is $\nu$-measurable in $R(\Delta)$ if and only if $X\cap R(\Delta)$ is $\nu$-measurable in $R(\Delta)$,
in which case $\nu(X\mid R(\Delta)) =\nu(X\cap R(\Delta)\mid R(\Delta)$.
\end{enumerate}
\end{lemma}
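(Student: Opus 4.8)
The plan is to exploit a single monotonicity property of $\nu$-splitting operators: conditions (i) and (ii) in the definition only become \emph{weaker} when the target sets $X^+,X^-$ are replaced by subsets, while condition (iii) does not mention $X^+$ or $X^-$ at all. Concretely, if $\Phi$ is a ($\Delta$-)$\nu$-splitting operator for $(X^+,X^-)$ and $A\subseteq X^+$, $B\subseteq X^-$ are disjoint, then for every $r$ and $d$ one has $A\cap S^1[d]\subseteq X^+\cap S^1[d]\subseteq S^1[\Phi_r^+(d)]$ and similarly for $B$, so the \emph{same} functional $\Phi$ is a ($\Delta$-)$\nu$-splitting operator for $(A,B)$. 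The point is that no new functional is constructed, so the $\Delta$-computability hypothesis transfers for free and both measures are read off from $\Phi_\infty^+$; the purely measure-theoretic content — in particular that $\nu(\cdot\mid R(\Delta))$ and $\nu_\Delta(\cdot)$ are independent of the chosen witnessing operator — is exactly what is supplied by \cite{Lutz:RBM}.

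For part 1, I would suppose $X$ is $\Delta$-$\nu$-measurable, witnessed by a $\Delta$-$\nu$-measurement $\Phi$, i.e.\ a $\Delta$-$\nu$-splitting operator for $(X,\C\setminus X)$. The inclusions $R(\Delta)\cap X\subseteq X$ and $R(\Delta)-X\subseteq\C\setminus X$ allow me to apply the monotonicity above with $A=R(\Delta)\cap X$ and $B=R(\Delta)-X$, so the very same $\Phi$ is a $\Delta$-$\nu$-splitting operator for $(R(\Delta)\cap X,R(\Delta)-X)$, which is by definition a $\nu$-measurement of $X$ in $R(\Delta)$. Hence $X\in\F_{R(\Delta)}^\nu$. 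Because $\Phi$ serves as the witnessing operator on both sides, $\nu_\Delta(X)=\Phi_\infty^+=\nu(X\mid R(\Delta))$, using the independence of each quantity from the choice of operator.

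For part 2, I would observe that the defining pairs for $X$ and for $X\cap R(\Delta)$ as sets measurable in $R(\Delta)$ coincide exactly. Indeed $R(\Delta)\cap(X\cap R(\Delta))=R(\Delta)\cap X$, and for $S\in R(\Delta)$ we have $S\in R(\Delta)-(X\cap R(\Delta))$ iff $S\notin X$, so that $R(\Delta)-(X\cap R(\Delta))=R(\Delta)-X$. Thus a functional is a $\nu$-measurement of $X$ in $R(\Delta)$ if and only if it is a $\nu$-measurement of $X\cap R(\Delta)$ in $R(\Delta)$, which gives the claimed equivalence; and since the two notions are witnessed by identical operators, $\nu(X\mid R(\Delta))=\nu(X\cap R(\Delta)\mid R(\Delta))$.

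I expect no serious obstacle here: the entire argument is the bookkeeping of unwinding the definition of a $\nu$-measurement in $R(\Delta)$ as a splitting operator for a specific pair of sets. The only points requiring care are (a) verifying the monotonicity of conditions (i)--(ii) under shrinking $X^+,X^-$, and (b) being explicit that the set-difference computation in part 2 really collapses $R(\Delta)-(X\cap R(\Delta))$ to $R(\Delta)-X$. Once these are in hand, the equivalences and the equalities of measures are immediate, with the underlying well-definedness of the measures imported from \cite{Lutz:RBM}.
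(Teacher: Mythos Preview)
Your argument is correct. The paper does not supply its own proof of this lemma: it simply states the result with the attribution to \cite{Lutz:RBM} and moves on, so there is nothing to compare beyond noting that your definition-unwinding via the monotonicity of conditions (i)--(ii) is exactly the intended (and standard) justification.
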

\begin{theorem}[Lutz \cite{Lutz:RBM}]\label{th:5_5}
Let $\Delta$ be a resource bound. Let $X\subseteq \C$.
\begin{enumerate}
\item
If $X$ is $\nu$-measurable in $R(\Delta)$, then $\nu(X| R(\Delta))$ is $\Delta$-computable.
\item
If $X$ is $\Delta$-$\nu$-measurable, then $\nu_\Delta(X)$ is $\Delta$-computable.
\end{enumerate}
\end{theorem}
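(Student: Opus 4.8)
The plan is to show that the real number $\Phi^+_\infty = \nu(X\mid R(\Delta))$ is approximated to within $2^{-s}$ by the single value $\Phi^+_{s+1}(\mathbf{1})(\lambda)$, and that this value is $\Delta$-computable because $\Phi$ is. First I would pin down the limit behavior of the sequence $r\mapsto\Phi_r^+(\mathbf{1})(\lambda)$. Since $\Phi^+_\infty$ is by definition the infimum of this sequence, we already have $\Phi_r^+(\mathbf{1})(\lambda)\geq\Phi^+_\infty$ for every $r$, and the work is to obtain a matching upper bound with an explicit rate. For this I would combine the ``budget'' inequality (iii) in the definition of a splitting operator with a conservation argument supplied by the earlier results.

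Concretely, let $\Phi$ be a $\nu$-measurement of $X$ in $R(\Delta)$, i.e.\ a $\Delta$-$\nu$-splitting operator for $(R(\Delta)\cap X, R(\Delta)-X)$. Since $S^1[\mathbf{1}]=\C$, properties (i) and (ii) say that $\Phi_j^+(\mathbf{1})$ covers $R(\Delta)\cap X$ and $\Phi_k^-(\mathbf{1})$ covers $R(\Delta)-X$; by Lemma \ref{lm:5_2} the martingale $\mathrm{ADD}(\Phi_j^+(\mathbf{1}),\Phi_k^-(\mathbf{1}))$ then covers all of $R(\Delta)$, so the Measure Conservation Theorem \ref{th:5_1} (with $w=\lambda$) gives $\Phi_j^+(\mathbf{1})(\lambda)+\Phi_k^-(\mathbf{1})(\lambda)\geq\nu(\lambda)=1$ for all $j,k$. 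Taking infima yields $\Phi^+_\infty+\Phi^-_\infty\geq 1$. On the other hand, property (iii) with $d=\mathbf{1}$ gives $\Phi_r^+(\mathbf{1})(\lambda)+\Phi_r^-(\mathbf{1})(\lambda)\leq 1+2^{-r}$, and since $\Phi_r^-(\mathbf{1})(\lambda)\geq\Phi^-_\infty\geq 1-\Phi^+_\infty$ we obtain
\[\Phi^+_\infty\leq\Phi_r^+(\mathbf{1})(\lambda)\leq\Phi^+_\infty+2^{-r}.\]
Thus the fixed-oracle, fixed-argument sequence converges to $\Phi^+_\infty$ at the explicit rate $2^{-r}$.

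It remains to turn this into a $\Delta$-computation of the real number $\Phi^+_\infty$. Because $\Phi$ is $\Delta$-computable there is a functional $\widehat{\Phi^+}\in\Delta$ that, given the canonical computation of a martingale $d$, an integer $r$, a string, and a precision $t$, returns a rational within $2^{-t}$ of $\Phi_r^+(d)(\lambda)$. The unit martingale $\mathbf{1}$ has an exact, polynomial-time (hence $\BFF$) canonical computation $\hat{\mathbf{1}}$, and the maps $s\mapsto s+1$ and $s\mapsto\lambda$ are $\BFF$; substituting these into $\widehat{\Phi^+}$ is a functional composition with $\BFF$ objects, so by closure of $\Delta$ under $\BFF$ the type-one function
\[g(s)=\widehat{\Phi^+}\bigl(\hat{\mathbf{1}},\,s+1,\,\lambda,\,s+1\bigr)\]
lies in $\Delta$. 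Combining the approximation error $2^{-(s+1)}$ with the convergence error $2^{-(s+1)}$ from the displayed bound gives $|g(s)-\Phi^+_\infty|\leq 2^{-s}$, so $g$ is a $\Delta$-computation of $\nu(X\mid R(\Delta))$, proving part 1.

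For part 2, I expect the cleanest route is to invoke Lemma \ref{lm:lutz4_5}: if $X$ is $\Delta$-$\nu$-measurable then it is $\nu$-measurable in $R(\Delta)$ with $\nu_\Delta(X)=\nu(X\mid R(\Delta))$, whence part 1 immediately yields that $\nu_\Delta(X)$ is $\Delta$-computable. (Alternatively one repeats the argument above verbatim, replacing Theorem \ref{th:5_1} by the classical conservation bound for a martingale covering all of $\C$.) The main obstacle is the convergence estimate of the second paragraph: everything hinges on having simultaneously the upper budget inequality (iii) and the conservation lower bound $\Phi^+_\infty+\Phi^-_\infty\geq 1$, so that the single sequence $\Phi_r^+(\mathbf{1})(\lambda)$ is squeezed to within $2^{-r}$ of its infimum; once that explicit rate is in hand, the $\Delta$-computability is a routine substitution justified purely by closure under $\BFF$.
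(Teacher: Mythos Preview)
Your proposal is correct and follows essentially the same approach as the paper: both define the approximating function $s\mapsto\widehat{\Phi^+}_{s+1,s+1}(\mathbf{1})(\lambda)$, argue its $\Delta$-computability via closure under $\BFF$, and reduce part 2 to part 1 through Lemma~\ref{lm:lutz4_5}. The only difference is that you spell out the $2^{-r}$ convergence rate explicitly (via Theorem~\ref{th:5_1} and property (iii)), whereas the paper simply cites the corresponding result in \cite{Lutz:RBM} for that measure-theoretic step.
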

\begin{proof}
We prove 1, since 2 follows by Lemma \ref{lm:lutz4_5}.

Let $S_\nu$ be the set of all $\nu$-splitting operators.  Note that
$\mathbf{1}$ is $\BFF\subseteq\Delta$.  Let $\Phi$ be a
$\nu$-measurement for $X$ in $R(\Delta)$.  Then $\Phi\in\Delta$. Since
$\Delta$ is a resource bound and closed under $\BFF$, $r\mapsto
\hat{\Phi}_{r,r}^+(\mathbf{1})$ is in $\Delta$.  Let
\[f(r) = \hat{\Phi}_{r+1,r+1}^+(\mathbf{1})(\lambda).\]
Then $f\in\Delta$ and by the proof of Theorem 4.7 in \cite{Lutz:RBM},
$f$ is the canonical computation of the real value of $\nu(X|R(\Delta))$.
\qed\end{proof}

We now proceed towards the proof that cylinders are measurable. First,
we prove lemmas that are useful for the proof of Theorem
\ref{th:5_11}.

\begin{lemma}[Regularity Lemma]\label{lm:lutz3.5}
Let $\Delta$ be a resource bound. There is a functional
\[
\func{\Lambda}{{\cal D}_\nu}{{\cal D}_{\nu}}
\]
with the following properties.
\begin{enumerate}[1.]
\item For all $d \in {\cal D}_\nu$, $\Lambda(d)$ is a regular
  $\nu$-martingale such that $\Lambda(d)(\lambda) = d(\lambda)$ and
  $S^1[d] \subseteq S^1[\Lambda(d)]$.
\item $\Lambda(\mathbf{1}) = \mathbf{1}$.
\item If $\nu$ is a $\Delta$-probability measure on $\C$, then
  $\Lambda$ is $\Delta$-computable.
\end{enumerate}
\end{lemma}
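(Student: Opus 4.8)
The plan is to take $\Lambda$ to be the \emph{stopping} (or \emph{savings}) operator: $\Lambda(d)$ runs $d$ unchanged until the first prefix at which the capital reaches $1$, and thereafter holds that capital constant. Formally, for $w\in\strings$ let $t_w$ be the $\prefix$-shortest prefix $v\prefix w$ with $d(v)\geq 1$, if one exists, and set $\Lambda(d)(w)=d(t_w)$; if no prefix of $w$ has capital $\geq 1$, set $\Lambda(d)(w)=d(w)$. Since $\nu(w)=\nu(w0)+\nu(w1)$, holding the capital constant (taking $\Lambda(d)(w0)=\Lambda(d)(w1)=\Lambda(d)(w)$) is a legal $\nu$-martingale move, and at an uncrossed node $\Lambda(d)$ simply inherits the values of $d$; hence $\Lambda(d)$ is a $\nu$-martingale. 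It agrees with $d$ at $\lambda$ (whether or not $d(\lambda)\geq 1$), it never drops below $1$ once it reaches $1$, which is the regularity property of \cite{Lutz:RBM}, and if $d$ covers $A$ then $\Lambda(d)=d(t_A)\geq 1$ on every extension of $t_A$, giving $S^1[d]\subseteq S^1[\Lambda(d)]$. Finally $\mathbf{1}$ already has capital $1$ at $\lambda$, so $\Lambda(\mathbf{1})=\mathbf{1}$. These are exactly assertions 1 and 2; they are the measure-theoretic content of Lemma 3.5 of \cite{Lutz:RBM}, to which I would defer.

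The real task is assertion 3: showing that when $\nu$ is a $\Delta$-probability measure, $\Lambda\in\Delta$. Since $\Delta$ is closed under $\BFF$, it suffices to exhibit a computation of $\Lambda(d)$ assembled, by the $\BFF$ closure operations, from data already available: the application functional $\Ap$ (to query the canonical computation $\hat d$, supplied as the type-$1$ argument), the $\Delta$-computable data of $\nu$ together with its weak-positivity bound $l$, and the polynomial-time string and dyadic-rational primitives in $\Tone$. To compute $\hat{\Lambda(d)}_r(w)$ I would scan the prefixes of $w$ in increasing length, maintaining the running approximate capital together with a Boolean flag recording whether the threshold $1$ has yet been crossed, and output either the frozen capital or $\hat d_{r+1}(w)$ accordingly. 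Because $w$ is consumed one bit at a time and the recursion depth is exactly $|w|$, this scan is a \emph{limited recursion on notation} on $w$: at stage $k$ it calls $\Ap$ on the length-$k$ prefix of $w$, performs a $\BFF$ dyadic comparison against $1$, and updates the flag and value, with the limited-recursion bound $K$ taken to be a $\BFF$ term in $\Ap(\hat d,\cdot)$ and $r$. Composing these pieces by functional composition, expansion, and limited recursion on notation places $\Lambda$ in $\BFF(\{\hat d,\ \nu\text{-data}\})\subseteq\Delta$.

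The main obstacle is that the stopping rule keys on the \emph{exact} predicate $d(v)\geq 1$, which is discontinuous in $d$ and hence not decidable from the finite-precision values $\hat d_s(v)$ at the crossing value $d(v)=1$. The resolution, following \cite{Lutz:RBM}, is to make the decision robust in the precision parameter and to verify that the object so computed is still a bona fide regular $\nu$-martingale with $S^1[d]\subseteq S^1[\Lambda(d)]$; here the $\Delta$-computability of $\nu$ and the bound $l$ are what let us pass between martingale values and $\nu$-weighted capitals and bound the per-step precision loss, so that the precision demanded at the top of the recursion exceeds that at the leaves by only a $\BFF$-bounded amount. My only addition to Lutz's argument is the bookkeeping that every primitive used in this robust scan, namely prefix extraction, dyadic comparison, the precision offset determined by $l$, and the oracle calls $\Ap(\hat d,\cdot)$, is a basic feasible operation, so that closure of $\Delta$ under $\BFF$ yields $\Lambda\in\Delta$ with no appeal to the particular resource bound.
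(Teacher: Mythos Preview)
Your stopping/savings construction is mathematically fine for properties 1 and 2, but it cannot be made to work for property 3, and the gap is not a matter of bookkeeping. The functional $\Lambda$ you define is \emph{discontinuous} in $d$: take $d(\lambda)=\tfrac12$, $d(0)=1$, $d(00)=2$; then $\Lambda(d)(00)=1$. Perturb to $d'(0)=1-\epsilon$, $d'(00)=2-2\epsilon$; now no prefix of $00$ crosses $1$, so $\Lambda(d')(00)=2-2\epsilon$. Thus $\|\Lambda(d)-\Lambda(d')\|\approx 1$ while $\|d-d'\|\leq 2\epsilon$. A type-2 computation of $\Lambda$ has access only to the canonical approximation $\hat d$, so no $\BFF$ (or even computable) functional can realize this $\Lambda$. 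Your proposed fix, ``make the decision robust in the precision parameter,'' does not save the approach: any approximate threshold test will, at nodes where $d$ hovers near $1$, sometimes freeze and sometimes not, and the resulting object is no longer a $\nu$-martingale (the averaging identity fails at the node where the decision flips). There is no continuous deformation of the hard freeze that simultaneously preserves the martingale identity, regularity, and $S^1[d]\subseteq S^1[\Lambda(d)]$.

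This is precisely why the paper does \emph{not} use the stopping operator. Its $\Lambda$ is built from the Robin Hood transformation $rh_\alpha$: at each node $w$ with $\alpha=\nu(w0\mid w)$, one applies $rh_\alpha$ to the pair $(\Lambda(d)(w)-d(w)+d(w0),\,\Lambda(d)(w)-d(w)+d(w1))$. The point of $rh_\alpha$ is that it is a \emph{continuous} piecewise-linear map that preserves the $\alpha$-weighted average (so the result is a $\nu$-martingale), sends the half-plane $\{m_\alpha\geq 1\}$ into $[1,\infty)^2$ (so regularity holds), and never drops a coordinate below $\min\{1,\text{its input}\}$ (so $S^1[d]\subseteq S^1[\Lambda(d)]$). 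Continuity is what makes $\Delta$-computability go through; the paper spends Lemmas \ref{lm:5_6}--\ref{lm:5_8} (the Pasting Lemma and the Robin Hood lemmas) on exactly this. The weak-positivity bound $l$ enters not to ``bound precision loss'' in a threshold test, but to decide in $\Delta$ which of the cases $\nu(w)=0$ or $\nu(wb\mid w)\in\{0,1\}$ applies and to compute the parameter $\alpha=\nu(w0\mid w)$ feeding $rh_\alpha$. Your write-up would need to replace the stopping operator by this continuous redistribution and then carry out the $\BFF$ assembly you sketch on that construction instead.
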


The proof of the regularity lemma proceeds by defining a type-2
functional. We establish the computability properties of this
functional, first.

%% Satyadev - Proof
\begin{lemma}[Pasting Lemma]\label{lm:5_6}
\label{thm:pasting_lemma}
Let $\Delta$ be a resource bound, $p, q$ be natural numbers and $f, g,
k: \R^{p} \to \R^{q}$ be uniformly continuous, $\Delta$ computable
functions. Let $h: \R^{p} \to \R^{q}$ be the piecewise function
defined by
\begin{align*}
h(\vec{r}) &= \begin{cases} f(\vec{r}) &\text{ if } k(\vec{r}) \ge 0\\
g(\vec{r}) &\text{ otherwise.}
\end{cases}
\end{align*}
If $h$ is continuous everywhere, then it is $\Delta$-computable.
\end{lemma}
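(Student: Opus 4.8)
The plan is to exhibit $h$ as a single $\Delta$-computable functional built by composing the given computations of $f$, $g$, and $k$ with a branch-selection step, and to use the continuity of $h$ to certify that the selection is correct even where the test $k(\vec r)\ge 0$ cannot be decided from finite-precision data. Since $\Delta$ is a resource bound it is closed under $\BFF$; in particular it contains all polynomial-time type-one operations (rational arithmetic, comparison, conditional branching) and is closed under functional composition. Hence the only genuinely new content is the choice of branch, and everything else will be assembled for free from the closure properties. By hypothesis there are functionals $F,G,K\in\Delta$ computing $f,g,k$ in the sense of the canonical computation, i.e.\ $F(\hat{\vec r},n)$ is a rational within $2^{-n}$ of $f(\vec r)$, and similarly for $G$ and $K$ (treating the test $k(\vec r)\ge0$ as a real comparison, so that it suffices to approximate the relevant real).

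The analytic heart of the proof is a quantitative reading of the continuity hypothesis. First I would record the pointwise fact that if $\vec r$ lies on the genuine boundary of the region $\{k\ge0\}$ --- that is, $k(\vec r)=0$ and $\vec r\in\overline{\{k<0\}}$ --- then continuity of $h$ forces $f(\vec r)=g(\vec r)$, since the one-sided limits of $h$ at $\vec r$ through the two regions are $f(\vec r)$ and $g(\vec r)$, and both must equal $h(\vec r)$. The goal is then to upgrade this to a \emph{dichotomy with a modulus}: for every precision $n$ there is a $\Delta$-computable threshold $\delta=\delta(n)>0$ such that whenever the approximation to $k(\vec r)$ is inconclusive (within $\delta$ of $0$), outputting $f(\vec r)$ is already correct to within $2^{-n}$ --- either because $k(\vec r)\ge 0$, so $h(\vec r)=f(\vec r)$ outright, or because $\vec r$ lies within the uniform-continuity radius of a boundary point, where $f$ and $g$ agree, so $f(\vec r)$ and $h(\vec r)=g(\vec r)$ differ by at most $2^{-n}$. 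This is exactly where the hypothesis of \emph{uniform} (not merely pointwise) continuity of $f,g,k$ enters: it supplies a single radius, independent of $\vec r$, governing how far the boundary agreement of the two pieces propagates.

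Granting the modulus, the functional is routine. On input $\hat{\vec r}$ and precision $n$, compute $\tilde k=K(\hat{\vec r},m)$ with $2^{-m}<\delta(n+1)/2$, together with $\tilde f=F(\hat{\vec r},n+1)$ and $\tilde g=G(\hat{\vec r},n+1)$. If $\tilde k\ge \delta(n+1)/2$ then $k(\vec r)>0$ and we return $\tilde f$; if $\tilde k\le-\delta(n+1)/2$ then $k(\vec r)<0$ and we return $\tilde g$; otherwise $|k(\vec r)|<\delta(n+1)$ and we return $\tilde f$, which by the dichotomy is within $2^{-n}$ of $h(\vec r)$. In every branch the output is a rational within $2^{-n}$ of $h(\vec r)$, so this is a computation of $h$ in the paper's sense. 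Because the whole procedure is a composition of the $\Delta$-functionals $F,G,K$ and the modulus $\delta$ with polynomial-time rational comparison and branching --- all available under $\BFF$ closure --- the resulting functional lies in $\Delta$, and so $h$ is $\Delta$-computable.

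The main obstacle is precisely the region where $k(\vec r)$ sits at or extremely near $0$: there the branch $k(\vec r)\ge0$ is undecidable from any finite amount of data, so correctness cannot come from deciding the sign and must instead be extracted entirely from the continuity of $h$. Turning the pointwise boundary consequence ``$f=g$ where $\{k<0\}$ is approached'' into a single modulus valid \emph{uniformly} over the noncompact domain $\R^{p}$ is the delicate step, and it is genuinely where the argument can fail if only plain continuity is assumed; the uniform-continuity hypothesis is what tames the noncompactness and lets the inconclusive case be resolved with a $\Delta$-bounded, rather than unbounded, amount of work.
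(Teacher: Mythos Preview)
Your approach is essentially the paper's: approximate $k$, branch on its sign, and invoke continuity of $h$ to control the inconclusive case. The paper uses a two-way branch (output $\hat f$ if the approximation is nonnegative, else $\hat g$) rather than your three-way one, but the underlying idea is the same.

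There is, however, a genuine gap in your dichotomy step, and the paper's proof shares it. You assert that whenever $k(\vec r)<0$ with $|k(\vec r)|<\delta(n)$, the point $\vec r$ must lie within a uniform-continuity radius of a boundary point of $\{k\ge0\}$, so that $f(\vec r)$ and $g(\vec r)$ are close. Uniform continuity of $k$ does not give this converse direction: small $|k(\vec r)|$ need not put $\vec r$ near a zero of $k$. Take $p=q=1$, $k(x)=-1/(1+x^{2})$, $f\equiv1$, $g\equiv0$. All three are uniformly continuous and polynomial-time computable; $k<0$ everywhere, so $h=g\equiv0$ is continuous; yet $|k(x)|\to0$ as $|x|\to\infty$ while $\{k\ge0\}=\varnothing$ has no boundary and $|f(x)-g(x)|=1$ for every $x$. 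No $\Delta$-computable modulus $\delta(n)$ with your claimed property exists here, and your algorithm, seeing $\tilde k$ in the inconclusive zone for large $|x|$, outputs $\tilde f\approx1$ instead of $h(x)=0$. The paper's parallel step asserts that ``$k$ changes sign in the $2^{-[m(n)+1]}$ neighborhood of $\vec r$'' and then uses connectedness to locate a zero of $k$; in this example $k$ never changes sign, so that step fails identically. In the paper's actual use of the lemma the separating functions $k$ are affine (the Robin Hood construction), so they genuinely change sign near any point where they are small, and the argument goes through; as a general statement about arbitrary uniformly continuous $k$, neither proof closes without an additional hypothesis of that kind.
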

\begin{proof}
Let $\hat{f}, \hat{g}, \hat{k}$ be the respective computations of
appropriate types of $f, g,$ and $k$. Let $m$ be the maximum of the
modulus functions of $f, g$ and $k$. (For the modulus function to be
type-1, we need uniform continuity.) We define the following
functional $\hat{h}$ and prove that it is a $\Delta$-
computation of $h$.
\begin{align*}
\hat{h}(\vec{\hat{r}}, 0^{n}) &= \begin{cases}
  \hat{f}(\vec{\hat{r}})(0^{m(n)+1}) &\text{ if }
  \hat{k}(\vec{\hat{r}})(0^{m(n)+1}) \ge 0\\
\hat{g}(\vec{\hat{r}})(0^{m(n)}+1) &\text{ otherwise. }
\end{cases}
\end{align*}

If $k(\vec{r})$ and $\hat{k}(\vec{\hat{r}})(0^{m(n)+1})$ are of the same
sign, then $|h(r) - \hat{h}(\vec{\hat{r}}, 0^n)| < 2^{-n}$ by the
property of the witnesses $\hat{f}$ and $\hat{g}$.

If $k(\vec{r}) < 0$ and $\hat{k}(\vec{\hat{r}})(0^{m(n)+1}) \ge 0$, we
have that $\hat{k}(\vec{\hat{r}})(0^{m(n)}) < 2^{-(n+1)}$ by the
approximation property of $k$. We also have
$\hat{h}(\vec{\hat{r}},0^{n}) = \hat{f}(\vec{\hat{r}})(0^{m(n)+1})$, and
$h(\vec{r}) = g(\vec{r})$.
Thus,
\begin{align*}
|\hat{h}(\vec{\hat{r}}, 0^n) - h(\vec{r})| &= |\hat{h}(\vec{\hat{r}},
 0^n) - g(\vec{r})|\\
&= |\hat{f}(\vec{\hat{r}})(0^{m(n)+1}) - g(\vec{r})|.
\end{align*}
Since $k$ changes sign in the $2^{-[m(n)+1]}$ neighborhood of
$\vec{r}$, there is a point $r_1$ in it where $k(\vec{r_1}) = 0$ (This
follows from the fact that $\R^{q}$ is a connected set.). Since $h$ is
continuous at $r_1$, we can conclude $f(r_1) = g(r_1)$.

Thus,
\begin{align*}
|\hat{f}(\vec{\hat{r}}) (0^{m(n)+1}) - g(\vec{r})| &\le
|\hat{f}(\vec{\hat{r}}  (0^{m(n)+1}) - f(\vec{r_1})| +
 |f(\vec{r_1}) - g(\vec{r})|\\
&= |\hat{f}(\vec{\hat{r}}  (0^{m(n)+1}) - f(\vec{r_1})| +
 |g(\vec{r_1}) - g(\vec{r})|\\
&\le 2^{-(n+1)} + 2^{-(n+1)}.
\end{align*}

The case when $k(\vec{r}) \ge 0$ but $\hat{k}(\vec{\hat{r}})(0^{m(n)})
< 0$ is similar.
\qed\end{proof}

We can now confirm that the ``Robin Hood function'' is $\Delta$-time
computable.

\begin{lemma}\label{lm:5_7}
Let $\Delta$ be a resource bound and let $\alpha$
be a $\Delta$-computable real number. Then the function
$m_{\alpha}: \R^2 \to \R$ defined by
\begin{align*}
m_{\alpha} = \alpha s + (1-\alpha) t
\end{align*}
is a $\Delta$-computable uniformly continuous function.
\end{lemma}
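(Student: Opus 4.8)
The plan is to treat $m_\alpha$ as a type-2 functional that receives the canonical computations $\hat s,\hat t$ of its two real arguments $s,t$ — each a map sending a unary precision $0^r$ to a rational with $|\hat s(0^r)-s|\le 2^{-r}$ — together with an output precision $0^n$, and to use the hypothesis that $\alpha$ is $\Delta$-computable in the concrete form of a fixed computation $\hat\alpha\in\Delta$ with $|\hat\alpha(0^r)-\alpha|\le 2^{-r}$. I will establish the two assertions separately: uniform continuity with a \emph{type-1} modulus, and $\Delta$-computability.

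Uniform continuity is the easy half and is purely the affineness of $m_\alpha$. For all $(s,t),(s',t')\in\R^2$ we have $m_\alpha(s,t)-m_\alpha(s',t')=\alpha(s-s')+(1-\alpha)(t-t')$, so $m_\alpha$ is Lipschitz with constant $C=|\alpha|+|1-\alpha|$. Writing $A=|\hat\alpha(0^1)|+1\ge|\alpha|$, we have $C\le 2A+1$, and the function $m(n)=n+\ceil{\log_2(2A+1)}$ is a type-1 modulus of continuity: if the inputs agree to precision $2^{-m(n)}$ then the value of $m_\alpha$ is pinned down to precision $2^{-n}$. Since $A$ is obtained from $\hat\alpha$ by one application of $\Ap$ and $\BFF$ rational arithmetic, $m$ is $\BFF$, hence $\Delta$-computable; this is exactly the type-1 modulus that the Pasting Lemma (Lemma~\ref{lm:5_6}) will require of $m_\alpha$.

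For $\Delta$-computability I define the computation
\[
\hat m_\alpha(\hat s,\hat t,0^n)=\hat\alpha(0^p)\cdot\hat s(0^p)+\bigl(1-\hat\alpha(0^p)\bigr)\cdot\hat t(0^p),
\]
where the internal precision $p$ is chosen from magnitude bounds on the operands. Concretely I first read off the integers $A=|\hat\alpha(0^1)|+1$, $S=|\hat s(0^1)|+1$, $T=|\hat t(0^1)|+1$, which satisfy $|\alpha|\le A$, $|s|\le S$, $|t|\le T$ because precision-$1$ approximations err by at most $2^{-1}<1$, and then set $p=n+\ceil{\log_2(2A+S+T+3)}$. Splitting each product as $\alpha s-\hat\alpha(0^p)\hat s(0^p)=\alpha\bigl(s-\hat s(0^p)\bigr)+\hat s(0^p)\bigl(\alpha-\hat\alpha(0^p)\bigr)$ and likewise for the $(1-\alpha)t$ term, a routine estimate bounds the total error by $(2A+S+T+3)\,2^{-p}\le 2^{-n}$, so $\hat m_\alpha$ is indeed a computation of $m_\alpha$.

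It remains to place $\hat m_\alpha$ in $\Delta$, and here lies the only real subtlety. Every ingredient is a $\BFF$ operation applied to the arguments $\hat s,\hat t$ and the hardwired parameter $\hat\alpha\in\Delta$: the bounds $A,S,T$ use $\Ap$ and polynomial-time rational arithmetic; the internal precision $p$ is a $\BFF$ function of $n$ and the lengths of $A,S,T$; the values $\hat\alpha(0^p),\hat s(0^p),\hat t(0^p)$ are instances of the application functional $\Ap$; and the two multiplications and the addition are polynomial-time rational operations. Hence $\hat m_\alpha\in\BFF(\{\hat\alpha\})\subseteq\BFF(\Delta)=\Delta$, using that $\Delta$ is a resource bound and therefore closed under $\BFF$. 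The main obstacle is precisely the multiplication of reals: the internal precision needed to form $\alpha s$ (and $(1-\alpha)t$) to within $2^{-n}$ depends on the magnitudes of the factors, and one must produce those magnitude bounds $\Delta$-computably. This is exactly what the device of reading $|\hat s(0^1)|+1\ge|s|$ (and similarly for $t$ and $\alpha$) off the low-precision approximation accomplishes, keeping the entire precision computation inside $\BFF$.
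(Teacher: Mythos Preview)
The paper states this lemma without proof, presumably regarding it as a routine verification. Your argument is correct and supplies exactly the details the paper omits: uniform continuity via the Lipschitz bound $|\alpha|+|1-\alpha|$ on the affine map, and $\Delta$-computability by building an explicit computation $\hat m_\alpha$ out of $\Ap$, polynomial-time rational arithmetic, and the hardwired parameter $\hat\alpha\in\Delta$, then invoking $\BFF(\Delta)=\Delta$. The one point worth tightening is purely cosmetic: you call $A,S,T$ ``integers'' though $|\hat\alpha(0^1)|+1$ etc.\ are rationals; either take ceilings or simply keep them as rationals, since the $\log_2$ bound and the error estimate go through unchanged.
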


\begin{lemma}\label{lm:5_8}
Let $\alpha$ be a $\Delta$-computable real number in $(0,1)$,
$H_\alpha$ be the half-plane
$$ H_\alpha = \{ (x,y) |\quad x, y \in \R, m_{\alpha}(x,y) \ge 1\} $$
and $D_\alpha = [0, \infty)^2 \cup H_\alpha$.  Then the ``Robin Hood
function'' $rh_\alpha: D_\alpha \to [0,\infty)^2$ defined by

\begin{align*}
rh_\alpha (s,t) &=  \begin{cases}
(s,t) &\text{ if } (s,t) \in [0,1]^2\\\\
(m_{\alpha}(s,t), m_{\alpha}(s,t)) &\text{ if } m_{\alpha}(s,t) \ge
  1\\\\
\left(1, \frac{m_{\alpha}(s-1,t)}{1-\alpha}\right) &\text{ if } 
 m_{\alpha}(s,t) < 1, s \ge 1, t \ge 0\\\\
\left(\frac{m_{\alpha}(s,t-1)}{\alpha}, 1\right) &\text{ if } 
 m_{\alpha}(s,t) < 1, t \ge 1, s \ge 0.
\end{cases}
\end{align*}

is $\Delta$-computable.
\end{lemma}
\begin{proof}
The Robin Hood function is a continuous piecewise linear mapping from
the Euclidean plane to itself, and we have $\alpha \in \Delta$. Hence
each component of the Robin Hood function is $\Delta$-computable. The
regions of the Robin Hood function are defined by the lines
\begin{enumerate}
\item $y = 1$.
\item $x = 1$.
\item $m_{\alpha}(s,t) - 1 = 0$.
\item $y = 0$ and $x = 0$.
\end{enumerate}

All of these are linear functions, hence all of them are
$\Delta$-computable uniformly continuous functions. Inside
$D_{\alpha}$, $0 \le s \le 1$ if and only if $(1-s)s \ge 0$. Define
$\textnormal{\scshape inside} : D_{\alpha} \to \R$ by
%%$$\textnormal{\scshape inside} (s, t) = (1-s)s(1-t)t
%%(1-m_{\alpha}(s,t)).$$

$$\textnormal{\scshape inside} (s,t) = \min\{(1-s)s, (1-t)t\}.$$

We conclude that $\textnormal{\scshape
inside}(s,t) \ge 0$ if and only if $(s,t) \in [0,1]^2$. Also, in
$D_{\alpha}$, a point $(s,t)$ is inside the triangle defined by the
lines $x = 1$, $y = 0$ and $m_{\alpha}(x,y) = 1$ if and only if
$(s-1)t(1-m_{\alpha}(s,t)) \ge 0$. It follows that each of the
following functions is $\Delta$-computable, by Lemma
\ref{thm:pasting_lemma}.
\begin{align*}
\\\\
rh_{\alpha}(s,t) &= \begin{cases}
 (s,t) &\text{ if } \textnormal{\scshape inside}(s,t) \ge 0\\
 rh1_{\alpha}(s,t) &\text{ otherwise. }\\
\end{cases}
\\\\
rh1_{\alpha}(s,t) &= \begin{cases}
 (m_{\alpha}(s,t),m_{\alpha}(s,t)) &\text{ if } m_{\alpha}(s,t) \ge 1\\
 rh2_{\alpha}(s,t) &\text{ otherwise. }\\
\end{cases}
\\\\
rh2_{\alpha}(s,t) &= \begin{cases}
 \left(1, \frac{m_{\alpha}(s-1,t)}{1-\alpha}\right) &\text{ if }
  (s-1)t(1-m_{\alpha}(s,t))\ge 0\\
 rh3_{\alpha} &\text{ otherwise. }\\
\end{cases}
\\\\
rh3_{\alpha}(s,t) &= \begin{cases}
\left(\frac{m_{\alpha}(s,t-1)}{\alpha}, 1\right) &\text{ if } (t-1)s(1 -
  m_{\alpha}(s,t)) \ge 0\\
(1,1) &\text{ otherwise. }
\end{cases}
\\\\
\end{align*}
Thus the Robin-Hood function is computable.
\qed\end{proof}

%% \begin{proof}
%% The functional $\hat{\Lambda}: \mathcal{D}_\nu \times \Sigma^* \times
%% 0^\N \to \Q$ is a BFF computation of $\Lambda$.
%% \begin{enumerate}
%% \item $\hat{\Lambda}(\hat{d})(\lambda,0^n) = \hat{d}(\lambda, 0^n)$.
%% \item If $\hat{\nu}(w) < 2^{-l(|w|)}$ or $\hat{\nu}(w0,0^n) <
%%   2^{-l|w0|}$ or $\hat{\nu}(w1,0^n) < 2^{-l|w1|}$, then
%%   $\hat{\lambda}(d)(wb, 0^{n}) = d(wb, 0^{n})$.
%% \item Otherwise, $\hat{\Lambda}(\hat{d})(wb,0^n) =
%%   \hat{rh}^{(b)}_{\hat{\nu}(w0|w)} (\hat{g}_0(w,0^n), g_1(w,0^n))$.
%% \end{enumerate}

%% Note that if $\nu$ is a strongly positive probability measure, then
%% $\nu(wb|w) = 1$ if and only if $\nu(w\overline{b} | w) = 0$. Assuming
%% $\nu(w) > 0$, we have that $\nu(w\overline{b} | w) = 0$ if and only if
%% $\nu(w\overline{b}) = 0$, i. e. $\nu(\overline{b}) < 2^{-l(|w|)}$. Thus
%% step 2 correctly approximates step 2 of $\Lambda$.
%% \end{proof}
%% \end{document}

The following essential properties of the Robin Hood function
$rh_\alpha$ are routine to verify.

\begin{enumerate}
\item[1.] The transformation $rh_\alpha$ is a continuous, piecewise
linear mapping from $D_\alpha$ into $[0, \infty)^2$.
\item[2.] The transformation $rh_\alpha$ preserves $\alpha$-weighted
averages, i.e., $m_\alpha(rh_\alpha(s, t)) = m_\alpha(s, t)$ for all
$(s, t) \in D_\alpha$.
\item[3.] The transformation $rh_\alpha$ maps $H_\alpha$ into $[1,
\infty)^2$.  That is, if the average $m_\alpha(s, t)$ is at least $1$,
then $rh_\alpha$ ``steals from the richer and gives to the poorer'' of
$s$ and $t$ so that both $rh^{(0)}_\alpha(s, t)$ and
$rh^{(1)}_\alpha(s, t)$ are at least $1$.
\item[4.] For all $(s, t) \in D_\alpha$, $rh^{(0)}_\alpha(s, t) \geq
\min \{1, s \}$ and $rh^{(1)}_\alpha(s, t) \geq \min \{1, t \}$.  That
is, the transformation $rh_\alpha$ never ``steals'' more than the
excess above $1$.
\item[5.] The transformation $rh_\alpha$ leaves points of $[0, 1]^2$ unchanged.
\end{enumerate}

%% Satyadev

A $\nu$-martingale is {\it regular} if, for all $v, w \in \strings$, if
$\nu(v) \geq 1$ and $v \sqsubseteq w$, then $\nu(w) \geq 1$.  It is
often technically convenient to have a uniform means of ensuring that
martingales are regular.  The following lemma provides such a
mechanism.  Let $\Delta$ be a resource bound, as specified in
section \ref{se:2}, and let $\nu$ be a probability measure on $\C$.

%% Satyadev - Proof
\begin{proof}[Proof of Lemma \ref{lm:lutz3.5}]
Using the Robin Hood function, we define the functional
$\func{\Lambda}{{\cal D}_\nu}{{\cal D}_\nu}$ as follows.  For $d \in
{\cal D}_\nu$, we define the $\nu$-martingale $\Lambda(d)$ by the
following recursion.  (In all clauses, $w \in \strings$ and $b \in
\{0, 1\}.$)
\begin{enumerate}
\item[(i)] $\Lambda(d)(\lambda) = d(\lambda)$.
\item[(ii)] If $\nu(w) = 0$ or $\nu(wb \mid w) \in \{0, 1\}$, then
$\Lambda(d)(wb) = \Lambda(d)(w)$.
\item[(iii)] If $\nu(w) > 0$ and $0 < \nu(wb \mid w) < 1$, then
\[
\Lambda(d)(wb) = rh^{(b)}_{\nu(w0 \mid w)} (g_0(w), g_1(w)),
\]
where $g_b(w) = \Lambda(d)(w) - d(w) + d(wb)$.
\end{enumerate}

Let $\hat{d}$ be a computation of $d$ and $\hat{\nu}: \strings \times
0^{\N} \to \Q$ be the function testifying that $\nu$ is $\Delta$
computable. Since $\nu$ is $\Delta$-computable, we have a function $l:
\N \to \N$ in $\Delta$ such that for every string $w$, $\nu(w) >
2^{-l(|w|)}$. The functional $\hat{\Lambda}: \mathcal{D}_\nu \times
\strings \times 0^\N \to \Q$ is a BFF computation of $\Lambda$.
\begin{enumerate}
\item $\hat{\Lambda}(\hat{d})(\lambda,0^n) = \hat{d}(\lambda, 0^n)$.
\item If $\hat{\nu}(w) < 2^{-l(|w|)}$ or $\hat{\nu}(w0,0^n) <
  2^{-l|w0|}$ or $\hat{\nu}(w1,0^n) < 2^{-l|w1|}$, then
  $\hat{\Lambda}(d)(wb, 0^{n}) = d(wb, 0^{n})$.
\item Otherwise, $\hat{\Lambda}(\hat{d})(wb,0^n) =
  \hat{rh}^{(b)}_{\hat{\nu}(w0|w)} (\hat{g}_0(w,0^n), g_1(w,0^n))$.
\end{enumerate}

Note that if $\nu$ is a strongly positive probability measure, then
$\nu(b|w) = 1$ if and only if $\nu(w\overline{b} | w) = 0$. Assuming
$\nu(w) > 0$, we have that $\nu(w\overline{b} | w) = 0$ if and only if
$\nu(w\overline{b}) = 0$, i. e. $\nu(\overline{b}) < 2^{-l(|w|)}$. Thus
step 2 correctly approximates step 2 of $\Lambda$.

It is now routine (if tedious) to verify that $\Lambda$ has the
desired properties.
\qed\end{proof}
%% Satyadev

\begin{lemma}\label{lm:bffconditional}
Let type-2 functional
$B:(\strings\rightarrow\R)\times(\N\rightarrow\N)\rightarrow
(\strings\times\strings\rightarrow\R)$ be such that for every weakly
positive probability measure $\nu:\strings\rightarrow[0,1]$,
$l:\N\rightarrow\N$, and $w, v\in\strings$,
\[
B(\nu, l)(w,v)=
\begin{cases}
\nu(w|v)&\text{if }v\prefix w\text{ and }\nu(w)\geq
2^{-l(|w|)}\\ 1&\text{if }w\prefix v\text{ and }\nu(w)\geq
2^{-l(|w|)}\\ 0&\text{otherwise.}
\end{cases}
\]
Then $B$ is $\BFF$ over all weakly positive probability measure $\nu$
and all $l:\N\rightarrow\N$.
\end{lemma}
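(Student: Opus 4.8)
The plan is to appeal to the Kapron--Cook characterization (Theorem~\ref{th:kapcoo}): since $\BFF$ coincides with basic poly time, it suffices to build an oracle Turing machine $M$ that, with the canonical computation $\hat\nu$ of $\nu$ and the function $l$ supplied as oracles, reads $w$, $v$ and a precision parameter $r$ and outputs a rational within $2^{-r}$ of $B(\nu,l)(w,v)$ within a second-order polynomial time bound in $|\hat\nu|,|l|,|w|,|v|,r$. First $M$ scans $w$ and $v$ to decide which of $v\prefix w$, $w\prefix v$ (with $w\neq v$), or neither holds; this is a type-1 polynomial-time step. In the last two cases $M$ outputs the exact rationals $1$ and $0$, so the only substantive work is (a) deciding the predicate $\nu(w)\geq 2^{-l(|w|)}$ and (b) approximating $\nu(w\mid v)=\nu(w)/\nu(v)$ when $v\prefix w$.

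For (a) I would exploit the weak-positivity gap: because the statement concerns a weakly positive $\nu$ for which $l$ is a witness, every $w$ satisfies $\nu(w)=0$ or $\nu(w)\geq 2^{-l(|w|)}$. Hence $M$ queries $l$ for $l(|w|)$, queries $\hat\nu$ at precision $l(|w|)+2$, and reports ``$\nu(w)\geq 2^{-l(|w|)}$'' exactly when the returned value exceeds $2^{-l(|w|)-1}$. This is correct by a one-line estimate: if $\nu(w)=0$ the approximation is at most $2^{-l(|w|)-2}$, whereas if $\nu(w)\geq 2^{-l(|w|)}$ it is at least $\tfrac{3}{4}\,2^{-l(|w|)}$, and these lie on opposite sides of the midpoint $2^{-l(|w|)-1}$.

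For (b), observe that when $v\prefix w$ and $\nu(w)\geq 2^{-l(|w|)}$ one has $\nu(v)\geq\nu(w)\geq 2^{-l(|w|)}>0$, so the divisor is bounded away from $0$. Writing $a=\nu(w)$, $b=\nu(v)$ and letting $\hat a,\hat b$ be the values of $\hat\nu$ at a common precision $s$, the usual quotient error estimate gives $|\hat a/\hat b-a/b|\leq 2^{\,2l(|w|)+2-s}$, so $s=2\,l(|w|)+r+3$ makes $\hat a/\hat b$ accurate to $2^{-r-1}$, after which $M$ rounds to a multiple of $2^{-r}$. To keep $B$ total on arbitrary (possibly non--weakly-positive) inputs, $M$ simply outputs $0$ whenever the approximation it reads for $\nu(v)$ falls below $2^{-l(|w|)-1}$; on the measures covered by the statement this safeguard never triggers in the relevant case.

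The step I expect to be the main obstacle is the running-time analysis, because the precision $s\approx 2\,l(|w|)+r$ depends on the \emph{value} $l(|w|)$, which may be exponentially larger than $|l|$ and $|w|$. The resolution is that $M$ stores $l(|w|)$ and $s$ in binary and never materializes a unary counter of length $s$; then every oracle query has length only $O\!\left(|w|+|v|+\log r+|l|(\ceil{\log(|w|+1)})\right)$, which is a second-order polynomial. The single place where a length of order $l(|w|)$ genuinely occurs is the oracle answer $\hat\nu$ at precision $s$, whose length is bounded by $|\hat\nu|$ evaluated at this polynomially bounded query length; since the comparison in (a), the one division in (b), and the final rounding are all polynomial in the lengths of these answers, the total running time is dominated by a term of the form $|\hat\nu|(\mathrm{poly})^{c}$ and is a genuine second-order polynomial. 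Theorem~\ref{th:kapcoo} then yields $B\in\BFF$.
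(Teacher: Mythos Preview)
The paper states this lemma without proof: in the source, Lemma~\ref{lm:bffconditional} is immediately followed by Theorem~\ref{th:5_11} with no intervening argument. So there is no paper proof to compare against, and your proposal must be judged on its own.

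Your overall strategy is the natural one and is correct. Invoking Theorem~\ref{th:kapcoo} and building an oracle machine that (i) does the prefix test, (ii) uses the weak-positivity gap to decide $\nu(w)\ge 2^{-l(|w|)}$ from a single query to $\hat\nu$ at precision $l(|w|)+2$, and (iii) approximates $\nu(w)/\nu(v)$ by dividing two $\hat\nu$-values at precision $s=2l(|w|)+r+O(1)$ is exactly right; your error estimate $|\hat a/\hat b-a/b|\le 2^{2l(|w|)+2-s}$ is correct once one notes $\hat b\ge 2^{-l(|w|)-1}$.

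The one place where the argument is more tangled than it needs to be is the running-time discussion. You analyze the situation as if the precision parameter and the output of $l$ were in binary, and then worry that the \emph{value} $l(|w|)$ might be exponentially larger than the second-order inputs. But the paper's convention (visible, e.g., in the type $\mathcal{D}_\nu\times\strings\times 0^{\N}\to\Q$ of $\hat\Lambda$ in the proof of Lemma~\ref{lm:lutz3.5}) is that the precision argument is presented in unary, and likewise the oracle $l:\N\to\N$ is accessed with unary input and output. Under that convention the value $l(|w|)$ is literally the \emph{length} of the answer the oracle returns, so $l(|w|)\le |l|(|w|)$ is itself a second-order polynomial in the inputs. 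Then the query $0^s$ to $\hat\nu$ has length $s=O(|l|(|w|)+r)$, again a second-order polynomial, and the arithmetic on the answers is polynomial in $|\hat\nu|\bigl(|w|+O(|l|(|w|)+r)\bigr)$. In short, the ``main obstacle'' you flag does not actually arise under the paper's conventions, and with that observation your proof becomes entirely straightforward.
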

%Note that when the first type-1 input $\nu$ of $E$ is not a probability measure,
%then $\nu(w|v)$ is not even defined here. Since we are not concerned about the
%behavior of $E$ when $\nu$ is not a weakly positive probability measure, we ignore such cases.

\begin{theorem}[Lutz \cite{Lutz:RBM}]\label{th:5_11}
Let $\Delta$ be a resource bound.  If $\nu$ is a $\Delta$-probability
measure on $\C$, then for each $w\in\strings$, the cylinder $\C_w$ is
$\Delta$-$\nu$-measurable, with $\nu_\Delta(\C_w)=\nu(w)$.
\end{theorem}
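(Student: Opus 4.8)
The plan is to exhibit an explicit $\Delta$-$\nu$-measurement of $\C_w$ --- that is, a $\Delta$-computable $\nu$-splitting operator $\Phi$ for $(\C_w,\C-\C_w)$ --- and then to read off $\nu_\Delta(\C_w)=\Phi^+_\infty$. As in \cite{Lutz:RBM}, the whole construction is driven by the regularity lemma (Lemma \ref{lm:lutz3.5}): given a $\nu$-martingale $d$, I first pass to the regular martingale $d'=\Lambda(d)$, which satisfies $d'(\lambda)=d(\lambda)$, $S^1[d]\subseteq S^1[d']$, and the decisive property that once $d'$ reaches capital $1$ it never again drops below $1$.

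The positive component concentrates all of $d'$'s capital onto the cylinder. Using the conditional-probability functional $B$ of Lemma \ref{lm:bffconditional}, I set
\[
\Phi^+(d)(x)=
\begin{cases}
d'(w)\cdot\nu(w\mid x) & \text{if } x\prefix w,\\
d'(x) & \text{if } w\prefix x,\\
0 & \text{otherwise,}
\end{cases}
\]
and I let the negative component be the complementary martingale: $\Phi^-(d)(x)=d'(x)-d'(w)\,\nu(w\mid x)$ when $x\prefix w$, $\Phi^-(d)(x)=0$ when $w\prefix x$, and $\Phi^-(d)(x)=d'(x)$ when $x$ is incomparable with $w$. Checking the $\nu$-martingale identity at each branch point (exactly as in the corresponding lemma of \cite{Lutz:RBM}) shows both are $\nu$-martingales; summing $d'$ over the length-$|w|$ cut gives $d'(\lambda)\ge d'(w)\nu(w)$, so $\Phi^-(d)$ is nonnegative, and $\Phi^+(d)(\lambda)+\Phi^-(d)(\lambda)=d(\lambda)$, which yields property (iii) with room to absorb the $2^{-r}$ slack. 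Properties (i) and (ii) are where regularity is essential: if $A\in\C_w\cap S^1[d]$ and $d'$ first exceeds $1$ at a prefix of $A$ shorter than $w$, then that prefix is a prefix of $w$, so regularity forces $d'(w)\ge 1$ and hence $\Phi^+(d)(w)\ge 1$; the symmetric argument applied at the point where $A$ branches off $w$ handles $A\in(\C-\C_w)\cap S^1[d]$.

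The new content is the verification that $\Phi$ is $\Delta$-computable, and here I would rely entirely on the closure axiom for resource bounds. The functional $\Lambda$ is $\Delta$-computable by Lemma \ref{lm:lutz3.5} (since $\nu$ is a $\Delta$-probability measure), $B$ is $\BFF$ by Lemma \ref{lm:bffconditional}, and the prefix predicates ``$x\prefix w$'' and ``$w\prefix x$'' are polynomial-time and hence $\BFF$; note that these are \emph{discrete} case distinctions on strings, so the real-valued gluing handled by the Pasting Lemma (Lemma \ref{lm:5_6}) is needed only inside $\Lambda$ and not here. The arithmetic assembling the pieces --- one multiplication and one subtraction of $\Delta$-computable reals, selected by a $\BFF$ case split --- is $\BFF$ by the same precision bookkeeping used for $\mathrm{ADD}$ in Lemma \ref{lm:5_2}. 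Since $\Delta$ is closed under $\BFF$, the composite $\Phi$ lies in $\Delta$, so it is a genuine $\Delta$-$\nu$-measurement of $\C_w$. To compute the measure I evaluate on the unit martingale: as $\Lambda(\mathbf{1})=\mathbf{1}$, the ideal value $\Phi^+(\mathbf{1})(\lambda)=\mathbf{1}(w)\nu(w)=\nu(w)$ is independent of $r$, whence $\Phi^+_\infty=\inf_{r\in\N}\Phi^+_r(\mathbf{1})(\lambda)=\nu(w)$ and $\nu_\Delta(\C_w)=\nu(w)$.

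I expect the main obstacle to be the precision bookkeeping in the $\Delta$-computation rather than the combinatorics. The values $d'(w)=\Lambda(d)(w)$ and $\nu(w\mid x)$ are only approximable, so the $\Delta$-computation $\hat\Phi$ must feed $\hat\Lambda$, $\hat d$ and $\hat\nu$ a precision that depends on both the requested output precision and on the $r$ governing the $2^{-r}$ slack, arranged so that the computed approximations remain canonical computations of bona fide $\nu$-martingales and so that the additivity bound (iii) survives rounding. The measure-theoretic skeleton --- that this $\Phi$ is a valid splitting operator and that $\Phi^+_\infty=\nu(w)$ --- is already established in \cite{Lutz:RBM} and can be cited rather than reproved, which is precisely the division of labor this section adopts.
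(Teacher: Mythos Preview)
Your proposal is correct and follows the paper's approach almost exactly: the paper defines the same operator $\Psi'^+(d)(v)=d(w)\,B(\nu,l)(w,v)$ for $v\prefix w$, $\Psi'^-=d-\Psi'^+$, precomposes with the regularity functional $\Lambda$, and argues $\Delta$-computability from $\Lambda\in\Delta$, $B\in\BFF$, and closure of $\Delta$ under $\BFF$, deferring the splitting-operator verification and the identity $\Phi^+_\infty=\nu(w)$ to \cite{Lutz:RBM}. The only difference is that the paper first disposes of the degenerate case $\nu(w)=0$ with the trivial operator $\Phi^+_r(d)(v)=[\![w\prefix v]\!]$, $\Phi^-_r(d)=d$ before invoking $\Lambda$ and $B$ in the main case $\nu(w)>0$; your unified treatment covers this too (since $B$ returns $0$ when $\nu(w)=0$), so the split is cosmetic rather than substantive.
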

\begin{proof}
Assume the hypothesis, and let $w\in\strings$. We prove this lemma in
two cases.

Case 1: $\nu(w)=0$.  Let $\Phi:\N\times\D_\nu\rightarrow\D_\nu\times
\D_\nu$ be such that for each $r\in\N$, $d\in\D_\nu$, and
$v\in\strings$,
\begin{align*}
\Phi_r^+(d)(v)&=\begin{cases} 1&\text{if }w\prefix
v\\ 0&\text{otherwise,}
\end{cases}\\
\Phi_r^-(d)(v)&=d(v).
\end{align*}
It is clear that $\Phi$ is $\BFF$ and hence $\Delta$-computable and
that $\Phi$ is a $\nu$-splitting operator. By the proof of Lemma 4.8
in \cite{Lutz:RBM}, for all $r\in\N$ and $d\in\D_\nu$, $\Phi$ has the
following properties: (i)
$S^1[d]\cap\C_w\subseteq\C_w=S^1[\Phi_r^+(d)]$; (ii)
$S^1[d]-\C_w\subseteq S^1[d]=S^1[\Phi_r^-(d)]$; (iii)
$\Phi_r^+(d)(\lambda)+\Phi_r^-(d)(\lambda)=d(\lambda)$.  Therefore,
$\Phi$ is a $\Delta$-$\nu$-measurement of $\C_w$.  It can be shown
that $\nu_\Delta(w) = 0$ \cite{Lutz:RBM}. Note that $w\neq \lambda$,
since $\nu(\lambda)=1$ and $\nu(w)=0$.

Case 2: $\nu(w)>0$.  Let $\Psi':\D_\nu\rightarrow\D_\nu\times\D_\nu$
be such that for each $d\in\D_\nu$ and $v\in\strings$,
\begin{align*}
\Psi'^+(d)(v)&=\begin{cases} d(w)\nu(w|v)&\text{if }v\prefix
w\\ d(v)&\text{if }w\prefix v\\ 0&\text{otherwise,}
\end{cases}\\
\Psi'^-(d)(v)&=d(v)-\Psi'^+(d)(v).
\end{align*}
Since $\nu$ is a $\Delta$-probability measure, there exists
$l:\N\rightarrow\N$ such that $l$ is $\Delta$-computable and for all
$w\in\strings$, $\nu(w)=0$ or $\nu(w)\geq 2^{-l(|w|)}$.  Then
\begin{align*}
\Psi'^+(d)(v)&=\begin{cases} d(w)B(\nu, l)(w,v)&\text{if }v\prefix
w\\ d(v)&\text{if }w\prefix v\\ 0&\text{otherwise,}
\end{cases}\\
\end{align*}
where $B$ is the functional defined in Lemma \ref{lm:bffconditional}.
Note that $\Psi'$ is $\BFF$ over type-1 input $d$ and $\nu(\cdot
|\cdot)$.  By Lemma \ref{lm:bffconditional}, we have that $\Psi'$ is
$\BFF$ over probability measure $\nu$ and $\nu$-martingale $d$.

Let $\Psi:\N\times\D_\nu\rightarrow\D_\nu\times\D_\nu$ be such that
$\Psi(r,d)=\Psi'(\Lambda(d)),$ where $\Lambda$ is the functional from
the Regularity Lemma.  Since $\nu$ is a $\Delta$-probability measure,
$\Lambda$ is $\Delta$-computable.  Since $\Psi'$ is $\BFF$ and $d,
\nu, l, \Lambda\in\Delta$, $\Psi$ is $\Delta$-computable since
$\Delta$ is closed under $\BFF$.  The rest of the proof is to
establish that $\Psi$ is a $\Delta$-$\nu$-measurement of $\C_w$ and
$\nu_\Delta(\C_w)=\nu(w)$, which follows directly from the proof of
Lemma 4.8 in \cite{Lutz:RBM}.
\end{proof}

\begin{definition}
Let $R\subseteq \C$. An {\em algebra} on $R$ is a collection $\F$ of subsets of $\C$ with the
following properties.
\begin{enumerate}[(i)]
\item
$R\in\F$.
\item
If $X\in\F$, then $X^c\in\F$.
\item
If $X,Y\in\F$, then $X\cup Y\in\F$.
\end{enumerate}
If $\F$ is an algebra on $R$, then a {\em subalgebra} of $\F$ on $R$ is a set $\mathcal{E}$
that is also an algebra on $R$.
\end{definition}
\begin{theorem}[Lutz \cite{Lutz:RBM}]\label{th:Rdeltaalgebra}
Let $\Delta$ be a resource bound.
$\F_{R(\Delta)}^\nu$ is an algebra on $R(\Delta)$. For $X, Y\in \F_{R(\Delta)}^\nu$, we have
\[\nu(X^c|R(\Delta)) = 1-\nu(X|R(\Delta))\]
and
\[\nu(X\cup Y|R(\Delta)) =\nu(X|R(\Delta)) +\nu(Y|R(\Delta)) - \nu(X\cap Y|R(\Delta)).\]
\end{theorem}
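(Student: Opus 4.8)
The plan is to establish the three algebra axioms by exhibiting, in each case, an explicit $\Delta$-$\nu$-splitting operator and checking that it is $\Delta$-computable by appealing only to the closure of $\Delta$ under $\BFF$; the measure-theoretic verifications of the covering conditions can be inherited from \cite{Lutz:RBM}, so the real content is the constructions and their $\Delta$-computability. That $R(\Delta)\in\F_{R(\Delta)}^\nu$ is immediate: the operator $\Phi_r^+(d)=d$, $\Phi_r^-(d)=\mathbf 0$ is a $\nu$-measurement of $R(\Delta)$ (conditions (i)--(iii) are trivial, the second vacuously), it is patently $\BFF$ hence $\Delta$-computable, and $\Phi^+_\infty=\inf_r\mathbf 1(\lambda)=1$. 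For closure under complement, given a $\nu$-measurement $\Phi$ of $X$ I would interchange the two output coordinates, setting $\Psi_r^+(d)=\Phi_r^-(d)$ and $\Psi_r^-(d)=\Phi_r^+(d)$; since a measurement of $X$ is a splitting operator for $(R(\Delta)\cap X,\,R(\Delta)-X)$ and a measurement of $X^c$ is one for $(R(\Delta)-X,\,R(\Delta)\cap X)$, the three conditions transfer verbatim, and $\Psi$ is $\Delta$-computable because coordinate-swapping is $\BFF$.

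The crux is closure under union. Given measurements $\Phi$ of $X$ and $\Psi$ of $Y$, I would \emph{compose} them: first split $d$ by $\Phi$, then re-split the negative part by $\Psi$, defining
\[
\Xi_r^+(d)=\mathrm{ADD}\bigl(\Phi_{r+1}^+(d),\ \Psi_{r+1}^+(\Phi_{r+1}^-(d))\bigr),\qquad
\Xi_r^-(d)=\Psi_{r+1}^-(\Phi_{r+1}^-(d)).
\]
The point of feeding $\Phi_{r+1}^-(d)$, which covers $(R(\Delta)-X)\cap S^1[d]$, into $\Psi$ is that a point of $R(\Delta)\cap(X\cup Y)\cap S^1[d]$ lands in $S^1[\Phi_{r+1}^+(d)]$ when it is in $X$ and otherwise in $S^1[\Phi_{r+1}^-(d)]$ with membership in $Y$, hence in $S^1[\Psi_{r+1}^+(\Phi_{r+1}^-(d))]$; so $\Xi_r^+$ covers the union, and dually $\Xi_r^-$ covers $(R(\Delta)-X)\cap(R(\Delta)-Y)=R(\Delta)-(X\cup Y)$. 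For the budget condition I would chain the budget of $\Psi$ on $\Phi_{r+1}^-(d)$ with that of $\Phi$ on $d$, obtaining $\Xi_r^+(d)(\lambda)+\Xi_r^-(d)(\lambda)\le d(\lambda)+2\cdot2^{-(r+1)}=d(\lambda)+2^{-r}$; the shift from $r$ to $r+1$ is exactly what absorbs the doubled error. Finally $\Xi$ is $\Delta$-computable: it is obtained from the $\Delta$-computable $\Phi,\Psi$ by functional composition, the $\BFF$ reindexing $r\mapsto r+1$, and $\mathrm{ADD}$, which is $\BFF$ by Lemma \ref{lm:5_2}, so closure of $\Delta$ under $\BFF$ yields $\Xi\in\Delta$.

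It remains to verify the two measure identities. For the complement, write $\nu(X^c|R(\Delta))=\Phi^-_\infty$; condition (iii) at $d=\mathbf 1$ gives $\Phi^+_\infty+\Phi^-_\infty\le 1$, while for the reverse inequality I would apply the Measure Conservation Theorem (Theorem \ref{th:5_1}) to $\mathrm{ADD}(\Phi_j^+(\mathbf 1),\Phi_k^-(\mathbf 1))$: this martingale covers $(R(\Delta)\cap X)\cup(R(\Delta)-X)=R(\Delta)$, so its value at $\lambda$ is $\ge\nu(\lambda)=1$ for every $j,k$, whence $\Phi^+_\infty+\Phi^-_\infty\ge 1$ and thus $\nu(X^c|R(\Delta))=1-\nu(X|R(\Delta))$. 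For inclusion--exclusion I would reduce to finite additivity on disjoint members of the algebra: using closure under complement and union the sets $X\cap Y$ and $Y\cap X^c$ are measurable, $X\cup Y=X\sqcup(Y\cap X^c)$ and $Y=(Y\cap X)\sqcup(Y\cap X^c)$ are disjoint, and subtracting the two additive identities yields the claim; additivity of $\nu(\cdot|R(\Delta))$ on disjoint sets, together with independence of the measure from the chosen measurement, is the measure-theoretic fact I would import from \cite{Lutz:RBM} (cf. Lemma \ref{lm:5_3} and Theorem \ref{th:5_5}).

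The main obstacle I anticipate is the union case, on two counts. First, the asymmetry between the output coordinates---$\Xi^+$ must cover a \emph{union} of success sets while $\Xi^-$ must cover their \emph{intersection}---forces the composed rather than merely additive construction, and one must check that the nesting respects the budget; the $r\mapsto r+1$ bookkeeping is what makes this go through and is easy to get wrong. Second, because $\Psi$ is evaluated at $\Phi_{r+1}^-(d)$ rather than at $\mathbf 1$, the quantity $\Xi^+_\infty$ is not transparently $\nu(X|R(\Delta))+\nu(Y|R(\Delta))-\nu(X\cap Y|R(\Delta))$, so I deliberately route the measure identity through disjoint additivity instead of computing $\Xi^+_\infty$ directly; confirming that this additivity is available at the level of a general resource bound, and not merely for the listed examples, is the step that leans hardest on the closure axioms and on the conservation theorem.
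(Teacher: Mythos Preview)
Your proposal is correct and follows essentially the same strategy as the paper: swap coordinates for the complement, and compose the two splitting operators for the binary Boolean operations, with $\Delta$-computability following from closure under $\BFF$ (including the $\mathrm{ADD}$ functional of Lemma~\ref{lm:5_2}).

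The one noteworthy difference is in the shape of the union construction. The paper builds a full four-way refinement $\Theta[ab]_r(d)=\Psi_{r+2}^b(\Phi_{r+1}^a(d))$ for $a,b\in\{+,-\}$ and then assembles measurements of both $X\cup Y$ and $X\cap Y$ from appropriate sums of these four pieces. Your $\Xi$ instead feeds only $\Phi_{r+1}^-(d)$ through $\Psi$ and keeps $\Phi_{r+1}^+(d)$ unrefined; this is a bit leaner and needs only the single precision shift $r\mapsto r+1$ (versus the paper's $r+1$ and $r+2$), at the cost of handling only union directly and obtaining intersection via De~Morgan. Both routes are equally valid and rest on the same idea; your budget computation and covering arguments are correct as written. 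Your derivation of the two measure identities via Theorem~\ref{th:5_1} and disjoint additivity is also in line with how the paper (through its appeal to \cite{Lutz:RBM}) proceeds.
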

Note that this lemma does not have any computability requirement on $\nu$.
It is possible that $\F_{R(\Delta)}^\nu = \{\varnothing, \C\}$.
\begin{proof}
Let $\Phi$ be a $\nu$-measurement of $X$ in $R(\Delta)$.
Let
\[\Psi:\N\times\D_\nu\rightarrow\D_\nu\times \D_\nu\]
be such that
\[\Psi(r, d) = (\Phi_r^-(d), \Phi_r^+(d)).\]
Since $\Phi$ is $\Delta$-computable and $\Psi$ can be defined from $\Phi$ with projection and composition,
$\Psi$ is $\Delta$-computable. Then by Theorem 4.12 in \cite{Lutz:RBM},
$\Psi$ is a $\nu$-measurement of $X^c$ in $R(\Delta)$, i.e., $X^c\in \F_{R(\Delta)}^\nu$ and
$\nu(X^c|R(\Delta)) = 1-\nu(X|R(\Delta))$.

Now, let $X,Y\in \F_{R(\Delta)}^\nu$ and let $\Phi$ and $\Psi$ be $\nu$-measurements of $X$ and $Y$, respectively,
in $R(\Delta)$. For each $a, b\in\{+, -\}$, let
\[\Theta[ab]:\N\times\D_\nu\rightarrow\D_\nu\]
by
\[\Theta[ab](r,d) = \Psi_{r+2}^b(\Phi^a_{r+1}(d)).\]
Note that $\Theta[ab]$ is defined from $\Psi$ and $\Phi$ by $\BFF$.
Therefore, for each $a,b\in\{+, -\}$, $\Theta[ab]$ is $\Delta$-computable,
since $\Delta$ is closed under $\BFF$.
Let
\begin{align*}
\Theta[X\cap Y]&=(\Theta[++]+\Theta[-+], \Theta[+-]+\Theta[--]),\\
\Theta[X\cup Y]&=(\Theta[++]+\Theta[-+], \Theta[+-]+\Theta[--]).\\
\end{align*}
Then $\Theta[X\cap Y]$ and $\Theta[X\cup Y]$ are $\BFF$ in $\Theta[ab]$ ($a,b\in\{+,-\}$),
$\Theta[X\cap Y], \Theta[X\cup Y]$ are $\Delta$-computable.
Then by the proof of Theorem 4.12 in \cite{Lutz:RBM},
$\Theta[X\cap Y]$ and $\Theta[X\cup Y]$ are $\nu$-measurements of $X\cap Y$ and $X\cup Y$  in $R(\Delta)$ respectively,
and thus $X\cap Y$ and $X\cup Y$ are $\nu$-measurable in $R(\Delta)$.
The second identity in the theorem also follows from Theorem 4.12 in \cite{Lutz:RBM}.
\qed\end{proof}
\begin{corollary}[Lutz \cite{Lutz:RBM}]\label{co:algebra}
Let $\Delta$ be a resource bound.
Let $X, Y\in\F_{R(\Delta)}^\nu$.
\begin{enumerate}[1.]
\item
If $X\cap Y\cap R(\Delta)=\varnothing$, then
\[\nu(X\cup Y \mid R(\Delta)) =\nu(X\mid R(\Delta))+\nu(Y\mid R(\Delta)).\]
\item
If $X\cap R(\Delta)\subseteq Y$, then $\nu(X\mid R(\Delta))\leq \nu(Y\mid R(\Delta))$.
\end{enumerate}
\end{corollary}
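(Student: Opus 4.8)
The plan is to obtain both statements as purely measure-theoretic consequences of the additivity identity established in Theorem~\ref{th:Rdeltaalgebra}, together with Lemma~\ref{lm:lutz4_5}(2) and the nonnegativity of $\Delta$-$\nu$-measure; no new functionals need be exhibited, since $\F_{R(\Delta)}^\nu$ is already known to be an algebra on $R(\Delta)$. In particular every set appearing below ($X\cap Y$, $X\cup Y$, $Y\setminus X = Y\cap X^c$, etc.) lies in $\F_{R(\Delta)}^\nu$ and hence has a well-defined $\nu$-measure in $R(\Delta)$.

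For part~1 I would start from the inclusion--exclusion identity
\[
\nu(X\cup Y\mid R(\Delta)) = \nu(X\mid R(\Delta)) + \nu(Y\mid R(\Delta)) - \nu(X\cap Y\mid R(\Delta))
\]
of Theorem~\ref{th:Rdeltaalgebra} and show that the last term vanishes. By Lemma~\ref{lm:lutz4_5}(2) and the hypothesis $X\cap Y\cap R(\Delta)=\varnothing$, we have $\nu(X\cap Y\mid R(\Delta)) = \nu((X\cap Y)\cap R(\Delta)\mid R(\Delta)) = \nu(\varnothing\mid R(\Delta))$. The empty set is $\nu$-measurable in $R(\Delta)$ with measure $0$: the operator $\Phi$ given by $\Phi_r^+(d)=\mathbf{0}$ (the zero martingale) and $\Phi_r^-(d)=d$ is a $\nu$-measurement of $\varnothing$ in $R(\Delta)$, since conditions (i)--(ii) hold trivially as $R(\Delta)\cap\varnothing=\varnothing$ and condition (iii) holds with equality, whence $\nu(\varnothing\mid R(\Delta))=\Phi^+_\infty=0$. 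Substituting yields the claimed additivity.

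For part~2 I would reduce to part~1. Assume $X\cap R(\Delta)\subseteq Y$ and set $Z=Y\setminus X\in\F_{R(\Delta)}^\nu$. Then $X\cap Z\cap R(\Delta)=\varnothing$ because $X\cap(Y\cap X^c)=\varnothing$, so part~1 gives $\nu(X\cup Z\mid R(\Delta))=\nu(X\mid R(\Delta))+\nu(Z\mid R(\Delta))$. Since $X\cup Z=X\cup Y$ as sets and $X\cap R(\Delta)\subseteq Y\cap R(\Delta)$, we get $(X\cup Z)\cap R(\Delta)=Y\cap R(\Delta)$, so Lemma~\ref{lm:lutz4_5}(2) gives $\nu(X\cup Z\mid R(\Delta))=\nu(Y\mid R(\Delta))$. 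Finally $\nu(Z\mid R(\Delta))=\inf_{r}\Phi_r^+(\mathbf{1})(\lambda)\geq 0$, because every $\nu$-martingale takes values in $[0,\infty)$; hence $\nu(Y\mid R(\Delta))\geq\nu(X\mid R(\Delta))$.

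All the steps are routine; the only point needing care is the bookkeeping with Lemma~\ref{lm:lutz4_5}(2), which lets me replace each set by its intersection with $R(\Delta)$ without changing its measure in $R(\Delta)$, so that the set-theoretic facts ($X\cap Y\cap R(\Delta)=\varnothing$ and $(X\cup Z)\cap R(\Delta)=Y\cap R(\Delta)$) translate into the corresponding measure identities. I expect the verification that $\varnothing$ has measure $0$ (equivalently, the nonnegativity of measure) to be the one place where one must briefly return to the definition of a measurement rather than arguing directly from Theorem~\ref{th:Rdeltaalgebra}, but this is immediate from the trivial splitting operator above.
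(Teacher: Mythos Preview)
Your proposal is correct and follows exactly the intended route: the paper states this result as an immediate corollary of Theorem~\ref{th:Rdeltaalgebra} (with no proof given), and your argument---using the inclusion--exclusion identity together with Lemma~\ref{lm:lutz4_5}(2) to reduce $\nu(X\cap Y\mid R(\Delta))$ to $\nu(\varnothing\mid R(\Delta))=0$, and then deducing monotonicity from additivity via $Z=Y\setminus X$---is precisely the standard derivation. The brief return to the definition of a measurement to verify $\nu(\varnothing\mid R(\Delta))=0$ and the nonnegativity of measure is appropriate and correctly handled.
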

\begin{theorem}\label{th:5_14}
Let $\Delta$ be a resource bound. $\F_\Delta^\nu$ is an algebra over $\C$.
For $X, Y\in\F_\Delta^\nu$, we have
\[\nu_\Delta(X^c)= 1-\nu_\Delta(X)\]
and
\[\nu_\Delta(X\cup Y) = \nu_\Delta(X) + \nu_\Delta(Y)-\nu_\Delta(X\cup Y).\]
\end{theorem}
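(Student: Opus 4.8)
The plan is to mirror the proof of Theorem~\ref{th:Rdeltaalgebra}, which establishes the analogous statement for $\F_{R(\Delta)}^\nu$, but carried out now for the global notion of $\Delta$-$\nu$-measurability. Each of the three algebra axioms and the two measure identities reduces to exhibiting a $\Delta$-$\nu$-measurement of the relevant set. In every case the measurement is assembled from the given measurements of $X$ and $Y$ by operations that are $\BFF$; closure of $\Delta$ under $\BFF$ then delivers $\Delta$-computability, while the purely measure-theoretic verifications (the success-set containments and the numerical values) are identical to those in \cite{Lutz:RBM} and can be cited.

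First I would dispatch $\C\in\F_\Delta^\nu$ by appealing to Theorem~\ref{th:5_11}: since $\C=\C_\lambda$, the cylinder theorem supplies a $\Delta$-$\nu$-measurement of $\C$ with $\nu_\Delta(\C)=\nu(\lambda)=1$. For closure under complement, let $\Phi$ be a $\Delta$-$\nu$-measurement of $X$, i.e.\ a $\Delta$-$\nu$-splitting operator for $(X,X^c)$, and define $\Psi(r,d)=(\Phi_r^-(d),\Phi_r^+(d))$. Swapping the two components is a projection-and-pairing operation, so $\Psi$ is $\BFF$ in $\Phi$ and hence $\Delta$-computable; it is a splitting operator for $(X^c,X)$, whence $X^c\in\F_\Delta^\nu$ and $\nu_\Delta(X^c)=\Psi_\infty^+=\Phi_\infty^-$. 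The identity $\nu_\Delta(X^c)=1-\nu_\Delta(X)$ then follows from the splitting inequality~(iii) and $\nu_\Delta(\C)=1$ as in \cite{Lutz:RBM}; alternatively, once membership is known, one may read it off from Theorem~\ref{th:Rdeltaalgebra} via Lemma~\ref{lm:lutz4_5}, which equates $\nu_\Delta$ with $\nu(\cdot\mid R(\Delta))$.

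For closure under union and intersection, let $\Phi,\Psi$ be $\Delta$-$\nu$-measurements of $X,Y$, and for $a,b\in\{+,-\}$ set $\Theta[ab](r,d)=\Psi_{r+2}^b(\Phi_{r+1}^a(d))$, exactly as in the proof of Theorem~\ref{th:Rdeltaalgebra}. Each $\Theta[ab]$ is a functional composition of the $\Delta$-computable $\Phi$ and $\Psi$, hence $\BFF$ in them and so $\Delta$-computable. Assembling the components by martingale addition --- which is $\BFF$ by Lemma~\ref{lm:5_2} --- yields operators $\Theta[X\cap Y]$ and $\Theta[X\cup Y]$ that are again $\Delta$-computable, and by the proof of Theorem~4.12 in \cite{Lutz:RBM} these are $\nu$-measurements of $X\cap Y$ and $X\cup Y$. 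This gives $X\cap Y,\,X\cup Y\in\F_\Delta^\nu$, and the same reference supplies the inclusion-exclusion identity $\nu_\Delta(X\cup Y)=\nu_\Delta(X)+\nu_\Delta(Y)-\nu_\Delta(X\cap Y)$ (the displayed statement contains an evident typo, writing $\nu_\Delta(X\cup Y)$ in place of the subtracted $\nu_\Delta(X\cap Y)$).

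The main obstacle --- and the only part not already handled in \cite{Lutz:RBM} --- is the bookkeeping showing that each constructed operator lands back in $\Delta$. The crux is that every construction (component-swapping, the compositions $\Theta[ab]$, and the additive assembly) uses only functional composition, expansion, projection, and the $\BFF$ addition functional of Lemma~\ref{lm:5_2}, so each result is $\BFF$ in operators already known to lie in $\Delta$; since $\Delta$ is a resource bound and therefore closed under $\BFF$, the results lie in $\Delta$. Once $\Delta$-computability is secured, all success-set containments, the splitting inequalities, and the numerical measure values are inherited verbatim from the corresponding arguments in \cite{Lutz:RBM}.
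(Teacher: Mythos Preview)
Your proposal is correct and follows exactly the route the paper intends: the paper's own proof consists of the single sentence ``The proof is similar to that of Theorem~\ref{th:Rdeltaalgebra},'' and you have spelled out precisely that similarity---swapping components for complements, the $\Theta[ab]$ compositions for unions and intersections, and the observation that every construction is $\BFF$ in the given measurements so that closure of $\Delta$ under $\BFF$ does the work.

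One small wrinkle: your appeal to Theorem~\ref{th:5_11} for $\C\in\F_\Delta^\nu$ imports the hypothesis that $\nu$ is a $\Delta$-probability measure, which Theorem~\ref{th:5_14} (like Theorem~\ref{th:Rdeltaalgebra}) does not assume---the paper remarks explicitly after Theorem~\ref{th:Rdeltaalgebra} that no computability requirement on $\nu$ is needed. The fix is trivial: the operator $\Phi(r,d)=(d,\mathbf{0})$ is a $\Delta$-$\nu$-measurement of $\C$ for any $\nu$, since it is $\BFF$ and the splitting conditions are immediate. With that adjustment your argument is complete and matches the paper's.
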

\begin{proof}
The proof is similar to that of Theorem \ref{th:Rdeltaalgebra}.
\qed\end{proof}
\begin{definition}
Let $\F$ be a subalgebra of $\F_{R(\Delta)}^\nu$ on $R(\Delta)$.
Then $\F$ is $\nu$-{\em complete} on $R(\Delta)$ if,
for all $X,Y\subseteq \C$,
if $X\subseteq Y\in\F$ and $\nu(Y|R(\Delta))=0$, then $X\in\F$.
\end{definition}
\begin{definition}
Let $\F$ be a subalgebra of $\F_\Delta^\nu$ on $\C$.
Then $\F$ is $\Delta$-$\nu$-{\em complete} if,
for all $X,Y\subseteq \C$,
if $X\subseteq Y\in\F$ and $\nu_\Delta(Y)=0$, then $X\in\F$.
\end{definition}
\begin{theorem}\label{th:5_15}
The algebra $\F_{R(\Delta)}^\nu$ is $\nu$-complete on $R(\Delta)$ and
the algebra $\F_\Delta^\nu$ is $\Delta$-$\nu$-complete.
\end{theorem}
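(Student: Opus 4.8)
The plan is to prove both assertions by a single construction, showing that every subset $X$ of a null set $Y$ is itself measurable (with measure $0$). I would write out the $\Delta$-$\nu$ case in full; the ``$\nu$-in-$R(\Delta)$'' case is identical once every set is intersected with $R(\Delta)$. So fix $X \subseteq Y$ with $Y \in \F_\Delta^\nu$ and $\nu_\Delta(Y) = 0$, and let $\Phi$ be a $\Delta$-$\nu$-measurement of $Y$, i.e.\ a $\Delta$-computable $\nu$-splitting operator for $(Y, \C\setminus Y)$ with $\Phi^+_\infty = \inf_{r}\Phi_r^+(\mathbf{1})(\lambda) = \nu_\Delta(Y) = 0$. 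The goal is to manufacture from $\Phi$ a $\Delta$-$\nu$-measurement $\Psi$ of $X$; membership $X \in \F_\Delta^\nu$ then holds by definition, which is all completeness demands.

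The construction I propose is to set, for all $r \in \N$ and $d \in \D_\nu$,
\[ \Psi_r^+(d) = \Phi_{r+2}^+(\mathbf{1}), \qquad \Psi_r^-(d) = d. \]
The positive component discards $d$ and reuses the $Y$-covering martingale at a shifted precision; the negative component is the input martingale itself. I would then check the three splitting-operator conditions. Since $\mathbf{1}(\lambda)=1$ we have $S^1[\mathbf{1}] = \C$, so condition (i) for $\Phi$ with the unit martingale gives $Y \subseteq S^1[\Phi_{r+2}^+(\mathbf{1})]$; consequently $X \cap S^1[d] \subseteq X \subseteq Y \subseteq S^1[\Psi_r^+(d)]$, which is condition (i) for $\Psi$. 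Condition (ii) is immediate because $(\C\setminus X) \cap S^1[d] \subseteq S^1[d] = S^1[\Psi_r^-(d)]$.

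The crux is condition (iii), $\Psi_r^+(d)(\lambda) + \Psi_r^-(d)(\lambda) \le d(\lambda) + 2^{-r}$, which with these choices reduces to $\Phi_{r+2}^+(\mathbf{1})(\lambda) \le 2^{-r}$, and more generally to producing, uniformly and $\Delta$-computably in $r$, a $Y$-covering martingale of initial capital at most $2^{-r}$. Here I would invoke Theorem \ref{th:5_5}: its proof exhibits $f(r) = \hat\Phi_{r+1,r+1}^+(\mathbf{1})(\lambda)$ as the canonical computation of the real number $\nu_\Delta(Y) = 0$, so that $|f(r)| \le 2^{-r}$; accounting for the approximation error of $\hat\Phi$ then yields $\Phi_s^+(\mathbf{1})(\lambda) \le 2^{-r}$ for $s = r+2$. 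This explicit, linearly shifted index is exactly the device that avoids an unbounded search for a small-capital covering martingale, and isolating it is the step I expect to be the only genuine obstacle.

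Finally I would verify $\Delta$-computability and read off the measure. The map $r \mapsto r+2$ and the unit martingale $\mathbf{1}$ are $\BFF$, so $\Psi_r^+(d) = \Phi_{r+2}^+(\mathbf{1})$ arises from the $\Delta$-computable $\Phi$ by composition with $\BFF$ functionals, while $\Psi_r^-(d) = d$ is a projection; since $\Delta$ is a resource bound and hence closed under $\BFF$, $\Psi$ is $\Delta$-computable. Thus $\Psi$ is a $\Delta$-$\nu$-measurement of $X$, giving $X \in \F_\Delta^\nu$, and moreover $\nu_\Delta(X) = \Psi^+_\infty = \inf_r \Phi_{r+2}^+(\mathbf{1})(\lambda) \le \inf_r 2^{-r} = 0$, so $\F_\Delta^\nu$ is $\Delta$-$\nu$-complete. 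Re-running the identical construction with every set replaced by its intersection with $R(\Delta)$ and with $\Phi$ a measurement of $Y$ in $R(\Delta)$ (a splitting operator for $(R(\Delta)\cap Y, R(\Delta)-Y)$) shows that $\F_{R(\Delta)}^\nu$ is $\nu$-complete on $R(\Delta)$, completing the proof.
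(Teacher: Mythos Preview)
Your proposal is correct and follows essentially the same approach as the paper: both define $\Psi_r^+(d)=\Phi_{r'}^+(\mathbf{1})$ and $\Psi_r^-(d)=d$, then observe that $\Psi$ is obtained from $\Phi$ and $\mathbf{1}$ by $\BFF$ operations and hence is $\Delta$-computable. The only cosmetic differences are that the paper treats the $R(\Delta)$ case first (deferring the verification of the splitting-operator axioms to Theorem~4.16 of \cite{Lutz:RBM}) and uses $r'=r$ rather than your $r'=r+2$; your shift is harmless but in fact unnecessary, since Lemma~\ref{lm:5_3} together with axiom~(iii) for $\Phi$ already forces $\Phi_r^+(\mathbf{1})(\lambda)\le 2^{-r}$ directly.
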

\begin{proof}
We prove the case with $\F_{R(\Delta)}^\nu$. The other case is similar.
Let $X\subseteq Y\in\F_{R(\Delta)}^\nu$ and $\nu(Y|R(\Delta))=0$.
Let $\Phi$ be a $\nu$-measurement of $Y$ in $R(\Delta)$. Let
\[
\Psi:\N\times\D_\nu\rightarrow\D_\nu\times\D_\nu\\
\]
be such that
\[\Psi(r,d) = (\Phi_r^+(\mathbf{1}), d).\]
Note that $\Psi$ is $\BFF$ in $\Phi$ and $\mathbf{1}$.
Since $\Phi$ is $\Delta$-computable and $\Delta$ is a resource bound,
$\Psi$ is also $\Delta$-computable.
By the proof of Theorem 4.16 in \cite{Lutz:RBM},
$\Psi$ is a $\nu$-measurement of $X$ in $R(\Delta)$
and thus $X\in \F_{R(\Delta)}^\nu$.
\qed\end{proof}

We cannot hope to have countable additivity in this theory
\cite{Lutz:RBM}, however we have the following additivity property
over uniformly computable sequences.

\begin{definition}[Lutz \cite{Lutz:RBM}]
Let ${\cal F}$ be a subalgebra of ${\cal F}^\nu_{R(\Delta)}$ on
$R(\Delta)$.  A $\Delta$-{\it{sequence in}} ${\cal F}$ is a sequence
$(X_k \mid k \in \N)$ of sets $X_k \in {\cal F}$ for which there
exists a $\Delta$-computable functional
\[
\func{\Phi}{\N \times \N \times {\cal D}_\nu}{{\cal D}_\nu \times
{\cal D}_\nu}
\]
such that, for each $k \in \N$, $\Phi_k$ is a $\nu$-measurement of
$X_k$ in $R(\Delta)$.
\end{definition}

\begin{definition}[Lutz \cite{Lutz:RBM}]
Let ${\cal F}$ be a subalgebra of ${\cal F}^\nu_\Delta$ on
$\C$.  A $\Delta$-{\it{sequence in}} ${\cal F}$ is a sequence
$(X_k \mid k \in \N)$ of sets $X_k \in {\cal F}$ for which there
exists a $\Delta$-computable functional
\[
\func{\Phi}{\N \times \N \times {\cal D}_\nu}{{\cal D}_\nu \times
{\cal D}_\nu}
\]
such that, for each $k \in \N$, $\Phi_k$ is a
$\Delta$-$\nu$-measurement of $X_k$.
\end{definition}

\begin{definition}[Lutz \cite{Lutz:RBM}]
A functional
\[
\func{\Phi}{\N \times \N \times {\cal D}_\nu}{{\cal D}_\nu \times
{\cal D}_\nu}
\]
is $\Delta$-{\it{modulated}} if the sequences $(\Phi^a_{k, r}(d)(w)
\mid k \in \N)$, for $a \in \{+, -\}$, $r \in \N$, $d \in {\cal
D}_\nu$, and $w \in \strings$, are uniformly $\Delta$-convergent.
Equivalently, $\Phi$ is $\Delta$-modulated if there is a
$\Delta$-computable functional
\[
\func{\Gamma}{\N \times \N \times {\cal D}_\nu \times \strings}{\N}
\]
such that, for all $a \in \{+, -\}$, $t, r \in \N$, $d \in {\cal
D}_\nu$, $w \in \strings$, and $k \geq \Gamma(t, r, d, w)$,
\[
\left| \Phi^a_{k, r}(d)(w) - \Phi^a_{\infty, r}(d)(w) \right| \leq
2^{-t}
\]
where the limit $\Phi^a_{\infty, r}(d)(w) = \lim_{k \rightarrow
\infty} \Phi^a_{k, r}(d)(w)$ is implicitly assumed to exist.
\end{definition}

\begin{definition}[Lutz \cite{Lutz:RBM}]
Let ${\cal F}$ be a subalgebra of ${\cal F}^\nu_{R(\Delta)}$ on
$R(\Delta)$.
\begin{enumerate}[1.]
\item A {\it union} $\Delta$-{\it{sequence in}} ${\cal F}$ is a
sequence $(X_k \mid k \in \N)$ of sets $X_k \in {\cal F}$ for which
there exists a $\Delta$-modulated functional $\Phi$ such that
$\Phi^+_{k,r}(d)(w)$ is nondecreasing in $k$, $\Phi^-_{k, r}(d)(w)$ is
nonincreasing in $k$, and $\Phi$ testifies that
$\left(\bigcup^{k-1}_{j=0} X_j \mid k \in \N\right)$
is a $\Delta$-sequence in ${\cal F}$.
\item An {\it intersection} $\Delta$-{\it{sequence in}} ${\cal F}$
is a sequence $(X_k \mid k \in \N)$ of sets $X_k \in {\cal F}$ for
which there exists a $\Delta$-modulated functional $\Phi$ such that
$\Phi^+_{k, r}(d)(w)$ is nonincreasing in $k$, $\Phi^-_{k,r}(d)(w)$ is
nondecreasing in $k$, and $\Phi$ testifies that $\left(
\bigcap^{k-1}_{j=0} X_j \mid k \in \N \right)$ is a $\Delta$-sequence
in ${\cal F}$.
\item ${\cal F}$ is {\it closed under} $\Delta$-{\it{unions}} if
$\bigcup^\infty_{k=0} X_k \in {\cal F}$ for every union
$\Delta$-sequence $(X_k \mid k \in \N)$ in ${\cal F}$.
\item ${\cal F}$ is {\it closed under}
$\Delta$-{\it{intersections}} if $\bigcap_{k=0}^\infty X_k \in {\cal
F}$ for every intersection $\Delta$-sequence $(X_k \mid k \in \N)$ in
${\cal F}$.
\end{enumerate}
\end{definition}

\begin{lemma}\label{lutz4.18}
Let $\Delta$ be a resource bound.
Let ${\cal F}$ be a subalgebra of ${\cal F}^\nu_{R(\Delta)}$ on
$R(\Delta)$.  If $(X_k \mid k \in \N)$ is a $\Delta$-sequence in
${\cal F}$ and $\nu(X_k \mid R(\Delta)) = 0$ for all $k \in \N$, then
$(X_k \mid k \in \N)$ is a union $\Delta$-sequence in ${\cal F}$.
\end{lemma}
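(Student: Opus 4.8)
The plan is to construct directly a $\Delta$-modulated functional $\Psi$ witnessing $(X_k\mid k\in\N)$ as a union $\Delta$-sequence. The guiding observation is that each $X_j$, and hence each finite union $U_k=\bigcup_{j=0}^{k-1}X_j\in\F$, has $\nu$-measure $0$ in $R(\Delta)$ (by Theorem~\ref{th:Rdeltaalgebra} and Corollary~\ref{co:algebra}, $\nu(U_k\mid R(\Delta))\le\sum_{j<k}\nu(X_j\mid R(\Delta))=0$). For such null sets I would imitate Case~1 of the proof of Theorem~\ref{th:5_11}: set the negative part to the identity, $\Psi^-_{k,r}(d)=d$, and take the positive part $\Psi^+_{k,r}(d)=e_{k,r}$ to be a single $\nu$-martingale, \emph{independent of $d$}, that covers all of $R(\Delta)\cap U_k$ and has initial value $e_{k,r}(\lambda)\le 2^{-r}$. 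Properties (i)--(iii) of a $\nu$-splitting operator for $(R(\Delta)\cap U_k,\,R(\Delta)-U_k)$ are then immediate: (i) holds because $e_{k,r}$ covers all of $R(\Delta)\cap U_k$, hence its intersection with $S^1[d]$; (ii) is trivial since $\Psi^-_{k,r}(d)=d$; and (iii) reads $e_{k,r}(\lambda)+d(\lambda)\le d(\lambda)+2^{-r}$, exactly the budget we will guarantee.

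The crux is building $e_{k,r}$ in $\Delta$ with both the covering property and the tiny initial value. Let $\Phi$ be the $\Delta$-computable functional witnessing that $(X_k)$ is a $\Delta$-sequence, so each $\Phi_j$ is a $\nu$-measurement of $X_j$; applying property~(i) to $\mathbf 1$ (for which $S^1[\mathbf 1]=\C$) shows that $\Phi^+_{j,s}(\mathbf 1)$ covers $R(\Delta)\cap X_j$ at every precision $s$. Since $\nu(X_j\mid R(\Delta))=\inf_s\Phi^+_{j,s}(\mathbf 1)(\lambda)=0$, Theorem~\ref{th:5_5}, applied uniformly in $j$ (legitimate because $\Phi$ is $\Delta$-computable in $j$ as well), supplies a \emph{rate}: the canonical computation $s\mapsto\hat\Phi^+_{j,s+1,s+1}(\mathbf 1)(\lambda)$ approximates $0$ to within $2^{-s}$, so a search-free arithmetic choice such as $\rho(j,r)=j+r+4$ forces $\Phi^+_{j,\rho(j,r)}(\mathbf 1)(\lambda)\le 2^{-(j+r+2)}$. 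I would then set
\[
e_{k,r}=\sum_{j=0}^{k-1}\Phi^+_{j,\rho(j,r)}(\mathbf 1),
\]
a finite sum of $\nu$-martingales, hence a $\nu$-martingale. It covers $R(\Delta)\cap U_k$ (each summand already covers the corresponding $R(\Delta)\cap X_j$ and the summands are nonnegative), and $e_{k,r}(\lambda)\le\sum_{j<k}2^{-(j+r+2)}\le 2^{-r}$, as required.

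Computability and the two monotonicity clauses are then routine. Addition of rationals is $\BFF$ (Lemma~\ref{lm:5_2}), and the $k$-fold sum is obtained by limited recursion on notation from $\Phi$, $\mathbf 1$, and the $\BFF$ index function $\rho$; since $\Phi\in\Delta$ and $\Delta$ is closed under $\BFF$, $\Psi\in\Delta$. Because $\rho(j,r)$ does not depend on $k$, we have $e_{k+1,r}=e_{k,r}+\Phi^+_{k,\rho(k,r)}(\mathbf 1)\ge e_{k,r}$, so $\Psi^+_{k,r}(d)(w)$ is nondecreasing in $k$, while $\Psi^-_{k,r}(d)(w)=d(w)$ is (trivially) nonincreasing in $k$. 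Thus $\Psi$ testifies that $(U_k\mid k\in\N)$ is a $\Delta$-sequence with the required monotonicity.

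The step I expect to be the main obstacle is verifying that $\Psi$ is $\Delta$-modulated, since this demands control of the martingale values $\Phi^+_{j,\rho(j,r)}(\mathbf 1)(w)$ at \emph{every} string $w$, not merely at $\lambda$. Here I would invoke weak positivity of the $\Delta$-probability measure $\nu$: from the martingale identity $\sum_{|w|=n}d(w)\nu(w)=d(\lambda)$ together with $\nu(w)\ge 2^{-l(|w|)}$ one obtains $d(w)\le d(\lambda)\,2^{l(|w|)}$. Applying this to each summand yields the tail bound $\sum_{j\ge k}\Phi^+_{j,\rho(j,r)}(\mathbf 1)(w)\le 2^{\,l(|w|)-k-r-1}$, so the limit $\Psi^+_{\infty,r}(d)(w)$ exists and the modulus $\Gamma(t,r,d,w)=l(|w|)+t$ makes the convergence uniform for both signs (the negative part being constant in $k$). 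Since $l\in\Delta$, $\Gamma$ is $\Delta$-computable, completing the verification that $(X_k\mid k\in\N)$ is a union $\Delta$-sequence in $\F$.
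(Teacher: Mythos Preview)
Your proposal is correct and follows essentially the same construction as the paper: both set $\Psi^-_{k,r}(d)=d$ and $\Psi^+_{k,r}(d)=\sum_{j}\Phi^+_{j,\,j+r+c}(\mathbf 1)$ for a small constant $c$ (the paper takes $c=1$ and sums over $j\le k$, you take $c=4$ and sum over $j<k$), with the rate bound on $\Phi^+_{j,s}(\mathbf 1)(\lambda)$ coming from the same underlying fact in \cite{Lutz:RBM}. One minor caveat: your $\Delta$-modulation argument invokes weak positivity of $\nu$ (through the function $l$), an assumption the lemma does not state; the paper sidesteps this by simply deferring that verification to \cite{Lutz:RBM}.
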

\begin{proof}
Assume the hypothesis, and let $\Phi$ be a witness that
$(X_k \mid k \in  \N)$ is a $\Delta$-sequence in ${\cal F}$.
Let
\[\func{\Psi}{\N \times \N \times {\cal D}_\nu}{{\cal D}_\nu \times{\cal D}_\nu}\]
be such that
\[\Psi_{k,r}(d) = \left( \sum_{j=0}^k \Phi^+_{j, j+r+1}(\mathbf{1}), d\right)\]
for all $k, r\in\N$ and $d\in\D_\nu$.
Note that the bounded sum $\sum_{j=0}^k$ is $\BFF$.
Then $\Psi$ is $\Delta$-computable since $\Delta$ is a resource bound.
Together with the proof of Lemma 4.18 in \cite{Lutz:RBM},
we have that $\Psi$ testifies that $(X_k\mid k\in\N)$ is a union $\Delta$-sequence.
\qed\end{proof}
\begin{lemma}\label{lutz4.19}
Let $\Delta$ be a resource bound.
Let ${\cal F}$ be a subalgebra of ${\cal F}^\nu_{R(\Delta)}$ on
$R(\Delta)$.  Then a sequence $(X_k \mid k \in \N)$ is a union
$\Delta$-sequence in ${\cal F}$ if and only if the complemented
sequence $(X^{\rm c}_k \mid k \in \N)$ is an intersection
$\Delta$-sequence in ${\cal F}$.  Thus ${\cal F}$ is closed under
$\Delta$-unions if and only if ${\cal F}$ is closed under
$\Delta$-intersections.
\end{lemma}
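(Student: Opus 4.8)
The plan is to exploit the De Morgan duality $\bigcup_{j=0}^{k-1} X_j = \left(\bigcap_{j=0}^{k-1} X_j^c\right)^c$ together with the observation, already used in the proof of Theorem \ref{th:Rdeltaalgebra}, that exchanging the $+$ and $-$ components of a $\nu$-measurement of a set $Y$ in $R(\Delta)$ yields a $\nu$-measurement of $Y^c$ in $R(\Delta)$. Recall that a $\nu$-measurement of $Y$ is a splitting operator whose positive component covers $R(\Delta)\cap Y$ and whose negative component covers $R(\Delta)-Y$; the defining conditions (i) and (ii) of a splitting operator simply interchange under the swap, while (iii) is symmetric in $+$ and $-$. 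Swapping the two roles therefore produces a splitting operator whose positive component covers $R(\Delta)-Y = R(\Delta)\cap Y^c$ and whose negative component covers $R(\Delta)\cap Y$, i.e., a $\nu$-measurement of $Y^c$.

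First I would establish the forward implication. Suppose $(X_k\mid k\in\N)$ is a union $\Delta$-sequence, witnessed by a $\Delta$-modulated functional $\Phi$ with $\Phi^+_{k,r}(d)(w)$ nondecreasing and $\Phi^-_{k,r}(d)(w)$ nonincreasing in $k$, such that $\Phi_k$ is a $\nu$-measurement of $\bigcup_{j=0}^{k-1}X_j$ in $R(\Delta)$. Define $\Psi$ by swapping components,
\[
\Psi_{k,r}(d) = \left(\Phi^-_{k,r}(d),\ \Phi^+_{k,r}(d)\right).
\]
Since the swap is a $\BFF$ operation (a reordering of projections) and $\Delta$ is closed under $\BFF$, $\Psi$ is $\Delta$-computable; the same modulus $\Gamma$ witnessing $\Delta$-modulation of $\Phi$ witnesses it for $\Psi$, so $\Psi$ is $\Delta$-modulated. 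Now $\Psi^+ = \Phi^-$ is nonincreasing in $k$ and $\Psi^- = \Phi^+$ is nondecreasing in $k$, matching the monotonicity requirements for an intersection $\Delta$-sequence. Finally, by the swap remark above, $\Psi_k$ is a $\nu$-measurement of $\left(\bigcup_{j=0}^{k-1}X_j\right)^c = \bigcap_{j=0}^{k-1}X_j^c$ in $R(\Delta)$, so $\Psi$ testifies that the partial intersections of $(X_k^c\mid k\in\N)$ form a $\Delta$-sequence. Hence $(X_k^c\mid k\in\N)$ is an intersection $\Delta$-sequence.

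Because the swap is an involution, the converse is entirely symmetric: applying the same construction to a witness that $(X_k^c\mid k\in\N)$ is an intersection $\Delta$-sequence returns a witness that $(X_k\mid k\in\N)$ is a union $\Delta$-sequence, which gives the claimed equivalence. For the concluding biconditional, suppose $\cal F$ is closed under $\Delta$-unions and let $(X_k\mid k\in\N)$ be an intersection $\Delta$-sequence in $\cal F$. By the equivalence, $(X_k^c\mid k\in\N)$ is a union $\Delta$-sequence; since $\cal F$ is an algebra it contains each $X_k^c$, so closure under $\Delta$-unions gives $\bigcup_k X_k^c\in\cal F$, whence $\bigcap_k X_k = \left(\bigcup_k X_k^c\right)^c\in\cal F$. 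The reverse direction is identical with the roles of unions and intersections interchanged.

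There is no genuine analytic obstacle here—the content is a bookkeeping duality—so the step I would check most carefully is the alignment under the swap: one must confirm that the positive/negative covering roles in the definition of a $\nu$-measurement interchange in exactly the way De Morgan demands, and that the monotonicity directions flip consistently. The main risk is an orientation error in the $+/-$ exchange rather than any hard estimate.
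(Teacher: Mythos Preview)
Your proposal is correct and is essentially the same argument as the paper's: the paper's proof simply observes that swapping the $+$ and $-$ components of a $\Delta$-$\nu$-splitting operator is a $\BFF$ operation (hence preserves $\Delta$-computability) and then defers the remaining De Morgan and monotonicity bookkeeping to Lemma 4.19 of \cite{Lutz:RBM}, which is exactly what you have spelled out in detail.
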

\begin{proof}
Note that for any $\nu$-splitting operator $\Phi$,
$\Phi=(\Phi^+, \Phi^-)$ is a $\Delta$-$\nu$-splitting operator
if and only if $(\Phi^-, \Phi^+)$ is a $\Delta$-$\nu$-splitting operator,
since $\Delta$ is closed under $\BFF$.
The rest of the proof is identical to that of Lemma 4.19 in \cite{Lutz:RBM}.
\qed\end{proof}

\begin{theorem}[Lutz \cite{Lutz:RBM}]\label{lutz4.20}
Let $\Delta$ be a resource bound.
\begin{enumerate}[1.]
\item ${\cal F}^\nu_{R(\Delta)}$ is closed under $\Delta$-unions
and $\Delta$-intersections.
\item If $(X_k \mid k \in \N)$ is a union $\Delta$-sequence in
${\cal F}^\nu_{R(\Delta)}$, then
\[
\nu(\cup_{k=0}^\infty X_k \mid R(\Delta)) \leq \sum_{k=0}^\infty
\nu(X_k \mid R(\Delta)),
\]
with equality if the sets $X_0, X_1, \ldots$ are disjoint.
\item If $(X_k \mid k \in \N)$ is a union $\Delta$-sequence in
${\cal F}^\nu_{R(\Delta)}$ with each $X_k \subseteq X_{k+1}$, then
\[
\nu( \cup_{k=0}^\infty X_k \mid R(\Delta)) = \lim_{k \rightarrow
\infty} \nu(X_k \mid R(\Delta)).
\]
\item If $(X_k \mid k \in \N)$ is an intersection
$\Delta$-sequence in ${\cal F}^\nu_{R(\Delta)}$ with each $X_{k+1}
\subseteq X_k$, then
\[
\nu\left(\cap_{k=0}^\infty X_k \mid R(\Delta)\right) = \lim_{k
\rightarrow \infty} \nu \left(X_k \mid R(\Delta) \right).
\]
\end{enumerate}
\end{theorem}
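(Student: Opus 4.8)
The plan is to follow the architecture of the proof of Theorem~4.20 in \cite{Lutz:RBM}, isolating the single place where a functional is asserted to be $\Delta$-computable and discharging that assertion explicitly through the $\BFF$ closure of $\Delta$; the purely measure-theoretic steps are then inherited verbatim from \cite{Lutz:RBM}.

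First I would establish part~1, closure under $\Delta$-unions. Let $(X_k \mid k\in\N)$ be a union $\Delta$-sequence in $\F^\nu_{R(\Delta)}$, witnessed by a $\Delta$-modulated functional $\Phi$ whose modulus $\Gamma:\N\times\N\times\D_\nu\times\strings\rightarrow\N$ is $\Delta$-computable. The measurement of the union is the limit operator $\Phi_\infty$, and the key observation is that its canonical computation is obtained simply by feeding the modulus into the index argument of $\Phi$: to approximate $\Phi^{a}_{\infty,r}(d)(w)$ within $2^{-t}$ it suffices to evaluate $\Phi$ at index $k=\Gamma(t,r,d,w)$. Concretely, the resulting operator $\Psi$ has canonical computation
\[
\widehat{\Psi}^{a}(t,r,d,w) \;=\; \widehat{\Phi}^{a}\bigl(t,\Gamma(t,r,d,w),r,d,w\bigr),
\qquad a\in\{+,-\},
\]
where on the right the output of $\Gamma$ occupies the $k$-slot of $\Phi$. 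Thus $\Psi$ is obtained from $\Phi$ and $\Gamma$ by functional composition (together with the projection and expansion operators), all of which are $\BFF$ operations; since $\Phi,\Gamma\in\Delta$ and $\Delta$ is closed under $\BFF$, the operator $\Psi$ is $\Delta$-computable. That $\Psi$ is in fact a $\nu$-measurement of the union in $R(\Delta)$---that the limit exists, that properties (i)--(iii) of a $\nu$-splitting operator survive passage to the limit, and that the unitary success sets cover the union intersected with $R(\Delta)$---is exactly the content of \cite{Lutz:RBM} and is imported without change.

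Next, closure under $\Delta$-intersections is immediate from Lemma~\ref{lutz4.19}, which supplies the duality between union $\Delta$-sequences and intersection $\Delta$-sequences and thereby transfers closure under $\Delta$-unions to closure under $\Delta$-intersections. Parts~2--4 then require no further computability work: the measurement witnessing each relevant union or intersection is the same limit operator $\Psi$ built above, so its $\Delta$-computability is already in hand, and the numerical assertions (the subadditivity inequality and disjoint additivity of part~2, and the monotone-continuity identities of parts~3 and~4) are purely measure-theoretic and are taken directly from the proof of Theorem~4.20 in \cite{Lutz:RBM}.

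The main obstacle is the lone verification in part~1 that the limit operator $\Psi$ lies in $\Delta$. In the original list-of-examples treatment this step silently required checking, for each resource bound on the list, that substituting a computed modulus into the index slot of $\Phi$ stayed inside the bound. The point to get right here is that this substitution is precisely functional composition in Mehlhorn's sense, so it is subsumed by a single closure axiom; all that then remains is to confirm that the elementary precision bookkeeping relating $t$, $r$, and the output of $\Gamma$ is $\BFF$, which is routine.
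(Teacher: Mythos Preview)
Your plan has a genuine gap in part~1. You take the measurement of $\bigcup_k X_k$ to be the \emph{symmetric} limit operator $\Psi=(\Phi^+_\infty,\Phi^-_\infty)$ and then assert that its being a $\nu$-measurement ``is exactly the content of \cite{Lutz:RBM} and is imported without change.'' It is not. The construction in \cite{Lutz:RBM} (and reproduced in the present paper) is deliberately \emph{asymmetric}:
\[
\Theta(r,d)=\bigl(\Phi^+_{\infty,r+1}(d),\ \Phi^-_{m,r+1}(d)\bigr),\qquad m=\Gamma(r+1,r+1,d,\lambda).
\]
The reason is the covering condition~(ii). On the $+$ side the limit works: if $A\in X\cap R(\Delta)\cap S^1[d]$, then some fixed prefix $w$ of $A$ has $\Phi^+_{k,r}(d)(w)\geq 1$ for one $k$, and monotonicity (nondecreasing in $k$) pushes this to the limit. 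On the $-$ side the limit can fail: for $A\in(R(\Delta)-X)\cap S^1[d]$ you only know that for \emph{each} $k$ there is \emph{some} prefix $w_k$ with $\Phi^-_{k,r}(d)(w_k)\geq 1$; since $\Phi^-_{k,r}(d)(w)$ is \emph{nonincreasing} in $k$, there is no reason a single prefix survives to give $\Phi^-_{\infty,r}(d)(w)\geq 1$. So $S^1[\Phi^-_{\infty,r}(d)]$ need not contain $(R(\Delta)-X)\cap S^1[d]$, and your $\Psi$ need not be a splitting operator for $(X,X^c)$.

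The paper fixes this by freezing the $-$ side at a finite stage $m$ chosen via the modulus $\Gamma$ evaluated at $\lambda$: then $\Theta^-_r(d)=\Phi^-_{m,r+1}(d)$ genuinely covers $(R(\Delta)-X)\cap S^1[d]$ (it already covers the larger set $(R(\Delta)-\bigcup_{j<m}X_j)\cap S^1[d]$), while the choice of $m$ guarantees $\Phi^-_{m,r+1}(d)(\lambda)\le\Phi^-_{\infty,r+1}(d)(\lambda)+2^{-(r+1)}$, which is exactly what is needed for condition~(iii). Your $\BFF$-closure argument is the right idea and applies equally well to this asymmetric $\Theta$ (it is still functional composition of $\Phi$ and $\Gamma$ with projections), so once you correct the definition of the operator the rest of your outline goes through. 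Parts~2--4 then do follow from \cite{Lutz:RBM} as you say; the paper also adds a short explicit remark in part~3 that the same witness $\Phi$ serves for the disjointified sequence $Y_k=X_k\setminus X_{k-1}$, which you may want to include.
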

\begin{proof}
Let $(X_k \mid k \in \N)$ be a union $\Delta$-sequence in ${\cal
F}^\nu_{R(\Delta)}$, with the functional $\Phi$ as witness, and let $X= \bigcup_{k=0}^\infty X_k$.  Fix a functional $\Gamma$ testifying
that $\Phi$ is $\Delta$-modulated, and define a functional
\[
\func{\Theta}{\N \times {\cal D}_\nu}{{\cal D}_\nu \times {\cal D}_\nu}
\]
such that for all $r\in\N$ and $d\in\D_\nu$,
\[
\Theta(r,d) = (\Phi^+_{\infty, r+1}, \Phi^-_{m, r+1}(d)),
\]
where $m = \Gamma( r+1, r+1, d, \lambda)$.
Note that $\Theta$ is $\BFF$ in $\Phi$ and $\Gamma$.
By our choice of $\Phi$ and $\Gamma$, $\Theta$ is in $\BFF(\Delta)$
and thus $\Delta$-computable. Then $\Theta$ is a $\nu$-measurement of $X$ in $R(\Delta)$
by the proof of Theorem 4.20 in \cite{Lutz:RBM}.
Therefore $X \in \F^\nu_{R(\Delta)}$.  This, together
with Lemma \ref{lutz4.19}, proves part 1 of the theorem.

Part 2 of this theorem then follows from the proof of the part 2
of Theorem 4.20 in \cite{Lutz:RBM}.

To prove part 3 of the theorem, let $(X_k \mid k \in \N)$ be a union
$\Delta$-sequence in ${\cal F}^\nu_{R(\Delta)}$ with functional $\Phi$ as a witness
such that $X_k\subseteq X_{k+1}$  for each $k\in\N$.
Therefore, $\Phi$ also testifies that
$\left(\bigcup^{k-1}_{j=0} X_j \mid k \in \N\right)$
is a $\Delta$-sequence in $\F$.

Let $Y_0 = X_0$ and, for each $k \in \N$, let $Y_{k+1} = X_{k+1} - X_k$.
Note that since $X_k\subseteq X_{k+1}$,
\[\bigcup_{j=0}^k X_j = X_k =\bigcup_{j=0}^k Y_j.\]
Therefore $\Phi$ also testifies that
$\left(\bigcup^{k-1}_{j=0} Y_j \mid k \in \N\right)$
is a $\Delta$-sequence in $\F$ and hence testifies that $(Y_k \mid k \in \N)$
is a union $\Delta$-sequence in $\F^\nu_{R(\Delta)}$.
The rest of the proof of part 3 is identical to that of
Theorem 4.20 in \cite{Lutz:RBM}, which can be argued using
part 2 of this theorem, Theorem \ref{th:Rdeltaalgebra}, and
taking a limit over $k$ of the measures $\nu(X_k\mid R(\Delta))$.

The proof of part 4 is identical to that in the proof of Theorem 4.20
in \cite{Lutz:RBM}.  
\qed\end{proof}

% LocalWords:  subalgebra

%
%\noindent{\bf Acknowledgment.}
%
\bibliographystyle{abbrv}
%\footnotesize
\bibliography{dim,rbm,main,random,dimrelated}

\newpage

%% \appendix
%% \include{sectiona3}
%% \include{sectiona4}
%% \include{sectiona5}
%% \include{sectiona6}
%\normalsize
%\include{appendix3}

\end{document}